\newtheorem{thm}{Theorem}[section]
\newtheorem{prop}[thm]{Proposition}
\newtheorem{lemma}[thm]{Lemma}
\newtheorem{cor}[thm]{Corollary}
\newtheorem*{conjecture*}{Conjecture}
\newtheorem{question}[thm]{Question}
\newtheorem*{question*}{Question}
\newtheorem{definition}[thm]{Definition}
\theoremstyle{definition}
\newtheorem{remark}[thm]{Remark}
\newtheorem{ex}[thm]{Example}
\newtheorem{notation}[thm]{Notation}
\numberwithin{equation}{section}
\numberwithin{figure}{section}
\numberwithin{subsection}{section}
\newcommand{\N}{\mathbb{N}}
\newcommand{\R}{\mathbb{R}}
\newcommand{\C}{\mathbb{C}}
\newcommand{\Z}{\mathbb{Z}}
\renewcommand{\H}{\mathcal{H}}
\newcommand\restr[2]{{
  \left.\kern-\nulldelimiterspace 
  #1
  \right|_{#2} 
  }}
\newcommand\supind[1]{{\smash[t]{(#1)}}}
\let\originalleft\left
\let\originalright\right
\renewcommand{\left}{\mathopen{}\mathclose\bgroup\originalleft}
\renewcommand{\right}{\aftergroup\egroup\originalright}
\DeclareMathOperator{\sgn}{sgn}
\DeclareMathOperator{\Sym}{Sym}
\DeclareMathOperator{\SL}{SL}
\DeclareMathOperator{\GL}{GL}
\DeclareMathOperator{\id}{id}
\DeclareMathOperator{\diag}{diag}
\newcommand{\defstyle}{\textbf}  %
\newcommand{\sphere}[1][2]{\mathrm{S}^{#1}} %
\newcommand{\surf}[1][]{\mathcal{S}_{#1}}   %
\newcommand{\VV}{V} 											%
\newcommand{\VQ}{\ensuremath{\mathrm{q}}} %
\newcommand{\Vs}[1]{s_{#1}}   						%
\newcommand{\Vr}[1]{r_{#1}}   						%
\newcommand{\Vt}[1]{t_{#1}}   						%
\newcommand{\Vf}[1]{\eta_{#1}}						%
\newcommand{\Vg}[1]{\theta_{#1}}          %
\newcommand{\Vc}{c}                       %
\newcommand{\Vtr}{t}                      %
\newcommand{\Vp}[1][]{\ensuremath{p_{#1}}}								%
\newcommand{\Vpt}[1][]{\ensuremath{\widetilde{p}_{#1}}}   %
\newcommand{\setP}[1][2d]{\ensuremath{\mathcal{I}_{#1}}}  %
\newcommand{\setPT}[1][2d]{\ensuremath{\mathcal{J}_{#1}}} %
\newcommand{\vBasis}[1]{\ensuremath{\mathbb{B}_{#1}}}     %
\newcommand{\hBasis}[1]{\ensuremath{\mathcal{B}_{#1}}}    %
\newcommand{\bEl}[1][]{\ensuremath{b_{#1}}}               %
\newcommand{\hbEl}[1][]{\ensuremath{b_{#1}}}              %
\newcommand{\xx}{\ensuremath{x}}
\newcommand{\yy}{\ensuremath{y}}
\newcommand{\zz}{\ensuremath{z}}
\newcommand{\xyz}{\ensuremath{\xx,\yy,\zz}}
\newcommand{\G}[1][3]{\ensuremath{\mathrm{O}_{#1}}}
\newcommand{\GO}[1][3]{\ensuremath{\mathrm{O}_{#1}}}     %
\newcommand{\B}[1][3]{\ensuremath{\mathrm{B}_{#1}}}
\newcommand{\GB}[1][3]{\ensuremath{\mathrm{B}_{#1}}}    %
\newcommand{\Ggen}{\ensuremath{\mathrm{G}}}
\newcommand{\Bgen}{\ensuremath{\mathrm{B}}}
\newcommand{\GS}[1][3]{\ensuremath{\mathfrak{S}_{#1}}}  %
\newcommand{\GA}[1][3]{\ensuremath{\left\{1,-1\right\}^3}} %
\newcommand{\Cl}[1][]{\ensuremath{\lambda_{#1}}}
\newcommand{\Cr}[1][]{\ensuremath{\alpha_{#1}}}
\newcommand{\Cs}[1][]{\ensuremath{\beta_{#1}}}
\newcommand{\Ct}[1][]{\ensuremath{\gamma_{#1}}}
\newcommand{\Ilw}[1][]{\ensuremath{\tilde{\mathrm{\ell}}_{#1}}}
\newcommand{\Irw}[1][]{\ensuremath{\tilde{\mathrm{a}}_{#1}}}
\newcommand{\Isw}[1][]{\ensuremath{\tilde{\mathrm{b}}_{#1}}}
\newcommand{\Itw}[1][]{\ensuremath{\tilde{\mathrm{c}}_{#1}}}
\newcommand{\Il}[1][]{\ensuremath{{\mathrm{\ell}}_{#1}}}
\newcommand{\Ir}[1][]{\ensuremath{{\mathrm{a}}_{#1}}}
\newcommand{\Is}[1][]{\ensuremath{{\mathrm{b}}_{#1}}}
\newcommand{\It}[1][]{\ensuremath{{\mathrm{c}}_{#1}}}
\newcommand{\Eqm}[1][]{\ensuremath{{\rm W}_{#1}}}
\title{Rational invariants of even ternary forms\\ under the orthogonal group}
\author{Paul Görlach}
\address{Paul Görlach - INRIA Méditerranée, France; Max Planck Institute for Mathematics in the Sciences, Germany}
\email{goerlach@mis.mpg.de}
\author{Evelyne Hubert}
\address{Evelyne Hubert - INRIA Méditerranée, France}
\email{evelyne.hubert@inria.fr}
\author{Théo Papadopoulo}
\address{Théo Papadopoulo - INRIA Méditerranée, France}
\email{theodore.papadopoulo@inria.fr}
\begin{document}

\begin{abstract}
In this article we determine a generating set of rational invariants of minimal 
cardinality for the action of the orthogonal group $\G$ on the space 
$\R[x,y,z]_{2d}$ of ternary forms of even degree $2d$. The construction relies 
on two key ingredients: On one hand, the Slice Lemma allows us to reduce the 
problem to determining the invariants for the action on a subspace of the 
finite subgroup $\GB$ of signed permutations. 
On the other hand, our construction relies in a fundamental way on specific bases of harmonic polynomials. These bases provide maps with prescribed $\GB$-equivariance properties. Our explicit construction of these bases should be relevant well beyond the scope of this paper. 
The expression of the $\GB$-invariants can then be given in a compact form as the composition of two equivariant maps. Instead of providing (cumbersome) explicit expressions for the $\G$-invariants, we provide efficient algorithms for their evaluation and rewriting. We also use the constructed $\GB$-invariants to determine the $\G$-orbit locus and provide an algorithm for the inverse problem of finding an element in $\R[x,y,z]_{2d}$ with prescribed values for its invariants. These computational issues are relevant in brain imaging.\\[5pt]

\noindent\textsc{Keywords.} 
Computational invariant theory;  
Harmonic polynomials; 
Orthogonal group; 
Slice;
Rational invariants;
Diffusion MRI; 
Neuro-imaging. \\[3pt]

\noindent\textsc{MSC.}
12Y05 
13A50 
13P25 
14L24 
14Q99 
20B30 
20C30 
33C55 
42C05 
68U10 
68W30 
\end{abstract}

\maketitle

\setcounter{tocdepth}{1} %
{\setlength{\parskip}{0pt}\tableofcontents}


\section{Introduction}
Invariants are useful for classifying 
objects up to the action of a group of 
transformations.
In this article we determine a set of generating rational invariants of minimal cardinality for the action of the orthogonal group $\G$ on the space $\R[x,y,z]_{2d}$ of ternary forms of even degree $2d$. 
We do not give the explicit expression for these invariants, but provide an algorithmic way of evaluating them for any ternary form.

Classical Invariant Theory \cite{GY03} is centered around the action of the general linear group on homogeneous polynomials, with an emphasis on binary forms. Yet, the orthogonal group arises in applications as the relevant group of transformations, especially in three-dimensional space. 
Its relevance to brain imaging is the original motivation for the present article. 

Computational Invariant Theory \cite{DK15,GY03,Stu08} has long focused on polynomial invariants. In the case of the group $\G$, any two real orbits are separated by a generating set of polynomial invariants. The generating set can nonetheless be very large. For instance a generating set of $64$ polynomial invariants for the action of $\G$ on $\R[x,y,z]_4$ is determined as a subset of a  minimal generating set of polynomial invariants of the action of $\G$ on the \emph{elasticity tensor} in \cite{AKO17}. There, the problem is mapped to the joint action of $\SL_2(\C)$ on binary forms of different degrees and resolved by Gordan's algorithm \cite{GY03,Oli16} so that the invariants are given as transvectants.

A generating set of rational invariants separates general orbits \cite{PV94,Rosenlicht56} -- this 
remains true for any group, even for non-reductive groups. Rational invariants 
can thus prove to be sufficient, and sometimes more relevant, in applications 
\cite{Hubert12Labahn,Hubert13Labahn,Hubert16Labahn} and in connection with 
other mathematical disciplines \cite{Hubert07s,Hubert12focm}. 
A practical and very general algorithm to compute a generating set of these first appeared in \cite{HK08}; see also \cite{DK15}.
The case of the action of $\G$ on $\R[x,y,z]_4$, a $15$-dimensional space, is nonetheless not easily tractable by this algorithm. 
In the case of $\R[x,y,z]_4$, the $12$ generating invariants we construct in 
this article are seen as being uniquely determined by their restrictions to a 
\emph{slice} $\Lambda_4$, which is here a $12$-dimensional subspace. The 
knowledge of these restrictions is proved to be sufficient to evaluate the 
invariants at any point in the space $\R[x,y,z]_4$.
The underlying \emph{slice method} is a technique used to show rationality of 
invariant fields \cite{CTS07}. We demonstrate here its power for the 
computational aspects of Invariant Theory.

More generally, by virtue of the so-called \emph{Slice Lemma}, the field of 
rational invariants of the action of $\G$ on $\R[x,y,z]_{2d}$ is isomorphic to 
the field of rational invariants for the action of the finite subgroup $\GB$ of 
signed permutations on a subspace  $\Lambda_{2d}$ of $\R[x,y,z]_{2d}$. Working 
with this isomorphism, we provide efficient algorithms for the evaluation of a 
(minimal) generating set of rational $\G$-invariants based on a (minimal) 
generating set of rational $\GB$-invariants. Finding an element of 
$\R[x,y,z]_{2d}$ with prescribed values of $\G$-invariants and rewriting any 
$\G$-invariant in terms of the generating set are also made possible through 
the specific generating sets of rational $\GB$-invariants we construct.

$\R[x,y,z]_{2d}$ decomposes into a direct sum of $\G$-invariant vector spaces determined by the harmonic polynomials 
of degree $2d, 2d-2,\ldots, 2, 0$.
The construction of the $\GB$-invariants relies in a fundamental way on specific bases of harmonic polynomials. These bases provide maps with prescribed $\GB$-equivariance properties. Our explicit construction of these bases should be relevant well beyond the scope of this paper. 
The explicit expression of the $\GB$-invariants can then be given in a compact form as the composition of two equivariant maps. Both the rewriting and the inverse problem can be made explicit for this well-structured set of invariants.

The whole construction is first made explicit for $\R[x,y,z]_4$ and then extended to $\R[x,y,z]_{2d}$. It should then be clear how one can obtain the rational invariants of the joint action of $\G$ on similar spaces, as for instance the elasticity tensor.

In the rest of this section we first introduce the geometrical motivation and then the context of application for our constructions. 
Notations, preliminary material and a formal statement of the problems appear then in Section~\ref{sec:Preliminaries}. In Section~\ref{sec:SliceMethod}, we describe the technique for reducing the construction of invariants for the action of $\G$ on $\R[x,y,z]_{2d}$ to the one for the action of a finite subgroup, the signed symmetric group, on a subspace. Based on this, we construct a generating set of rational invariants for the case of ternary quartic forms in Section~\ref{sec:Quartics}. We extend this approach in Section~\ref{sec:B3Harmonics} to ternary forms of arbitrary even degree; central there is the construction of bases of vector spaces of harmonic polynomials with specific equivariant properties with respect to the signed symmetric group. Finally, we solve the principal algorithmic problems associated with rational invariants in Section~\ref{sec:AlgorithmicSolutions}.

\paragraph{\textbf{Acknowledgments}} Paul Görlach was partly funded by INRIA 
Mediterranée \emph{Action Transverse}. 
Evelyne Hubert wishes to thank  Rachid Deriche, Frank Grosshans, Boris Kolev for discussions and valuable pointers.
Théo Papadopoulo receives funding 
from 
the ERC Advanced Grant No 694665 : CoBCoM - Computational Brain 
Connectivity Mapping.
 
  \subsection{Motivation: spherical functions up to rotation}  \label{ssec:motivation}
 
In geometric terms, the results in this paper allow one to determine when two 
general centrally symmetric closed surfaces in $\R^3$ differ only by a rotation.
\begin{figure}
  \centering
  \includegraphics[width=0.3\textwidth]{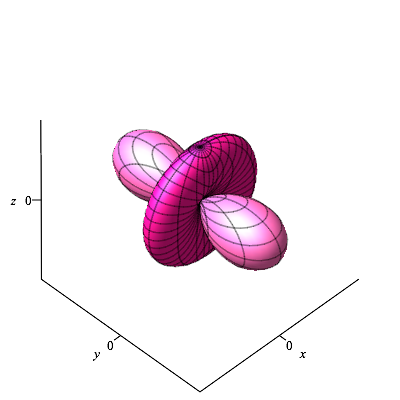}
  \includegraphics[width=0.3\textwidth]{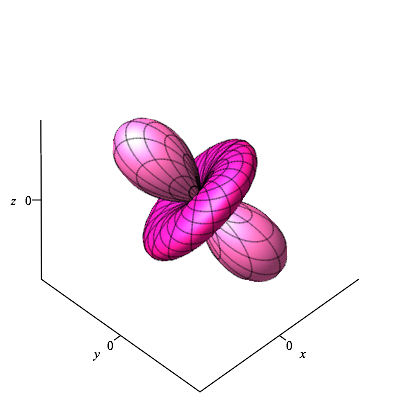}
  \caption{Two centrally symmetric surfaces only differing by a rotation.}
  \label{fig:rotateSurfaces}
\end{figure}

The surfaces we consider are given by a continuous function $f \colon \sphere 
\to \R$ defined on the unit sphere $\sphere := \{(x,y,z) \in \R^3 \mid 
x^2+y^2+z^2=1\}$. The surface is then defined as the set of points
  \[\surf := \{\left(x\,f(x,y,z), y\,f(x,y,z),z\,f(x,y,z)\right) \in \R^3 \mid (x,y,z) \in \sphere\} \subset \R^3,\] 
i.e.\ for each point on the unit sphere we rescale its distance to the origin according to the function $f$. For example, the surface described by a constant function $f \equiv r$ (for some $r \in \R$) is
the sphere with radius $|r|$ centered at the origin. With more general 
functions 
$f$, a large variety of different surfaces can be described. For the surface to 
be symmetric with respect to the origin, one needs the function $f$ to satisfy 
the property
\begin{equation} \label{eq:SurfaceSymmetry}
  f(p) = f(-p) \quad \forall p \in \sphere.
\end{equation}

Since the unit sphere $\sphere \subset \R^3$  is a compact set, the Stone--Weierstraß Theorem implies that a given continuous function $\sphere \to \R$ can be approximated arbitrarily well by polynomial functions, i.e.\ by functions $f \colon S^2 \to \R$ of the form 
\begin{equation}\label{eq:polyd}
  f(x,y,z) = \sum_{i+j+k \leq d} a_{ijk} \: x^i y^j z^k.
\end{equation}
The symmetry property (\ref{eq:SurfaceSymmetry}) is then fulfilled if and only $a_{ijk} = 0$ whenever $i+j+k$ is odd, in other words, $f$ must only consist of even-degree terms. We can then rewrite $f$ in such a way that all its monomials are of the same degree, by multiplying monomials of small degree with a suitable power of $x^2+y^2+z^2$ (which does not change the values of $f$ since $x^2+y^2+z^2=1$ for all points $(x,y,z)$ on the sphere).

We therefore model closed surfaces that are centrally symmetric with respect to the origin by 
$f \colon \sphere \to \R$, a polynomial function of degree $2d$:
  \[f(x,y,z) = \sum_{i+j+k=2d} a_{ijk} \: x^i y^j z^k.\] 
The modeled surface can then be encoded \emph{exactly} by simply storing the $\frac{1}{2}(2d+2)(2d+1)$ numbers $a_{ijk} \in \R$.
However, from such an encoding of a surface via the coefficients $a_{ijk}$ of its defining polynomial, it is not immediately apparent when the defined surfaces have the same geometric shape, only differing by a rotation. As an example, Figure~\ref{fig:rotateSurfaces} depicts the two surfaces $\surf[1]$ and $\surf[2]$ described by %
\begin{align*}
  f_1(x,y,z) &:= 18x^2-27y^2+18z^2 \text{ and} \\
  f_2(x,y,z) &:= 13x^2+20xy-20xz-2y^2+40yz-2z^2, 
\end{align*}
respectively, whose numerical description in terms of coefficients look very distinct, but whose shapes look very similar. Indeed $\surf[2]$ arises from $\surf[1]$ by applying the rotation determined by the matrix
\[\frac{1}{3} \begin{pmatrix} 2 & -1 & -2 \\ 2 & 2 & 1 \\ 1 & -2 & 2 \end{pmatrix},\]
so that 
  \[f_2(x,y,z) = f_1\left(\frac{2x+2y+z}{3}, \frac{-x+2y-2z}{3}, \frac{-2x+y-2z}{3}\right).\]

In general, we want to consider surfaces to be of the same shape if they differ by a rotation or, equivalently, by an orthogonal transformation: Since we only consider surfaces symmetric with respect to the origin, two surfaces differing by an orthogonal transformation also differ by a rotation.

The question which arises is: \emph{How can we (algorithmically) decide whether two surfaces only differ by an orthogonal transformation, only by examining the coefficients of their defining polynomial?}

Regarding this question, we should be aware that 
if $f \colon \sphere \to \R$ describes $\surf \subset \R^3$, then its negative $-f$ describes the same set $\surf \subset \R^3$. With the intuition of describing the surface by deforming the unit sphere, this ambiguity corresponds to turning the closed surface $\surf$ inside out. We typically want to think of the surfaces defined by $f$ and by $-f$ as two distinct objects (even though they are equal as subsets of $\R^2$). For example, the surface defined by the constant function $f \equiv 1$ is the unit sphere, while the surface defined by $f \equiv -1$ should be considered as the unit sphere turned inside out.

Then the question above corresponds to the following algebraic question:

\begin{question} \label{qu:decideSimilarity}
  Given two polynomials $f = \sum_{i+j+k=2d} a_{ijk} x^i y^j z^k$ and $g = \sum_{i+j+k=2d} b_{ijk} x^i y^j z^k$, how can we decide in terms of their coefficients $a_{ijk}$ and $b_{ijk}$ whether or not there exists an orthogonal transformation $\varphi \colon \R^3 \to \R^3$ such that $f = g \circ \varphi$ as functions $\R^3 \to \R$?
\end{question}

Once we can decide whether two polynomials define equally shaped surfaces, a natural question is how to uniquely encode the shape of a surface, i.e.\ an equivalence class of surfaces up to orthogonal transformations. The corresponding question for polynomials is:

\begin{question} \label{qu:encodeEqClass} 
  How can we encode in a unique way equivalence classes of polynomials up to orthogonal transformations?
\end{question}

For the purpose of illustration we discuss next the case of homogeneous polynomials of degree two, i.e.\ \emph{quadratic forms}. In the following section, we shall describe the mathematical setup for addressing Questions~\ref{qu:decideSimilarity} and~\ref{qu:encodeEqClass} with the rational invariants of the action of the orthogonal group on ternary forms of even degree.

\subsubsection*{Illustration for the case of quadratic surfaces}

Quadratic forms are in one-to-one correspondence with symmetric $3 \times 3$-matrices, as indeed we can write \eqref{eq:polyd} as:
\begin{equation} \label{eq:GramianMatrix}
  f(\xyz) =
\begin{pmatrix} \xx & \yy & \zz \end{pmatrix} 
\underbrace{\begin{pmatrix} a_{200} & \frac{1}{2}a_{110} & \frac{1}{2}a_{101} \\[2pt]
  \frac{1}{2}a_{110} & a_{020} & \frac{1}{2} a_{011} \\[2pt]
   \frac{1}{2}a_{101} & \frac{1}{2}a_{011} & a_{002} \end{pmatrix}}_{=: A} 
\begin{pmatrix} \xx \\ \yy \\ \zz \end{pmatrix}
\end{equation}

When we compose $f$ with a rotation, given by a matrix $R$, we obtain another homogeneous polynomial of degree $2$. The defining symmetric matrix (as in \ref{eq:GramianMatrix}) of the obtained polynomial is then $RAR^T$. For two symmetric matrices $A$ and $B$, Linear Algebra then tells us that there is an orthogonal matrix $R$ such that {$B=RAR^T$} if and only if $A$ and $B$ have the same eigenvalues. Yet, eigenvalues are algebraic functions of the entries of a matrix and thus cannot be easily expressed symbolically. It is thus easier to compare the coefficients of the characteristic polynomials. Up to scalar factors, these are:
\begin{equation} \label{eq:quadrInvariants}
\begin{array}{c}
e_1:=a_{200}+a_{020}+a_{002}, \\
e_2:=4a_{020}a_{200}+4a_{002}a_{020}+4a_{002}a_{200}-a_{110}^2-a_{101}^2-a_{011}^2, \\ 
e_3:=4a_{002}a_{020}a_{200}+a_{101}a_{110}a_{011}-a_{101}^2a_{020}-a_{011}^2a_{200}-a_{002}a_{110}^2.
\end{array}
\end{equation}

The result is that two homogeneous polynomial of degree 2 are obtained from one another by an orthogonal transformation if and only if the functions $e_1, e_2, e_3$ of their coefficients take the same values. This answers Question~\ref{qu:decideSimilarity} in this case. The functions $e_1, e_2, e_3$ also provide a solution to Question~\ref{qu:encodeEqClass}. To each polynomial of degree $2$ we can associate a point in $\R^3$ whose components are the values of the functions $e_1, e_2, e_3$ for this polynomial. All equivalent polynomials will be mapped to the same point. One can check for example that this point is $(9,-2592,-34992)$ for both $f_1$ and $f_2$ defined above.

\subsection{Biomarkers in neuroimaging} \label{ssec:neuro}

The cerebral white matter is the complex wiring that enables communication between the various regions of grey matter of the brain. Its integrity plays a pivotal role in the proper functionning of the brain.  Diffusion MRI (dMRI) has the ability to measure the  diffusion of water molecules. 
The architecture of white matter, or other fibrous structures can be inferred from such measurements in-vivo and non-invasively. 
This paragraph introduces our original motivation for the problem solved in this article and thus follows 
only a single sub-thread of this research. The interested reader is referred to~\cite{JohansenBerg14,Jones11} for a more globalview of this field and  the progress  accomplished on the many challenges 
 (acquisition protocols, mathematical models, reconstruction algorithms, $\ldots$) since its inception \cite{le-bihan-breton-etal:86}. 

In the popular technique known as  diffusion tensor imaging (DTI) 
the aquired signal is modeled as $S (b, v) = S_0 \exp(-b\, v^t A v)$, 
where $b$ includes the  control parameters of the aquisition protocol while 
$v$ is an orientation vector. 
In this model, the \emph{anisotropic apparent diffusion coefficient} 
is the quadratic function 
$v \mapsto v^t A v$ given by a positive symmetric matrix $A$
 which admittedly reflects the microstructure of the tissue. 
Such a signal $S(b,v)$ is sampled, measured and reconstructed  at each voxel 
(that is, the elementary volume in which a 3D image is decomposed), 
meaning that a symmetric positive definite matrix $A$ is estimated  at each voxel.
The eigenvector associated to the largest eigenvalue is then an indicator 
of the orientation of the dominant fibre bundle.

Due to the resolution of dMRI images, i.e. the size of the voxel compared 
to the diameter of the fibers, this is 
only a gross approximation of the underlying tissue microstructure.
DTI cannot discern between complex fibre bundle configurations
such as fiber crossings or kissings. 
With the advent of higher angular resolution   diffusion imaging or Q-ball imaging,
higher order models of the diffusion signal were thus investigated \cite{Ghosh16}.
Figure~\ref{fig:dMRI} illustrates, at a sample of 
voxels, diffusion as  the graph of a function on the sphere.
As such, expansions into spherical harmonics is  a natural choice for  the  
the \emph{apparent distribution coefficient}, \emph{diffusion kurtosis tensor} 
or the \emph{fiber orientation distribution} \cite{Schultz13}.
The usual (real) spherical harmonics of order $d$
are the restrictions to the sphere of 
homogeneous harmonic ternary polynomials (called \emph{tensor} in this context) 
of degree $d$ \cite{AH12}, once expressed in polar coordinates. 
Diffusivity is mostly  \emph{antipodal}, i.e. centrally 
symmetric,  and thus only even degree  polynomials 
(equivalently even order spherical harmonics)  are considered. 

\begin{figure}
    \includegraphics[width=\textwidth]{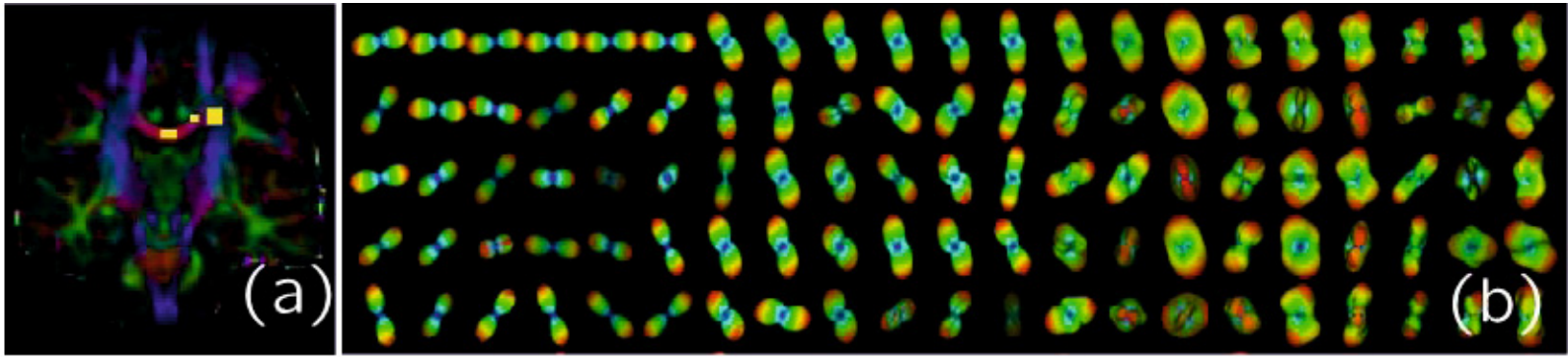}
    \caption{Slice of a brain diffusion image (a). The yellow boxes indicate the regions that are zoomed in (b) illustrating the diffusion signal at each voxel.
             From~\cite{Pap14}.}
    \label{fig:dMRI}
\end{figure}

It is desirable to extract meaningful scalars, known as \emph{biomarkers},
from the reconstructed signal. 
In the case of DTI,
mean diffusivity and fractional anisotropy, given respectively as 
$$\bar{\lambda} = \frac{\lambda_1+\lambda_2+\lambda_3}{3} = \frac{1}{3} \mathrm{tr}(A) 
\quad \hbox{ and }
\sqrt{\frac{(\lambda_1-\bar\lambda)^2+(\lambda_2-\bar\lambda)^2+(\lambda_3-\bar\lambda)^2}{\lambda_1^2+\lambda_2^2+\lambda_3^2}} = \sqrt{1 -\frac{1}{3} \frac{\mathrm{tr}(A)^2 }{\mathrm{tr}(A^2)}}$$
in terms of the eigenvalues $\lambda_1,\lambda_2,\lambda_3$ of $A$,
are relevant 
to detect various diseases such as 
Alzheimer, Parkinson, or writer's cramp \cite{delmaire-vidailhet-etal:09,li-messe-etal:10}. 
These quantities are naturally invariant under rotation. 
There thus has been efforts \cite{basser-pajevic:07,GPD12a,Pap14,CV15}, in the computational dMRI literature, to find rotation invariants
for the higher order models of diffusion, with some successes up to order $6$.  The present article provides a complete (functionally independent
and generating) set of rational invariants, for all even degrees $2d$.  It belongs to a wider project to identify the biologically relevant ones
for a given pathology. 
In this respect, it is interesting to be able to reconstruct the diffusion signal   
corresponding to a given set of invariants.
This would allow practitioners to interpret the invariants  as 
\emph{shape descriptors} and relate the diffusion abnormalities with the pathology. 
This is an inverse problem that we shall solve for the invariants constructed in this paper.

\section{Preliminaries} \label{sec:Preliminaries}

In order to set the notations, we review the definitions of the action of $\G$ on the vector space of forms $\VV_{ d} := \R[\xyz]_{d}$ and of its rational invariants. We then elaborate on the Harmonic Decomposition of $\VV_{2d}$. 
  \subsection{The action of the orthogonal group on homogeneous polynomials} \label{ssec:GroupAction}
We start out with a few notational conventions. We denote by $\R[\xyz]_d$ the set of homogeneous polynomials  
  \[f =\sum_{i+j+k =  d} a_{ijk} \,\xx^i \yy^j \zz^k\] 
of degree~$d$ in the three variables $\xyz$ with real coefficients $a_{ijk} \in \R$. By $\G \subset \R^{3\times 3}$ we denote the group of orthogonal matrices
and we consider the action of $\G$ on $\R[\xyz]$ given by
\[\begin{array}{ccl} \G \times \R[\xyz]_d & \to  & \R[\xyz]_d 
   \\ (g, f) & \mapsto & f\circ g^{-1} . \end{array}\]
By $f \circ g^{-1}$, we mean the composition of the orthogonal transformation $g^{-1} \colon \R^3 \to \R^3$ with the polynomial function $f \colon \R^3 \to \R$, resulting in a different polynomial function that we denote by $gf$. Note that $gf$ is again homogeneous of degree $d$, so the above action is well-defined. As $g^{-1}=g^T$, the polynomial $gf$ is obtained from $f$ by applying the substitutions
\[
 \xx \mapsto g_{11} \xx + g_{21} \yy + g_{31} \zz, \qquad
 \yy \mapsto g_{12} \xx + g_{22} \yy + g_{32} \zz, \qquad 
 \zz \mapsto g_{13} \xx + g_{23} \yy + g_{33} \zz,
\]
when $g = (g_{ij}) \in \G$ is an orthogonal $3\times 3$-matrix with entries $g_{ij}$. The use of the inverse $g^{-1}$ ($=g^T$) instead of $g$ in the above composition is only of notational importance, guaranteeing 
$g_1(g_2 f) = (g_1 g_2) f$ for all $g_1, g_2 \in \G$.

A counting argument shows that there are $N := \binom{d+2}{2}= \frac{1}{2}(d+2)(d+1)$ monomials $x^i y^j z^k$ of degree $d$, so formally, $\R[\xyz]_{ d}$ is an $N$-dimensional vector space over $\R$.
In order to stress this point of view, we shall  denote \[\VV_{ d} := \R[\xyz]_{ d}\] and, accordingly, we shall from now on typically denote elements of $\VV_{d}$ by letters like $v$ or $w$. %
The monomials $\{\xx^i\yy^j\zz^k \mid i+j+k= d\}$ form a basis for this vector space. But alternative bases will prove useful in this paper.

As $g(v+w) = gv+gw$ and $g(\lambda v) =\lambda(gv)$ holds for all $g \in \G$, $v,w \in \VV_{ d}$ and $\lambda \in \R$, the action of $\G$ on $\VV_{ d}$ is a \emph{linear} group action. If we fix a basis %
for $\VV_{ d}$, there is a polynomial map from $\G$ to the group of invertible 
$N\times N$ matrices that describes the action of $\G$ on $\VV_{ d}$ in this 
basis.

\begin{ex}[The action on quadratic forms]
  We show the matrix that provides the transformation by $g=\left(g_{ij}\right)_{1\leq i,j\leq 3} \in\G$ on $\VV_{2}$ in two different bases, starting with the monomial basis.

  If 
    \[f := a_{200} \, \xx^2+a_{110} \, \xx \yy+a_{020} \, \yy^2+a_{101} \, \xx \zz+a_{011} \, \yy \zz+a_{002} \, \zz^2\]
  and
    \[gf = b_{200} \, \xx^2+b_{110} \, \xx \yy+b_{020} \, \yy^2+b_{101} \, \xx \zz+b_{011} \, \yy \zz+b_{002} \, \zz^2,\]
  then the relation between the coefficients of $f$ and $gf$ is given by the matricial equality:
   \begingroup \small 
    \[ 
    \begin{pmatrix} b_{200}\\ b_{110} \\ b_{020}\\ b_{101} \\ b_{011}\\ b_{002} \end{pmatrix}
    =
    \begin{pmatrix}
      g_{11}^2 & g_{11}g_{12} & g_{12}^2 & g_{11}g_{13} & g_{12}g_{13} & g_{13}^2 \\
      2g_{21}g_{11} & g_{21}g_{12}+g_{11} g_{22} & 2g_{22}g_{12} & g_{21}g_{13}+g_{11}g_{23} & g_{22}g_{13}+g_{12}g_{23} & 2g_{23}g_{13} \\
      g_{21}^2 & g_{21}g_{22} & g_{22}^2 & g_{21}g_{23} & g_{22}g_{23} & g_{23}^2 \\
      2g_{31}g_{11} & g_{31}g_{12}+g_{11} g_{32} & 2g_{32}g_{12} & g_{31}g_{13}+g_{11}g_{33} & g_{32}g_{13}+g_{12}g_{33} & 2g_{33}g_{13} \\
      2g_{31}g_{21} & g_{31}g_{22}+g_{21} g_{32} & 2g_{32}g_{22} & g_{31}g_{23}+g_{21}g_{33} & g_{23}g_{32}+g_{22}g_{33} & 2g_{33}g_{23} \\
      g_{31}^2 & g_{31}g_{32} & g_{32}^2 & g_{31}g_{33} & g_{32}g_{33} & g_{33}^2
    \end{pmatrix}
      \begin{pmatrix} a_{200}\\ a_{110} \\ a_{020}\\ a_{101} \\ a_{011}\\ a_{002} \end{pmatrix} 
    \]
    \endgroup

    Let us denote by $R(g)$ the matrix in the above equality. 

    Alternatively the set of polynomials $\{\xx^2+\yy^2+\zz^2,\,\yy\zz,\,\zz\xx,\,\xx\yy,\,\yy^2-\zz^2,\,\zz^2-\xx^2\}$ forms a basis for $\VV_2$. Let $P$ be the matrix of change of basis, i.e.\ 
      \[
      \begin{pmatrix} \xx^2+\yy^2+\zz^2 & \yy \zz & \zz \xx & \xx \yy & \yy^2-\zz^2 & \zz^2-\xx^2 \end{pmatrix}
      =
      \begin{pmatrix} \xx^2 & \xx\yy & \yy^2 & \zz\xx & \yy\zz & \zz^2\end{pmatrix} \, P 
      \]
    
      If 
         \[f =  a_1\,(\xx^2+\yy^2+\zz^2)+a_2\,\yy\zz+a_3\,\zz\xx+a_4\,\xx\yy+a_5\,(\yy^2-\zz^2)+a_6\,(\zz^2-\xx^2)\]
      and 
        \[gf = b_1\,(\xx^2+\yy^2+\zz^2)+b_2\,\yy\zz+b_3\,\zz\xx+b_4\,\xx\yy+b_5\,(\yy^2-\zz^2)+b_6\,(\zz^2-\xx^2),\]
      then the vector of coefficients $b= \begin{pmatrix} b_1 & \ldots & b_6\end{pmatrix}^T$ is related to the vector of coefficients $a= \begin{pmatrix} a_1 & \ldots & a_6\end{pmatrix}^T$ by the matricial equality:
        \[b = \tilde{R}(g) \, a, \quad \hbox{ where } \tilde{R}(g)=P^{-1} R(g) P \]
    
 Using that $g \in \G$, one can compute that
  
    \begingroup \small
    \[
    \tilde{R}(g)= \begin{pmatrix} 1 & 0 & 0 & 0 & 0 & 0 \\ 
    0 & g_{23}g_{32}+g_{22}g_{33} & g_{31}g_{23}+g_{21}g_{33}   & g_{31}g_{22}+g_{21}g_{32} & 2g_{32}g_{22}-2g_{33}g_{23} & 2g_{33}g_{23}-2g_{31}g_{21} \\ 
    0 & g_{32}g_{13}+g_{12}g_{33} & g_{31}g_{13}+g_{11}g_{33} & g_{31}g_{12}+g_{11}g_{32} & 2g_{32}g_{12}-2g_{33}g_{13} & 2g_{33}g_{13}-2g_{31}g_{11} \\ 
    0 & g_{22}g_{13}+g_{12}g_{23} & g_{21}g_{13}+g_{11}g_{23} & g_{21}g_{12}+g_{11}g_{22} & 2g_{22}g_{12}-2g_{23}g_{13} & 2g_{23}g_{13}-2g_{21}g_{11} \\ 
    0 & g_{22}g_{23} & g_{21}g_{23} & g_{21}g_{22} & g_{22}^2-g_{23}^2 & g_{23}^2-g_{21}^2 \\ 
    0 & g_{32}g_{33}+g_{22}g_{23} & g_{31}g_{33}+g_{21}g_{23} & g_{31}g_{32}+g_{21}g_{22} & g_{22}^2+g_{32}^2-g_{23}^2-g_{33}^2 & g_{23}^2+g_{33}^2-g_{21}^2-g_{31}^2
    \end{pmatrix}.
    \]
    \endgroup
  Hence the linear spaces generated by $\xx^2+\yy^2+\zz^2$, on one hand, and by $\left\{\yy\zz,\,\zz\xx,\,\xx\yy,\,\yy^2-\zz^2,\,\zz^2-\xx^2\right\}$, on the other hand, are both invariant under the linear action of $\G$ on $\VV_2$.
\end{ex}

As we shall see in Section~\ref{ssec:HarmDecomp}, the quadratic polynomial $x^2+y^2+z^2 \in \VV_2$ also plays a special role in the action of $\G$ on $\VV_{d}$ for $d > 2$ due to the property that it is fixed by the action of $\G$.
  
  \subsection{Rational invariants and algorithmic problems} \label{ssec:AlgoFormulations}
    
We denote by $\R(\VV_{d})$ the set of rational functions $p \colon \VV_{d} \dashrightarrow \R$ on the vector space $\VV_{d}$. Explicitly, if we denote elements of $\VV_{d}$ as
  \[\sum_{i+j+k = d} a_{ijk} \, x^i y^j z^k,\]
then 
\begin{equation} \label{eq:functionFieldExplicit}
  \R(\VV_{d}) = \R(a_{ijk} \mid i,j,k \geq 0, i+j+k = d)
\end{equation}
is the field of rational expressions (i.e.\ quotients of polynomial expressions) in the variables $a_{ijk}$.

The explicit description of elements in $\R(\VV_{d})$ as expressions in the variables $a_{ijk}$ however reflects the choice of the monomial basis $\{x^i y^j z^k \mid i+j+k = d\}$ for $\VV_{d}$. Later on, we will work with a different basis for $\VV_{d}$ giving rise to an alternative presentations of elements in $\R(\VV_{d})$. We will therefore mostly avoid the description of $\R(\VV_{d})$ as in \eqref{eq:functionFieldExplicit} in the following discussions.

When working with rational functions $p = \frac{p_1}{p_0} \in \R(\VV_{d})$ 
(where $p_1, p_0 \colon \VV_{d} \to \R$ are polynomial functions and $p_0 
\not\equiv 0$), there is always an issue of division by zero: 
The function $p = \frac{p_1}{p_0}$ is only \emph{defined on a general point}, 
namely, on the set $\{v \in \VV_{d} \mid p_0(v) \neq 0\}$.
To keep this in mind, the rational function is denoted by a dashed arrow $p \colon \VV_{d} \dashrightarrow \R$. Similarly, the equalities we shall write have to be understood wherever they are defined, i.e.\ where all denominators involved are non-zero.

More generally, it is said that a statement $\mathcal P$ about a point $v \in V$
in a given $\R$-vector space $V$ (e.g. $V = \VV_{d}$) holds for a 
\defstyle{general point} if there exists a non-zero polynomial function $p_0 
\colon V \to \R$ such that $\mathcal P$ holds for all points $v \in V$ where 
$p_0(v) \neq 0$.
For a statement $\mathcal P$ about \emph{several} points $v_1, 
\ldots, v_k \in 
V$, we say that $\mathcal P$ holds for \defstyle{general points} $v_1, \ldots, 
v_k$ if there exists a non-zero polynomial function $p_0 \colon V \to \R$ such 
that $\mathcal P$ holds whenever $p_0(v_i) \neq 0$ for all $i \in \{1, \ldots, 
k\}$. 
Note that this is \emph{not} necessarily equivalent to saying that $\mathcal 
P$ holds for a general point $(v_1,\ldots,v_k)$ of the vector space $V^k$.

The set of \defstyle{rational invariants} for the action of $\G$ on $\VV_{d}$ is defined as
  \[\R(\VV_{d})^{\G} := \{p \in \R(\VV_{d}) \mid p(v) = p(gv) \ \forall v \in \VV_{d}, \; g \in \G\}.\] 
For any $d$ there exists a \emph{finite set} $\{p_1, \dots, p_m\} \subset \R(\VV_{d})^{\G}$ of rational invariants which generate $\R(\VV_{d})^{\G}$ as a field extension of $\R$. This means that any other rational invariant $q \in \R(\VV_{d})^{\G}$ can be written as a rational expression in terms of $p_1, \ldots, p_m$. We call such a finite set $\{p_1, \ldots, p_m\}$ a \defstyle{generating set of rational invariants}.

In contrast to the ring of polynomial invariants, the fact that the field of rational invariants is finitely generated is just an instance of the elementary algebraic fact: Any subfield of a finitely generated field is again finitely generated (see for example \cite[Theorem~24.9]{Isa09}). A lower bound for the cardinality of a generating set of rational invariants is given by the following result, implied by \cite[Corollary of Theorem~2.3]{PV94}:

\begin{thm} \label{thm:NumberInvariants}
  For $d \geq 2$, any generating set of rational invariants for the action of $\G$ on $\VV_{d}$ consists of at least $\dim \VV_{d} - \dim \G = \binom{d+2}{2} - 3$ elements.
\end{thm}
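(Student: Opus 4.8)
The plan is to derive the bound from two standard facts, the second of which is the content of \cite[Corollary of Theorem~2.3]{PV94}; I will recall the underlying argument. The first fact is purely field-theoretic: if a field extension $L/\R$ is generated by a finite set $\{p_1,\dots,p_m\}$, then $m \geq \operatorname{trdeg}_{\R} L$. Indeed, choosing a maximal algebraically independent subset $T \subseteq \{p_1,\dots,p_m\}$, maximality forces each $p_i$ — and hence every element of $L = \R(p_1,\dots,p_m)$ — to be algebraic over $\R(T)$, so $T$ is a transcendence basis and $m \geq |T| = \operatorname{trdeg}_{\R} L$. Applying this to $L = \R(\VV_d)^{\G}$ (which is finitely generated over $\R$, as already noted), it suffices to prove
\[
  \operatorname{trdeg}_{\R}\, \R(\VV_d)^{\G} \;\geq\; \dim \VV_d - \dim \G \;=\; \binom{d+2}{2} - 3 .
\]
Here the hypothesis $d \geq 2$ plays no role in the argument; it only ensures the right-hand side is positive, so that the statement is not vacuous.

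The second ingredient is Rosenlicht's theorem on rational quotients \cite{Rosenlicht56} (see also \cite[Theorem~2.3]{PV94}): for an algebraic group $G$ acting on an irreducible variety $X$ over an algebraically closed field, $\operatorname{trdeg} k(X)^{G} = \dim X - \max_{x\in X}\dim(G\cdot x)$. Since every orbit has dimension at most $\dim G$, this immediately yields $\operatorname{trdeg} k(X)^{G} \geq \dim X - \dim G$ — exactly the inequality we need, and the direction that does not require any analysis of stabilizers. I would apply it with $X$ the affine space $\VV_d$ (irreducible) and $G = \G$, of dimension $3$, to obtain the displayed bound.

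The one step that genuinely requires care — and what I expect to be the main obstacle — is that $\G = \mathrm{O}_3$ and its action on $\VV_d$ are defined over $\R$, not over an algebraically closed field, so Rosenlicht's theorem is not literally applicable. I would handle this by complexification: for a fixed rational function $p$ with real coefficients, the set of $g$ with $g p = p$ is cut out by polynomial conditions on the entries of $g$, hence is Zariski-closed in $\mathrm{O}_3(\C)$; since it contains $\mathrm{O}_3(\R)$, which is Zariski-dense in $\mathrm{O}_3(\C)$, it is all of $\mathrm{O}_3(\C)$. Splitting an $\mathrm{O}_3(\C)$-invariant into real and imaginary parts then shows that the field of $\mathrm{O}_3(\C)$-rational invariants on $\C[x,y,z]_d$ equals $\R(\VV_d)^{\G}\otimes_{\R}\C$, so the two fields have the same transcendence degree over $\C$ and over $\R$ respectively, and the computation of the preceding paragraph may be carried out over $\C$. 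As a side remark, for $d \geq 2$ one can moreover pin down $\max_{x}\dim(G\cdot x) = 3$ by exhibiting a single form with finite $\G$-stabilizer — for instance a power of $x^2+y^2+z^2$ times a generic ternary quadratic when $d$ is even, and the same power times $xyz$ when $d$ is odd — which makes the bound sharp; but only the inequality is needed here.
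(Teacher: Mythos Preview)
The paper does not give its own proof of this statement; it simply cites \cite[Corollary of Theorem~2.3]{PV94}. Your argument correctly unpacks that citation: the reduction to a transcendence degree bound via the elementary field-theoretic lemma, then Rosenlicht's formula $\operatorname{trdeg} k(X)^G = \dim X - \max_x \dim(G\cdot x) \geq \dim X - \dim G$, is exactly what the cited corollary provides. Your additional care with the passage from $\R$ to $\C$ (Zariski density of $\mathrm{O}_3(\R)$ in $\mathrm{O}_3(\C)$, then splitting a complex invariant into real and imaginary parts) is a genuine detail that the paper suppresses by citing the reference, and you handle it correctly. One minor quibble: in your side remark, ``the same power times $xyz$ when $d$ is odd'' is not quite right, since the required power of $\VQ$ is $(d-3)/2$ rather than $d/2-1$; but as you note, this remark is not needed for the inequality being proved.
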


An important characterization of rational invariants is given by the following theorem, for which we refer to \cite[Lemma~2.1, Theorem~2.3]{PV94} or \cite{Rosenlicht56}. %

\begin{thm} \label{thm:rationalOrbitSeparation}
  Rational invariants $p_1, \dots, p_m \in \R(\VV_{d})^{\G}$ form a generating 
  set of $\R(\VV_{d})^{\G}$
   if and only if for general points $v, w \in \VV_{d}$ the following holds:
  \[w = gv \text{ for some } g \in \G \ \Leftrightarrow \ p_i(v) = p_i(w) \ \forall i \in \{1, \dots, m\}.\]
\end{thm}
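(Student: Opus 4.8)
The plan is to deduce both implications from the existence of a \emph{rational quotient} for the $\G$-action, i.e.\ from Rosenlicht's theorem. Since the cited references work over an algebraically closed field, I would first pass to $\C$, the real statement following by the same descent arguments used in \cite{PV94,Rosenlicht56}; write $K := \C(\VV_d)^{\G}$ and $K_0 := \C(p_1,\dots,p_m) \subseteq K$. The key input is: there is a $\G$-stable dense open $U \subseteq \VV_d$ and a dominant morphism $\pi_0 \colon U \to Q$ onto a variety $Q$ such that $\pi_0^{*}$ identifies $\C(Q)$ with $K$ and, after shrinking $U$, the fibres of $\pi_0$ are exactly the $\G$-orbits in $U$. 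Because each $p_i$ is $\G$-invariant it descends, $p_i = \pi_0^{*}\bar p_i$ with $\bar p_i \in \C(Q)$, and the tuple $\Psi := (\bar p_1,\dots,\bar p_m) \colon Q \dashrightarrow \A^m$ is dominant onto its closure $Y$, with $\Psi^{*}\C(Y) = K_0$. Thus ``$\{p_i\}$ generates'' $\Leftrightarrow K_0 = K \Leftrightarrow$ ``$\Psi$ is birational onto $Y$'', and the theorem becomes the assertion that this last property is equivalent to separation of general orbits.

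For the direction \emph{generating $\Rightarrow$ separating}: if $\Psi$ is birational onto $Y$ it has a rational inverse defined over a dense open of $Y$, so for a general $v \in U$ the point $\pi_0(v)$ is recovered from $\Psi(\pi_0(v)) = (p_1(v),\dots,p_m(v))$. Hence for general $v,w$ the equalities $p_i(v) = p_i(w)$ force $\pi_0(v) = \pi_0(w)$, and since the fibres of $\pi_0$ are the orbits, $w = gv$ for some $g \in \G$; the reverse implication of the biconditional is immediate as the $p_i$ are invariants.

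For the direction \emph{separating $\Rightarrow$ generating}: suppose $\Psi$ is \emph{not} birational onto $Y$. As $\Psi$ is dominant and we are in characteristic $0$, a dominant rational map whose general fibre is a single point is birational, so here the general fibre of $\Psi$ contains at least two points; thus there are general points $q_1 \neq q_2$ of $Q$ with $\bar p_i(q_1) = \bar p_i(q_2)$ for all $i$. Choosing general $v_1 \in \pi_0^{-1}(q_1)$ and $v_2 \in \pi_0^{-1}(q_2)$ in $U$ (arranging both to avoid the finitely many bad loci involved), one gets $p_i(v_1) = \bar p_i(q_1) = \bar p_i(q_2) = p_i(v_2)$ for all $i$, while $\pi_0(v_1) = q_1 \neq q_2 = \pi_0(v_2)$ shows that $v_1, v_2$ lie in distinct orbits, contradicting orbit separation. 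Hence $\Psi$ is birational onto $Y$, i.e.\ $\{p_i\}$ generates $K$.

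The substantial ingredient is Rosenlicht's theorem itself — the existence of a geometric quotient on a dense $\G$-invariant open subset with function field $K$ and orbits as fibres; granting it, the rest is birational bookkeeping, with characteristic $0$ entering only through the ``generically one point $\Rightarrow$ birational'' fact. The remaining delicate point, and the one I would simply quote from \cite{PV94,Rosenlicht56} rather than reprove, is the passage between $\C$ and $\R$: that for general real $v,w$, lying in the same $\G(\C)$-orbit is equivalent to $w = gv$ for some $g \in \G$, which relies on $\G$ and its action being defined over $\R$ with Zariski-dense real points.
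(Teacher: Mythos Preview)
The paper does not supply its own proof of this theorem; it simply quotes \cite[Lemma~2.1, Theorem~2.3]{PV94} and \cite{Rosenlicht56}. Your argument via Rosenlicht's rational quotient and the birational interpretation of ``generating'' is precisely the standard proof underlying those references, and the sketch is correct.

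One remark on the $\R$/$\C$ passage you flag at the end: Rosenlicht's theorem in fact holds over an arbitrary base field, and \cite{PV94} works over $\R$ directly. Running your argument over $\R$ from the start avoids the need to compare $\G(\R)$-orbits with $\G(\C)$-orbits on real points, which is a genuine (if manageable) extra step in your formulation. Either way the conclusion is the same, but the direct-over-$\R$ route is cleaner here and matches what the cited references actually do.
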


In other words, a set of rational invariants generates the invariant field 
$\R(\VV_{d})^{\G}$ if and only if they separate $\G$-orbits in general position.

In this article, we focus on forms of even degree and explicitly construct a 
generating set of rational 
invariants for the action of $\G$ on $\VV_{2d}$,
whose cardinality attains the lower bound from Theorem 
\ref{thm:NumberInvariants}. 
The results appear in Corollary~\ref{cor:QuarticFullInvariants} for $\VV_4$ and in Corollary~\ref{cor:HigherFullInvariants} for $\VV_{2d}$, $d\geq 2$.

Theorem~\ref{thm:rationalOrbitSeparation} gives a crucial justification for approaching Question~\ref{qu:decideSimilarity} with rational invariants. It further addresses Question~\ref{qu:encodeEqClass} of how to encode even degree polynomials up to orthogonal transformations: We may encode a general point $v \in \VV_{2d}$ as the $m$-tuple $(p_1(v), \dots, p_m(v)) \in \R^m$. Then the $m$-tuples of $v \in \VV_{2d}$ and $w \in \VV_{2d}$ are equal if and only if $v$ and $w$ are  equivalent under the action of $\G$.

Associated with this approach are the following main Algorithmic Problems:

\begin{enumerate}[1.]
  \item \textbf{Characterization}: Determine  a set of generating rational invariants $p_1, \ldots, p_m \in \R(\VV_{2d})^{\G}$.
  \item \textbf{Evaluation}: Evaluate $p_1(v), \ldots, p_m(v)$ for a given (general) point $v \in \VV_{2d}$ in an efficient and robust way.
  \item \textbf{Rewriting}: Given a rational invariant $q \in \R(\VV_{2d})^{\G}$, express $q$ as a rational expression in terms of $p_1, \ldots, p_m$.
  \item \textbf{Reconstruction}: Which $m$-tuples $\mu = (\mu_1, \ldots, \mu_m) \in \R^m$ lie in the image of the map
  \[\pi \colon \VV_{2d} \dashrightarrow \R^m, \quad v \mapsto (p_1(v), \ldots, p_m(v))?\]
  If $\mu$ lies in the image of $\pi$, find a representative $v \in \VV_{2d}$ such that $\pi(v) = \mu$.

\end{enumerate}

General algorithms for computing a generating set of rational invariants based on Gröbner basis algorithms exist \cite{HK08}, \cite[Section 4.10]{DK15}, but the complexity increases drastically with the dimension of $\VV_{2d}$, which in turn grows quadratically in $d$. Already for $2d = 4$, these general methods are far from a feasible computation. Furthermore, these algorithms typically do not produce a minimal generating set. In this article we shall demonstrate the efficiency of a more structural approach for describing a generating set of rational invariants with minimal cardinality. How to address the algorithmic challenges 2--4 will become more apparent from our construction of the generating rational invariants, and we will examine them in detail in Section~\ref{sec:AlgorithmicSolutions}.

The field of rational invariants $\R(\VV_{2d})^{\G}$ is the quotient field of 
the ring of polynomial invariants $\R[\VV_{2d}]^{\G}$ \cite[Theorem 3.3]{PV94}; 
Any rational invariant can be written as the quotient of two polynomial 
invariants. Yet determining a generating set of polynomial invariants is a 
somewhat more arduous task. In \cite{AKO17}, a minimal set of generating 
polynomial invariants was identified for the action of $\G$ on $\H_{4}\oplus 2 
\H_{2} \oplus 2\H_{0}$, the space of the elasticity tensor. We can extract from 
this basis a set of $64$ polynomial invariants that generate 
$\R[\VV_{4}]^{\G}$. These invariants are computed thanks to Gordan's algorithm 
\cite{GY03,Oli16}, 
after the problem is reduced to the action of $\SL(2,\C)$ on binary forms through Cartan's map. One has to observe though that polynomial invariants separate the real orbits of $\G$ \cite[Proposition 2.3]{Sch01},
while rational invariants will only separate \emph{general} orbits 
(Theorem~\ref{thm:rationalOrbitSeparation}).

  \subsection{Harmonic Decomposition} \label{ssec:HarmDecomp}
 
In the study of the action of $\G$ on $\VV_{d}$, the \emph{Harmonic Decomposition} plays a central role. We start out by collecting some basic facts about the apolar inner product and harmonic polynomials. %

\begin{definition}
  The \textbf{apolar inner product} $\langle \cdot , \cdot \rangle_d \colon 
  \VV_{d} \times \VV_{d} \to \R$ is defined as follows: If $v = \sum_{ijk} 
  a_{ijk} x^i y^j z^k \in \VV_{d}$ and $w = \sum_{ijk} b_{ijk} x^i y^j z^k \in 
  \VV_{d}$, we define
    \[\langle v, w \rangle_d := \sum_{i,j,k} i! j! k! \, a_{ijk} b_{ijk},\]
  where the sums range over $\{i,j,k \geq 0 \mid i+j+k = d\}$.
\end{definition}

The apolar inner product arises naturally as follows: Identifying $\VV_d = \R[\xyz]_d$ with the space of symmetric tensors $\Sym^d(\R^3)^* = \Sym^d(\VV_1)$, it is (up to a rescaling constant) the inner product inherited from $\VV_1^{\otimes d}$ via the embedding $\Sym^d(\VV_1) \hookrightarrow \VV_1^{\otimes d}$. Here, the inner product on $\VV_1^{\otimes d}$ is induced by the standard inner product on $\VV_1 = (\R^3)^*$.

Since the group $\G$ preserves the standard inner product on $\VV_1$, this viewpoint leads to the following fact.
\begin{prop} \label{prop:InnerProduct}
  The apolar inner product is preserved by the group action of $\G$, i.e.\ if $v, w \in \VV_{d}$  and $g \in \G$, then $\langle gv, gw \rangle_d = \langle v, w \rangle_d$.
\end{prop}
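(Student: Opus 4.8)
The plan is to reduce the statement to the invariance of the standard inner product on $\VV_1 = (\R^3)^*$ under $\G$, exactly along the lines sketched in the paragraph preceding the proposition. First I would make precise the identification $\VV_d \cong \Sym^d(\VV_1)$, sending a monomial $x^i y^j z^k$ to the appropriately normalized symmetrization of $x^{\otimes i} \otimes y^{\otimes j} \otimes z^{\otimes k}$ inside $\VV_1^{\otimes d}$. The key point is that, with the right combinatorial normalization of this embedding $\Sym^d(\VV_1) \hookrightarrow \VV_1^{\otimes d}$, the inner product that $\VV_d$ inherits from the standard inner product on $\VV_1^{\otimes d}$ (the one for which pure tensors satisfy $\langle u_1 \otimes \cdots \otimes u_d, w_1 \otimes \cdots \otimes w_d \rangle = \prod_\ell \langle u_\ell, w_\ell\rangle$) coincides, up to an overall scalar depending only on $d$, with the apolar inner product $\langle \cdot, \cdot \rangle_d$. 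Concretely one checks that $\langle x^i y^j z^k, x^{i'} y^{j'} z^{k'}\rangle$ computed through the tensor embedding vanishes unless $(i,j,k) = (i',j',k')$ and otherwise equals a positive multiple of $i!\,j!\,k!$; this is the one genuinely computational lemma and it is the step I expect to be the main obstacle — not because it is deep, but because getting the symmetrization constants and the multinomial bookkeeping exactly right requires care.

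Granting that lemma, the proof is immediate: $\G$ acts on $\VV_1$ by orthogonal substitutions and hence preserves the standard inner product on $\VV_1$; the diagonal action on $\VV_1^{\otimes d}$ therefore preserves the induced inner product on $\VV_1^{\otimes d}$, since on pure tensors
\[
  \langle g u_1 \otimes \cdots \otimes g u_d,\; g w_1 \otimes \cdots \otimes g w_d \rangle
  = \prod_{\ell=1}^{d} \langle g u_\ell, g w_\ell \rangle
  = \prod_{\ell=1}^{d} \langle u_\ell, w_\ell \rangle,
\]
and these span $\VV_1^{\otimes d}$. The embedding $\Sym^d(\VV_1) \hookrightarrow \VV_1^{\otimes d}$ is $\G$-equivariant because symmetrization commutes with the diagonal action. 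Restricting the invariant inner product on $\VV_1^{\otimes d}$ to the $\G$-stable subspace $\Sym^d(\VV_1)$ thus gives a $\G$-invariant inner product there, which under the identification above is a nonzero scalar multiple of $\langle \cdot, \cdot \rangle_d$; scaling does not affect invariance, so $\langle gv, gw \rangle_d = \langle v, w \rangle_d$ for all $v, w \in \VV_d$ and all $g \in \G$.

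An alternative, more hands-on route that avoids tensor normalization issues entirely: view $\langle v, w \rangle_d$ via the classical differential-operator formula $\langle v, w \rangle_d = v(\partial_x, \partial_y, \partial_z)\, w$ (again up to a $d$-dependent constant), and observe that if $g \in \G$ then $v(\partial)$ transforms by the contragredient representation, which for an orthogonal $g$ equals $g$ itself; hence $(gv)(\partial)(gw) = v(\partial)(w)$ after the orthogonal change of variables, using $g^T g = \mathrm{id}$. I would present the tensor argument as the main proof since it is the one foreshadowed in the text, and mention the differential-operator identity as the computational backbone if a fully self-contained verification of the normalization is wanted. In either case the only thing to check carefully is the identification of the two inner products up to scalar; the invariance itself is then a one-line consequence of $g^T g = \mathrm{id}$.
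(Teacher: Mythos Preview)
Your proposal is correct and matches the paper's own treatment: the paper justifies the proposition first via the tensor identification $\VV_d \cong \Sym^d(\VV_1) \hookrightarrow \VV_1^{\otimes d}$ (the argument you develop as the main proof) and then, immediately after the statement, via the differential-operator formula $\langle f_1,f_2\rangle_d = f_1(\partial)(f_2)$ together with the adjunction $\langle gf_1,f_2\rangle_d = \langle f_1,g^T f_2\rangle_d$ (exactly your alternative route). There is nothing to add; both lines of argument you give are the ones in the paper.
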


Another intrinsic formulation of the apolar product \cite[Section 2.1]{AH12} is given as follows: For a polynomial function $f(\xyz)\in \R[\xyz]_{d}$, let $f(\partial)$ be the differential operator obtained from $f$ by replacing $\xyz$ respectively by $\frac{\partial\hbox{ }}{\partial \xx}$, $\frac{\partial\hbox{ }}{\partial \yy}$, $\frac{\partial\hbox{ }}{\partial \zz}$. One then checks that for $f_1, f_2 \in \R[\xyz]_d$ the apolar inner product is given by 
  \[\langle f_1, f_2 \rangle_{d} = f_1(\partial)(f_2) = f_2(\partial)(f_1).\]
With this viewpoint, Proposition~\ref{prop:InnerProduct} follows by observing that $\langle gf_1, f_2 \rangle_d = \langle f_1, g^T f_2 \rangle_d$ holds for all group elements $g$.

From now on, we denote 
   \[\VQ := x^2+y^2+z^2 \in \VV_2.\]
This $\VQ \in \VV_2$ plays a special role as it is fixed by the action of $\G$: $g\VQ=\VQ$ for all $g \in \G$.

\begin{definition} \label{def:Harmonics}
  For any $d \geq 2$, we consider the inclusion of vector spaces
    \(\VV_{d-2} \hookrightarrow \VV_{d}, \quad v \mapsto \VQ\cdot v\)
  and its image $\VQ\VV_{d-2} \subset \VV_{d}$, which is given by those polynomials in $\VV_{d} = \R[\xyz]_d$ that are divisible by $\VQ$. 
	
	We define the subspace $\H_{d} \subset \VV_{d}$ of \textbf{harmonic polynomials} of degree $d$ to be the orthogonal complement of $\VQ\VV_{d-2} \subset \VV_{d}$ with respect to the apolar inner product on $\VV_{d}$.
\end{definition}

An immediate consequence of the invariance of $\VQ$ and Proposition~\ref{prop:InnerProduct} is the following observation.
\begin{prop} \label{prop:InvariantSubspaces}
  Let $g \in \G$ and $v \in \VV_{d}$. Then the following holds:
  \begin{enumerate}[(i)]
    \item If $v \in \VQ\VV_{d-2} \subset \VV_{d}$, then also $gv \in \VQ\VV_{d-2}$.
    \item If $v \in \H_{d}$, then also $gv \in \H_{d}$.
  \end{enumerate}
\end{prop}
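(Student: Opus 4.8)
The plan is to deduce Proposition~\ref{prop:InvariantSubspaces} directly from the two facts already established: the $\G$-invariance of $\VQ$ (i.e.\ $g\VQ = \VQ$ for all $g \in \G$) and the $\G$-invariance of the apolar inner product (Proposition~\ref{prop:InnerProduct}). The statement splits into two parts, and part (ii) will follow formally from part (i) together with the fact that orthogonal complements of invariant subspaces are invariant.

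For part (i), suppose $v \in \VQ\VV_{d-2}$, so that $v = \VQ \cdot u$ for some $u \in \VV_{d-2}$. I want to show $gv$ is again divisible by $\VQ$. The key point is that the action of $g$ on $\R[\xyz]$ is by a ring homomorphism: $g$ acts by the linear substitution $\xx \mapsto g_{11}\xx + g_{21}\yy + g_{31}\zz$, etc., which extends multiplicatively, so $g(\VQ \cdot u) = (g\VQ)\cdot(gu)$. Since $g\VQ = \VQ$, this gives $gv = \VQ \cdot (gu)$, and $gu \in \VV_{d-2}$ because the $\G$-action preserves degree. Hence $gv \in \VQ\VV_{d-2}$, which proves (i).

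For part (ii), suppose $v \in \H_{d}$, i.e.\ $\langle v, w\rangle_d = 0$ for every $w \in \VQ\VV_{d-2}$. I must show $\langle gv, w\rangle_d = 0$ for every such $w$. Given $w \in \VQ\VV_{d-2}$, write $w = g(g^{-1}w)$; by part (i) applied to $g^{-1} \in \G$, the element $g^{-1}w$ again lies in $\VQ\VV_{d-2}$. Then by Proposition~\ref{prop:InnerProduct},
\[
  \langle gv, w\rangle_d = \langle gv, g(g^{-1}w)\rangle_d = \langle v, g^{-1}w\rangle_d = 0,
\]
the last equality because $g^{-1}w \in \VQ\VV_{d-2}$ and $v \in \H_{d}$. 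Since $w$ was arbitrary in $\VQ\VV_{d-2}$, we conclude $gv \in \H_{d}$.

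There is no real obstacle here; the only thing to be slightly careful about is the logical dependency — part (ii) uses part (i), and it uses it for the group element $g^{-1}$ rather than $g$, which is fine since $\G$ is a group and the hypothesis of (i) is stated for all $g \in \G$. Everything else is a direct unwinding of definitions together with the two invariance statements cited above, so the proof is short.
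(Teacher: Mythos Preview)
Your proof is correct and follows exactly the approach the paper indicates: the paper simply states that the proposition is an immediate consequence of the invariance of $\VQ$ and Proposition~\ref{prop:InnerProduct}, and you have carefully unwound precisely those two ingredients. There is nothing to add.
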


Harmonic functions are typically introduced as the functions $f$ such that $\Delta(f)=0$, 
where $\Delta$ is the Laplacian operator, i.e.\ 
$\Delta=\frac{\partial^2\hbox{ }}{\partial\xx^2}+\frac{\partial^2\hbox{ }}{\partial\yy^2}+\frac{\partial^2\hbox{ }}{\partial\zz^2}$,
\cite{ABR01}. 
We can see that this is equivalent to Definition~\ref{def:Harmonics} as follows: Understanding the apolar inner product via differential operators as described above, we have 
  \[\langle f_1 f_3, f_2\rangle_{d} =  \langle f_1 , f_3(\partial)(f_2) \rangle_{k} \qquad \text{for all } f_1 \in \R[\xyz]_k,  f_2 \in \R[\xyz]_{d}, f_3 \in \R[\xyz]_{d-k}.\]

In particular, $\langle f_1, \VQ f_2\rangle_{d} = \langle \Delta(f_1) , f_2 \rangle_{d-2}$ holds for all $f_1 \in \R[\xyz]_d, f_2 \in \R[\xyz]_{d-2}$, so  that $f_1 \in \H_d$ if and only if $\Delta(f_1)=0$.

So far, all formulations have been made for arbitrary degree $d$. However, we are ultimately interested in the case of even degree only. From now on and for the remainder of the article we  will therefore only consider the case of degree $2d$.

By Definition~\ref{def:Harmonics}, there is an orthogonal decomposition $\VV_{2d} = \H_{2d} \oplus \VQ\VV_{2d-2}$. For $d \geq 2$ we can also decompose $\VV_{2d-2}$ in this manner, so we may iterate this decomposition which leads to the following observation, \cite[Theorem~5.7]{ABR01}.

\begin{thm}[Harmonic Decomposition] \label{thm:HarmonicDecomp}
  For $d \geq 2$ there is a decomposition
    \[\VV_{2d} = \H_{2d} \oplus \VQ\H_{2d-2} \oplus \VQ^2\H_{2d-4}\dots \oplus \VQ^{d-2}\H_4 \oplus \VQ^{d-1}\VV_2,\]
  i.e.\ each $v \in \VV_{2d}$ can uniquely be written as a sum 
    \[v = h_{2d} + \VQ h_{2d-2} + \VQ^2h_{2d-4} + \ldots + \VQ^{d-2}h_4 + \VQ^{d-1}v'\]
  where $h_{2k} \in \H_{2k}$ and $v' \in \VV_2$.
\end{thm}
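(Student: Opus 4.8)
The plan is to prove the Harmonic Decomposition by iterating the elementary orthogonal decomposition $\VV_{2k} = \H_{2k} \oplus \VQ\VV_{2k-2}$ that comes directly from Definition~\ref{def:Harmonics}. First I would establish the base case: for a single step, Definition~\ref{def:Harmonics} defines $\H_{2d}$ as the orthogonal complement of $\VQ\VV_{2d-2}$ inside $\VV_{2d}$ with respect to the apolar inner product, so by the standard finite-dimensional linear algebra fact (a finite-dimensional inner product space is the orthogonal direct sum of any subspace and its orthogonal complement) we get $\VV_{2d} = \H_{2d} \oplus \VQ\VV_{2d-2}$ and every $v \in \VV_{2d}$ is uniquely $v = h_{2d} + \VQ w$ with $h_{2d} \in \H_{2d}$, $w \in \VV_{2d-2}$. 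One small point to check here is that the apolar inner product is genuinely positive definite (so that ``orthogonal complement'' behaves as expected); this is clear from the formula $\langle v,v\rangle_d = \sum i!j!k!\, a_{ijk}^2 \geq 0$, with equality iff all $a_{ijk}=0$.

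Next I would run the induction on $d$. The claim for $d = 2$ is exactly the one-step decomposition $\VV_4 = \H_4 \oplus \VQ\VV_2$ just established (with the convention that the ``tail'' term $\VQ^{d-1}\VV_2$ is $\VQ^{1}\VV_2$, and there are no intermediate harmonic summands). For the inductive step, assume the statement holds for $d-1$, i.e.\ $\VV_{2d-2} = \H_{2d-2} \oplus \VQ\H_{2d-4} \oplus \dots \oplus \VQ^{d-3}\H_4 \oplus \VQ^{d-2}\VV_2$. Apply the one-step decomposition to $\VV_{2d}$ to write $\VV_{2d} = \H_{2d} \oplus \VQ\VV_{2d-2}$, and then substitute the inductive decomposition of $\VV_{2d-2}$ into the second summand. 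Because multiplication by $\VQ$ is an injective linear map $\VV_{2d-2} \hookrightarrow \VV_{2d}$ (its kernel is trivial since $\VV_{2d-2}$ is an integral domain's degree-$(2d-2)$ piece and $\VQ \neq 0$), it carries the internal direct sum decomposition of $\VV_{2d-2}$ to an internal direct sum decomposition of $\VQ\VV_{2d-2}$, namely $\VQ\VV_{2d-2} = \VQ\H_{2d-2} \oplus \VQ^2\H_{2d-4} \oplus \dots \oplus \VQ^{d-1}\VV_2$. Combining gives exactly the asserted decomposition of $\VV_{2d}$, and uniqueness of the expression $v = h_{2d} + \VQ h_{2d-2} + \dots + \VQ^{d-1}v'$ follows from uniqueness at each stage of the iteration (equivalently, from directness of the sum).

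The only genuine subtlety — and the step I would be most careful about — is the passage from ``direct sum'' at one level to ``direct sum'' at the next: one must verify that pushing the decomposition of $\VV_{2d-2}$ forward by $\cdot\,\VQ$ and then adjoining $\H_{2d}$ really produces a direct sum inside $\VV_{2d}$, i.e.\ that $\H_{2d} \cap \VQ\VV_{2d-2} = 0$ (immediate, since $\H_{2d}$ is the orthogonal complement of $\VQ\VV_{2d-2}$ and the form is nondegenerate) and that the summands $\VQ^j\H_{2d-2j}$ for $j \geq 1$ remain in direct sum, which is inherited from the inductive hypothesis via injectivity of $\cdot\,\VQ$. Everything else is bookkeeping. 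Alternatively, and perhaps more cleanly, one can bypass the explicit induction by invoking the differential-operator characterization already proved in the excerpt: $f \in \H_{2d} \iff \Delta f = 0$, together with the classical fact that $\Delta \colon \VV_{2d} \to \VV_{2d-2}$ is surjective, which gives $\dim \H_{2d} = \dim \VV_{2d} - \dim \VV_{2d-2}$ and hence, by a dimension count, that the evident map $\bigoplus_{j=0}^{d-2} \H_{2d-2j} \oplus \VV_2 \to \VV_{2d}$, $(h_{2d-2j})_j \mapsto \sum_j \VQ^j h_{2d-2j}$, is an isomorphism; but since the excerpt is framed around the apolar inner product, I would present the inductive orthogonal-complement argument as the primary proof and mention the Laplacian surjectivity only as a remark.
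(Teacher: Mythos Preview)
Your proof is correct and follows precisely the approach the paper indicates: the paper does not give a detailed proof but sketches exactly this iteration (``By Definition~\ref{def:Harmonics}, there is an orthogonal decomposition $\VV_{2d} = \H_{2d} \oplus \VQ\VV_{2d-2}$ \ldots\ we may iterate this decomposition'') and then cites \cite[Theorem~5.7]{ABR01} for the full statement. Your careful treatment of positive-definiteness of the apolar form and injectivity of multiplication by $\VQ$ fills in the details the paper leaves implicit.
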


We mention at this point that it would be possible to refine Theorem~\ref{thm:HarmonicDecomp} by further decomposing $\VV_2 = \H_2 \oplus \R\VQ$, but this is not beneficial for our purpose.

Different bases for the vector spaces $\H_{2d} \subset \VV_{2d}$ of harmonic polynomials are used in applications. A frequent choice in practice is the basis of \emph{spherical harmonics} \cite{AH12} which are usually given as functions in spherical coordinates.  In Section~\ref{sec:B3Harmonics}, we shall construct another basis for $\H_{2d}$ that exhibits certain symmetries with respect to the group of signed permutations. %

\section{The Slice Method} \label{sec:SliceMethod}
  
Our aim is to determine a generating set of rational invariants for the linear action of the orthogonal group $\G$ on the vector spaces $\VV_{2d}$ of even degree ternary forms. The group $\G$ is infinite and of dimension~3 as an algebraic group. We reduce the problem to the simpler question of determining rational invariants for the linear action of a \emph{finite} group $\B$ (contained in $\G$ as a subgroup) on a subspace $\Lambda_{2d}$ of $\VV_{2d}$.

\subsection{The Slice Lemma} \label{ssec:SliceLemma}
We introduce the general technique for the reduction mentioned above, 
called the \emph{slice method}. For the formulation, we abstract from our 
specific setting: We consider a linear action of a real algebraic group $\Ggen$ 
on a finite-dimensional $\R$-vector space $\VV$, denoted
  \[\Ggen \times \VV \to \VV, \quad (g,v) \mapsto gv.\]
We denote by $\R(\VV)$ the field of 
rational functions and by $\R(\VV)^{\Ggen}$ the finitely generated subfield of 
rational invariants, as introduced already for $\VV = \VV_{2d}$ and $\Ggen=\G$. 
Let $\VV_\C$ and $\Ggen_\C$ denote the complexifications of $\VV$ and $\Ggen$, 
respectively (and analougously for other real vector spaces and real algebraic 
groups). Note that there is an induced action of $\Ggen_\C$ on $\VV_\C$.
  
In our particular case, we have $\Ggen = \G$ and $\VV = \VV_{2d}$, and the 
action of $G$ on $\VV$ is defined in Section~\ref{ssec:GroupAction}. Here, 
$\Ggen_\C = \G(\C)$ is the group of complex orthogonal matrices and $\VV_\C = 
\VV_{2d} \otimes_\R \C = \C[x,y,z]_{2d}$.

The main technique for the announced reduction is known as the \emph{Slice Method} \cite[Section~3.1]{CTS07}, \cite{Popov94s}. It is based on the following definition.

\begin{definition} \label{def:Slice}
  Consider a linear group action of an algebraic group $\Ggen$ on a finite-dimensional $\R$-vector space $\VV$. A subspace $\Lambda \subset \VV$ is called a \defstyle{slice} for the group action, and the subgroup 
	\[\Bgen := \{g \in \Ggen \mid gs \in \Lambda \ \forall s \in \Lambda\} \subset \Ggen\]
	is called its \defstyle{stabilizer}, if the following two properties hold:
	\begin{enumerate}[(i)]
		\item For a general point $v \in \VV$ there exists $g \in \Ggen$ such that $gv \in \Lambda$.
		\item For a general point $s \in \Lambda_\C$ the following holds: If $g \in 
		\Ggen_\C$ is such that $gs \in \Lambda_\C$, then $g \in \Bgen_\C$.
	\end{enumerate}
\end{definition}

The Slice Lemma then states that rational invariants of the action of $\Ggen$ 
on $\VV$ are in one-to-one correspondence with rational invariants of the 
smaller group $\Bgen \subset \Ggen$ on the slice $\Lambda \subset \VV$:

\begin{thm}[Slice Lemma] \label{thm:SliceLemma}
  Let $\Lambda$ be a slice of a linear action of an algebraic group $\Ggen$ on a finite-dimensional $\R$-vector space $\VV$, and let $\Bgen$ be its stabilizer. Then there is a field isomorphism over $\R$ between rational invariants
  	\[\varrho \colon \R(\VV)^{\Ggen} \xrightarrow{\cong} \R(\Lambda)^{\Bgen}, \ p \mapsto \restr{p}{\Lambda},\]
	which restricts a rational invariant $p \colon \VV \dashrightarrow \R$ to $\restr{p}{\Lambda} \colon \Lambda \dashrightarrow \R$.
\end{thm}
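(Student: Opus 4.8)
The plan is to establish the isomorphism $\varrho$ by exhibiting an explicit inverse and checking that both $\varrho$ and its candidate inverse are well-defined ring (in fact field) homomorphisms that preserve $\R$. First I would verify that $\varrho$ is well-defined: given $p \in \R(\VV)^{\Ggen}$, say $p = p_1/p_0$ with $p_0 \not\equiv 0$ on $\VV$, I need $\restr{p_0}{\Lambda} \not\equiv 0$ on $\Lambda$, so that the restriction is a genuine element of $\R(\Lambda)$. This uses property (i) of the slice: since a general $v \in \VV$ has a translate $gv \in \Lambda$, and $p_0$ is not identically zero on $\VV$, the $\Ggen$-orbit of a general point where $p_0 \neq 0$ meets $\Lambda$; because $p_0$ is $\Ggen$-invariant up to the situation at hand — more carefully, one shows $\restr{p_0}{\Lambda} \equiv 0$ would force $p_0$ to vanish on $\Ggen \cdot \Lambda$, which is dense in $\VV$ by (i), contradicting $p_0 \not\equiv 0$. (If $p_0$ itself is not $\Ggen$-invariant one first replaces the fraction by one with invariant denominator, e.g.\ multiplying numerator and denominator by the $\Ggen$-translates, or simply argues directly on the subvariety where $p$ is regular.) Then $\restr{p}{\Lambda}$ is visibly $\Bgen$-invariant, since $\Bgen$ stabilizes $\Lambda$ and $p$ is $\Ggen$-invariant. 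Clearly $\varrho$ is a ring homomorphism and fixes $\R$.

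Next I would construct the inverse map. Given $q \in \R(\Lambda)^{\Bgen}$, I want to produce $\tilde q \in \R(\VV)^{\Ggen}$ with $\restr{\tilde q}{\Lambda} = q$. The definition is the natural one: for a general $v \in \VV$, choose (by (i)) some $g \in \Ggen_\C$ with $gv \in \Lambda_\C$ and set $\tilde q(v) := q(gv)$. The crucial point is that this is independent of the choice of $g$: if $g, g' \in \Ggen_\C$ both send $v$ into $\Lambda_\C$, then $g'g^{-1}$ sends the general point $gv \in \Lambda_\C$ into $\Lambda_\C$, so by property (ii) we get $g'g^{-1} \in \Bgen_\C$, and then $q(g'v) = q\big((g'g^{-1})(gv)\big) = q(gv)$ because $q$ is $\Bgen$-invariant. (One must be slightly careful that the general-position hypotheses compose correctly — the set of $v$ for which the argument goes through is the intersection of finitely many dense open conditions, hence still dense.) This shows $\tilde q$ is a well-defined rational function on $\VV_\C$; a Galois-descent / reality check shows it is defined over $\R$, because $q$ and the action are, and $\tilde q$ is $\Ggen$-invariant by construction since precomposing $v$ with $h \in \Ggen$ just changes the choice of $g$ to $gh^{-1}$.

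Finally I would check the two maps are mutually inverse. For $q \in \R(\Lambda)^{\Bgen}$ and $s \in \Lambda$ a general point, we may take $g = \id$ in the definition of $\tilde q$, so $\restr{\tilde q}{\Lambda}(s) = q(s)$, giving $\varrho(\tilde q) = q$. Conversely, for $p \in \R(\VV)^{\Ggen}$, the function $\widetilde{\restr{p}{\Lambda}}$ agrees with $p$ on a general point $v$: pick $g$ with $gv \in \Lambda_\C$, then $\widetilde{\restr{p}{\Lambda}}(v) = p(gv) = p(v)$ by $\Ggen$-invariance of $p$. Hence $\varrho$ is a bijection, and since it and its inverse are both $\R$-algebra homomorphisms between fields, it is a field isomorphism over $\R$.

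I expect the main obstacle to be the bookkeeping around "general point" hypotheses: ensuring that the finitely many open dense conditions (where $p$ is regular, where property (i) produces a $g$, where property (ii) applies to $gv$, where $q$ is regular at $gv$) can be intersected to a single nonempty open set on which the whole argument is valid, and in particular that $\Ggen \cdot \Lambda$ is dense so that restriction to $\Lambda$ does not kill a nonzero invariant. The passage between $\R$ and $\C$ — property (ii) is stated over $\C$ while we want an $\R$-isomorphism — also needs a clean word: one works with $\tilde q$ as a complex rational function and notes it is fixed by complex conjugation because it is determined by its values on the real points of $\VV$ where $q$ and the group action are real, hence descends to $\R(\VV)^{\Ggen}$.
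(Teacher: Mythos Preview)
The paper does not prove this theorem: it attributes the result to Seshadri and refers to \cite[Theorem~3.1]{CTS07} for a proof, only recording the explicit formula for $\varrho^{-1}$ (exactly the one you write down) and remarking that property~(ii) over $\C$ is what guarantees that $v \mapsto q(gv)$ is a \emph{rational} function. Your plan follows precisely this standard route, and your identification of the bookkeeping issues (intersecting finitely many dense opens, the $\R$/$\C$ descent) is on point.

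The one place your sketch is thin is the passage ``This shows $\tilde q$ is a well-defined rational function on $\VV_\C$.'' Independence of the choice of $g$ gives a well-defined \emph{set-theoretic} function on a dense open, but not yet a rational function. The clean way to finish is to consider the dominant morphism $\mu \colon \Ggen_\C \times \Lambda_\C \to \VV_\C$, $(g,s) \mapsto g^{-1}s$, pull back $q$ via the second projection to get $Q \in \C(\Ggen_\C \times \Lambda_\C)$, and observe (using your well-definedness argument from property~(ii)) that $Q$ is constant on the generic fibre of $\mu$; over an algebraically closed field this forces $Q$ to lie in the image of $\mu^* \colon \C(\VV) \hookrightarrow \C(\Ggen \times \Lambda)$, producing $\tilde q \in \C(\VV)$. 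This is exactly why the paper insists that property~(ii) hold over $\C$. Your real-descent remark then brings $\tilde q$ down to $\R(\VV)^{\Ggen}$. Also, for the well-definedness of $\varrho$ itself, rather than trying to make $p_0$ invariant (multiplying by $\Ggen$-translates does not work for infinite $\Ggen$), argue directly with the $\Ggen$-invariant domain of definition of $p$: it is nonempty open, hence meets the dense set $\Ggen \cdot \Lambda$, hence meets $\Lambda$.
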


This observation goes back to \cite{Ses61}, and we refer to 
\cite[Theorem~3.1]{CTS07} for a proof. The above version of the Slice Lemma is 
weaker than the formulation in \cite[Theorem~3.1]{CTS07}, but it is sufficient 
in our case.

Explicitly, the inverse to $\varrho 
\colon \R(\VV)^{\Ggen} \xrightarrow{\cong} \R(\Lambda)^{\Bgen}$
in Theorem~\ref{thm:SliceLemma} is given by
\begin{align*}
	\varrho^{-1} \colon \R(\Lambda)^{\Bgen} &\to \R(\VV)^{\Ggen}, \\
	q &\mapsto \big(\VV \dashrightarrow \R, \quad v \mapsto q(gv), \text{ where $g \in \Ggen$ is such that $gv \in \Lambda$}\big).
\end{align*}
The assumption that property~(ii) in Definition~\ref{def:Slice} holds over the 
complex numbers is needed to show that the map on the right is indeed a 
rational function.

We will apply Theorem~\ref{thm:SliceLemma} for $\Ggen = \G$, $\VV = \VV_{2d}$ and a suitable choice for the slice $\Lambda$. The consequence of Theorem~\ref{thm:SliceLemma} for the construction of a generating set of rational invariants is the following.

\begin{cor} \label{cor:SliceLemmaGen}
  Let $\Lambda$ be a slice of a linear action of a real algebraic group $\Ggen$ 
  on a finite-dimensional $\R$-vector space $\VV$, and let $\Bgen$ be its 
  stabilizer. If $\setP[] = \{p_1, \dots, p_m\}$ is a generating set of 
  rational invariants for the action of $\Bgen$ on $\Lambda$, then $\setPT[] := 
  \{\varrho^{-1}(p_1), \dots, \varrho^{-1}(p_m)\}$ is a generating set of 
  rational invariants for the action of $\Ggen$ on $\VV$ (where $\varrho$ is 
  given as above).
\end{cor}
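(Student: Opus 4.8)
The plan is to derive Corollary~\ref{cor:SliceLemmaGen} as an essentially formal consequence of the Slice Lemma (Theorem~\ref{thm:SliceLemma}) combined with the orbit-separation characterization of generating sets (Theorem~\ref{thm:rationalOrbitSeparation}). First I would recall that $\varrho \colon \R(\VV)^{\Ggen} \xrightarrow{\cong} \R(\Lambda)^{\Bgen}$ is a field isomorphism over $\R$, so its inverse $\varrho^{-1}$ is likewise a field isomorphism over $\R$. A field isomorphism carries a generating set of the source field (as a field extension of $\R$) to a generating set of the target field: indeed, if every element of $\R(\Lambda)^{\Bgen}$ is a rational expression in $p_1,\dots,p_m$ with $\R$-coefficients, then applying $\varrho^{-1}$ — which fixes $\R$ and respects the field operations — shows every element of $\R(\VV)^{\Ggen}$ is the same rational expression in $\varrho^{-1}(p_1),\dots,\varrho^{-1}(p_m)$. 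Hence $\setPT[] = \{\varrho^{-1}(p_1),\dots,\varrho^{-1}(p_m)\}$ generates $\R(\VV)^{\Ggen}$ as a field extension of $\R$, which is exactly the definition of a generating set of rational invariants for the action of $\Ggen$ on $\VV$.

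The second ingredient I would want to make explicit is simply that $\setPT[]$ consists of rational invariants, i.e.\ that $\varrho^{-1}(p_i) \in \R(\VV)^{\Ggen}$; but this is immediate from the statement of Theorem~\ref{thm:SliceLemma}, since the codomain of $\varrho^{-1}$ is $\R(\VV)^{\Ggen}$. So the entire argument reduces to the purely algebraic observation about field isomorphisms above, together with citing Theorem~\ref{thm:SliceLemma} for the existence and properties of $\varrho^{-1}$. One could alternatively phrase the argument through Theorem~\ref{thm:rationalOrbitSeparation} — checking that $\varrho^{-1}(p_1),\dots,\varrho^{-1}(p_m)$ separate general $\Ggen$-orbits in $\VV$ by transporting the separation property of $p_1,\dots,p_m$ on $\Lambda$ through the correspondence $v \mapsto gv$, $gv \in \Lambda$ — but this is more delicate because of the "general point" caveats, and the field-theoretic route is cleaner.

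There is no real obstacle here; the only point requiring a little care is the bookkeeping about what "generating set" means. A generating set of rational invariants is defined in the excerpt as a finite subset of $\R(\VV)^{\Ggen}$ generating it \emph{as a field extension of $\R$}, not as a ring or as an $\R$-algebra, so the relevant closure operations are the four field operations plus $\R$-scalars — all of which are preserved by a field isomorphism over $\R$. Thus the mild "hard part," if any, is just to state precisely that $\varrho$ (hence $\varrho^{-1}$) is an isomorphism of $\R$-algebras that happens to be a field isomorphism, so that it intertwines arbitrary rational expressions over $\R$; once that is said, the corollary follows in one line.

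\begin{proof}
By Theorem~\ref{thm:SliceLemma}, the restriction map $\varrho \colon \R(\VV)^{\Ggen} \to \R(\Lambda)^{\Bgen}$ is a field isomorphism over $\R$; in particular its inverse $\varrho^{-1} \colon \R(\Lambda)^{\Bgen} \to \R(\VV)^{\Ggen}$ is a field isomorphism over $\R$, so it fixes $\R$ pointwise and commutes with the field operations. Consequently, for any rational expression $\Phi$ with coefficients in $\R$ and any $q_1,\dots,q_m \in \R(\Lambda)^{\Bgen}$ for which $\Phi(q_1,\dots,q_m)$ is defined, we have
  \[\varrho^{-1}\big(\Phi(q_1,\dots,q_m)\big) = \Phi\big(\varrho^{-1}(q_1),\dots,\varrho^{-1}(q_m)\big).\]
Now let $q \in \R(\VV)^{\Ggen}$ be arbitrary. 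Then $\varrho(q) \in \R(\Lambda)^{\Bgen}$, and since $\setP[] = \{p_1,\dots,p_m\}$ generates $\R(\Lambda)^{\Bgen}$ as a field extension of $\R$, there is a rational expression $\Phi$ over $\R$ with $\varrho(q) = \Phi(p_1,\dots,p_m)$. Applying $\varrho^{-1}$ and using the displayed identity gives
  \[q = \varrho^{-1}\big(\varrho(q)\big) = \varrho^{-1}\big(\Phi(p_1,\dots,p_m)\big) = \Phi\big(\varrho^{-1}(p_1),\dots,\varrho^{-1}(p_m)\big).\]
Thus every rational invariant $q \in \R(\VV)^{\Ggen}$ is a rational expression over $\R$ in the elements of $\setPT[] = \{\varrho^{-1}(p_1),\dots,\varrho^{-1}(p_m)\}$. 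Since moreover $\setPT[] \subset \R(\VV)^{\Ggen}$ by the codomain of $\varrho^{-1}$ in Theorem~\ref{thm:SliceLemma}, we conclude that $\setPT[]$ is a generating set of rational invariants for the action of $\Ggen$ on $\VV$.
\end{proof}
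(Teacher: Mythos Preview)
Your proof is correct and follows essentially the same approach as the paper's: take an arbitrary $q \in \R(\VV)^{\Ggen}$, write $\varrho(q)$ as a rational expression in $p_1,\dots,p_m$, and apply $\varrho^{-1}$ to conclude. The paper's version is slightly terser but the argument is identical.
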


\begin{proof}
 Let $p_0 \in \R(\VV)^{\Ggen}$. By assumption, $\varrho(p_0) \in \R(\Lambda)^{\Bgen}$ can be written as a rational expression in the generators $p_1, \dots, p_m$. Since $\varrho$ is a field isomorphism, $p_0 = \varrho^{-1}(\varrho(p_0))$ is the same rational expression in $\varrho^{-1}(p_1), \dots, \varrho^{-1}(p_m)$.
\end{proof}

A corresponding statement for polynomial invariants requires  much stronger hypotheses on the slice. In particular, even if the generating set for $\R(\Lambda)^\Bgen$ consists of \emph{polynomial expressions} $p_1, \ldots, p_m \in \R[\Lambda]^{\Bgen}$, the construction described above typically introduces denominators, so that $\varrho^{-1}(p_1), \dots, \varrho^{-1}(p_m) \in \R(\VV)^{\Ggen}$ become \emph{rational} expressions.

  \subsection{A slice for \texorpdfstring{$\VV_{2d}$}{V\_\{2d\}}}
  
We now describe a slice $\Lambda_{2d} \subset \VV_{2d}$ for the action of $\G$ on $\VV_{2d}$ for any $d \geq 1$. %

We recall from Section~\ref{ssec:motivation} the description of the action of $\G$ on $\VV_{2}$: Elements of $\VV_2$ are ternary quadratic forms and they can be identified with symmetric $3 \times 3$-matrices as in \eqref{eq:GramianMatrix}. If the associated symmetric matrix of $v \in \VV_2$ is $A$, then for any $g \in \G \subset \R^{3 \times 3}$, the associated symmetric matrix of $gv \in \VV_2$ is the matrix product $g A g^T$.
\begin{definition}
  Let $\Lambda_2 \subset \VV_2$ denote the subspace of quadratic forms whose associated symmetric matrix is diagonal. Explicitly,
    \[\Lambda_2 = \{\lambda_1 x^2 + \lambda_2 y^2 + \lambda_3 z^2 \in \VV_2 \mid \lambda_1, \lambda_2, \lambda_3 \in \R\}.\]
  For $d \geq 2$ we consider the Harmonic Decomposition of $\VV_{2d}$ from Theorem~\ref{thm:HarmonicDecomp} and define $\Lambda_{2d} \subset \VV_{2d}$ to be the subspace
    \[\Lambda_{2d} := \H_{2d} \oplus \VQ \H_{2d-2} \oplus \dots \oplus \VQ^{d-2} \H_4 \oplus \VQ^{d-1}\Lambda_2.\]
\end{definition}

In other words, elements of the subspace $\Lambda_{2d}$ are those $v \in \VV_{2d}$ that can be written as
  \[v = h_{2d} + \VQ h_{2d-2} + \VQ^2 h_{2d-4} + \ldots + \VQ^{d-2} h_4 + \VQ^{d-1}v'\]
with $h_{2k} \in \H_{2k}$ and $v' \in \Lambda_2$ a quadratic form whose associated symmetric matrix is diagonal. The main observation is now the following:

\begin{prop} \label{prop:MainSlice}
  Let $d \geq 1$. The subspace $\Lambda_{2d} \subset \VV_{2d}$ is a slice for the action of $\G$ on $\VV_{2d}$ and its stabilizer is the group $\B \subset \G$ of signed permutation matrices. In particular, there is a one-to-one correspondence between rational invariants
    \[\varrho \colon \R(\VV_{2d})^{\G} \xrightarrow{\cong} \R(\Lambda_{2d})^{\B}\]
  given by the restriction of rational functions.
\end{prop}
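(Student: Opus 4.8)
The plan is to verify the two conditions of Definition~\ref{def:Slice} for $\Lambda = \Lambda_{2d}$ and then identify the stabilizer subgroup explicitly. For condition~(i), I would use the Harmonic Decomposition (Theorem~\ref{thm:HarmonicDecomp}) to reduce to the degree-$2$ component: given a general $v \in \VV_{2d}$, write $v = h_{2d} + \VQ h_{2d-2} + \cdots + \VQ^{d-1} v'$ with $v' \in \VV_2$. Since $\G$ acts on each graded piece $\VQ^{i}\H_{2d-2i}$ separately (by Proposition~\ref{prop:InvariantSubspaces}, as $\VQ$ is fixed), and since the action on the $\VV_2$-component $v'$ is the congruence action $A \mapsto gAg^T$ on symmetric matrices, the Spectral Theorem provides $g \in \G$ diagonalizing $A$; this same $g$ automatically sends all harmonic components back into $\H_{2d-2i}$, so $gv \in \Lambda_{2d}$. (One should note ``general'' is not even needed here — condition~(i) holds for every $v$.)

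For condition~(ii), the key is that a general $s \in \Lambda_{2d}$ is ``rigid'': if $g \in \G(\C)$ satisfies $gs \in \Lambda_{2d}\otimes\C$, then $g$ must be a signed permutation matrix. The argument again passes through the $\VV_2$-component: a general element of $\Lambda_2$ corresponds to a diagonal matrix $\diag(\lambda_1,\lambda_2,\lambda_3)$ with \emph{distinct} eigenvalues, and a complex orthogonal $g$ conjugating one such diagonal matrix to another diagonal matrix must permute eigenspaces, hence $g$ is a monomial (generalized permutation) matrix; orthogonality then forces the nonzero entries to be $\pm 1$, i.e. $g \in \B(\C)$. The subtlety is that projecting $v \mapsto v'$ onto the $\VV_2$-summand is only $\G$-equivariant, not compatible with the congruence picture verbatim — but because each $\VQ^i\H_{2d-2i}$ is $\G$-stable and $\VQ$ is $\G$-invariant, the map $\VV_{2d}\to\VV_2$ extracting the bottom component \emph{is} $\G$-equivariant, so $gs \in \Lambda_{2d}\otimes\C$ forces $gs' \in \Lambda_2\otimes\C$, and the distinct-eigenvalue argument applies to a general $s$. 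Conversely, every signed permutation matrix clearly preserves $\Lambda_2$ (it permutes and sign-flips the coordinate axes, preserving diagonal matrices) and hence preserves each $\VQ^i\H_{2d-2i}$ and all of $\Lambda_{2d}$, so the stabilizer $\Bgen$ contains $\B$; combined with the above it equals $\B$.

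With conditions~(i) and~(ii) established and $\Bgen = \B$ identified, the asserted field isomorphism $\varrho \colon \R(\VV_{2d})^{\G} \xrightarrow{\cong} \R(\Lambda_{2d})^{\B}$ is immediate from the Slice Lemma (Theorem~\ref{thm:SliceLemma}).

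I expect the main obstacle to be condition~(ii): one must be careful that ``general $s \in \Lambda_{2d}$'' genuinely forces the $\VV_2$-part to have distinct eigenvalues (a non-vanishing-discriminant condition, clearly a Zariski-open condition on $\Lambda_{2d}$), and that the rigidity argument is carried out over $\C$ with \emph{complex} orthogonal matrices — where the bilinear form is not definite, so one cannot invoke real spectral theory but must argue directly that a monomial matrix preserving $x^2+y^2+z^2$ has entries in $\{\pm 1\}$, and that the permutation action pinning down $g$ uses only that the $\lambda_i$ are pairwise distinct, which is generic.
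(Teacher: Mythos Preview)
Your proposal is correct and follows essentially the same argument as the paper: reduce both conditions~(i) and~(ii) to the $\VV_2$-component via the Harmonic Decomposition, invoke the Spectral Theorem for~(i), and for~(ii) use that a (complex) orthogonal matrix conjugating a diagonal matrix with distinct eigenvalues to another diagonal matrix must be a signed permutation, with genericity encoded by the non-vanishing of the discriminant $(\lambda_1-\lambda_2)(\lambda_2-\lambda_3)(\lambda_3-\lambda_1)$. Your remarks on the $\G$-equivariance of the projection $\VV_{2d}\to\VV_2$ and on working carefully over~$\C$ are correct and slightly more explicit than the paper's write-up, but do not constitute a different approach.
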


We recall that a \emph{signed permutation matrix} is a matrix for which each row and each column contain only one non-zero entry and this entry is either $1$ or $-1$. Below we will remark on further structural descriptions of the group $\B$.

\begin{proof}[Proof of Proposition~\ref{prop:MainSlice}]
  The second statement is a consequence of the first statement by Theorem~\ref{thm:SliceLemma}.
  
  Let
    \[v = h_{2d} + \VQ h_{2d-2} + \VQ^2 h_{2d-4} + \ldots + \VQ^{d-2} h_4 + \VQ^{d-1}v'\]
  be the Harmonic Decomposition of an element $v \in V_{2d}$ and let $A \in \R^{3 \times 3}$ be the symmetric matrix associated to the quadratic form $v' \in \VV_2$. By the Spectral Theorem for symmetric matrices, there exists an orthogonal matrix $g \in \G \subset \R^{3 \times 3}$ such that $gAg^T$ is a diagonal matrix (whose diagonal entries are the eigenvalues of $A$). Since $gAg^T$ is the associated symmetric matrix of $gv' \in \VV_2$, this means that
    \[gv = (gh_{2d}) + \VQ (gh_{2d-2}) + \VQ^2 (gh_{2d-4}) + \ldots + \VQ^{d-2} (gh_4) + \VQ^{d-1}(gv')\]
  is contained in $\Lambda_{2d}$. This verifies property~(i) of Definition~\ref{def:Slice} for $\Lambda_{2d} \subset \VV_{2d}$.
  
  A complex orthogonal matrix $g \in \G(\C)$ lies in the 
  stabilizer 
  $\Bgen_\C := \{g \in \G(\C) \mid gv \in (\Lambda_{2d})_\C \ \forall v \in 
  (\Lambda_{2d})_\C\}$ 
  if and only if the matrix product $g \diag(\lambda_1, \lambda_2, \lambda_3) 
  g^T$ is again a diagonal matrix for all values of $\lambda_1, \lambda_2, 
  \lambda_3 \in \C$. This means that the rows of the matrix $g$ are orthonormal 
  eigenvectors of $\diag(\lambda_1, \lambda_2, \lambda_3)$ for all values of 
  $\lambda_1, \lambda_2, \lambda_3 \in \C$ (where orthogonality and length of 
  vectors refers to the 
  standard 
  \emph{bilinear} form $\C^3 \times \C^3 \to \C$, $(v,w) \mapsto v^T w$).  In 
  particular, this holds when 
  $\lambda_1, \lambda_2, \lambda_3 \in \C$ are \emph{distinct}, in which case 
  the only unit eigenvectors are $\pm e_i$ (where $e_1, e_2, e_3$ is the 
  standard basis of $\C^3$). Hence, $g$ is a signed permutation matrix. This 
  establishes the description of the stabilizer $\Bgen = \Bgen_\C = \B$.
  
  Similarly, to verify property~(ii) of Definition~\ref{def:Slice}, we only 
  need to show: If $g \in \G(\C)$ is such that for a \emph{general point} 
  $(\lambda_1, \lambda_2, \lambda_3) \in \C^3$, the matrix product $g 
  \diag(\lambda_1, \lambda_2, \lambda_3) g^T$ is a diagonal matrix, then $g \in 
  \B$. Indeed, we have just seen that this property holds whenever $\lambda_1, 
  \lambda_2, \lambda_3 \in \C$ are distinct values, i.e.\ wherever the 
  polynomial function
	\[\C^3 \to \C, \qquad (\lambda_1,\lambda_2,\lambda_3) \mapsto 
	(\lambda_1-\lambda_2)(\lambda_2-\lambda_3)(\lambda_3-\lambda_1)\]
	does not vanish.
\end{proof}

We note that in the proof we verified property~(i) of Definition~\ref{def:Slice} for \emph{all} $v \in \VV_{2d}$, while for property~(ii) we had to make use of the notion of a \emph{general point}.

Above, we introduced the group $\B \subset \G$ of signed permutation matrices. 
This is a \emph{finite} group with $2^3 \cdot 3! = 48$ elements and is known as 
the \emph{octahedral group} or as the wreath product $\GS[2] \wr \GS[3]$, 
i.e.\ as the semidirect product of the symmetric group $\GS$ with the abelian 
group $\GA$.

Note that $-\id \in \B$ acts trivially on $\VV_{2d}$, so we in fact have an 
action of the quotient $\B/\langle -\id \rangle$ on $\Lambda_{2d}$. The group 
$\B/\langle -\id \rangle$ is isomorphic to the symmetric group 
$\mathfrak S_4$. However, the group action on $\Lambda_{2d}$ is more 
straightforwardly 
formulated in terms of $\B$ (and more naturally generalizes to the cases of odd 
degree or more than three variables). For working explicitly with $\B$, we 
introduce the following notation.

\begin{notation}
For $\sigma \in \GS$ we denote by $g_\sigma \in \G$ the permutation matrix such that
  \[g_\sigma \begin{pmatrix} \alpha_1 \\ \alpha_2 \\ \alpha_3 \end{pmatrix} = \begin{pmatrix} \alpha_{\sigma^{-1}(1)} \\ \alpha_{\sigma^{-1}(2)} \\ \alpha_{\sigma^{-1}(3)} \end{pmatrix}\]
For $\tau=(\tau_1, \tau_2,\tau_3)\in \GA$ we write $g_\tau := \diag(\tau_1, \tau_2,\tau_3) \in \G$ and we will call these matrices \defstyle{sign-change matrices}.
\end{notation}

Then $\B$ is the smallest subgroup of $\G$ containing all permutation matrices and all sign-change matrices: Each signed permutation matrix $g \in \B$ can  be uniquely written as $g = g_\tau g_\sigma$ with $\tau \in \GA$, $\sigma \in \GS$. Indeed, $\tau_i$ must be $1$ or $-1$ corresponding to the sign of the unique non-zero entry in the $i$-th row of the matrix $g$. For this $\tau$, the matrix $g_\sigma := g_\tau^{-1} g$ is a permutation matrix. Analogously, we can also write each $g \in \B$ uniquely as $g = g_\sigma g_{\tau'}$ with $\tau' \in \GA$, $\sigma \in \GS$. However, $\tau' \neq \tau$ in general.

  \subsection{Illustration on quadratic forms} \label{ssec:QuadrInv}

We illustrate the Slice Method on the well-known case of quadratic forms: The slice $\Lambda_2 \subset \VV_2$ is given as
  \[\Lambda_2 = \{\lambda_1 x^2 + \lambda_2 y^2 + \lambda_3 z^2 \mid \lambda_1, 
  \lambda_2, \lambda_3 \in \R\}.\]
By Theorem~\ref{thm:SliceLemma}, the field of rational $\G$-invariants on $\VV_2$ is isomorphic to the field of ratio{}nal $\B$-invariants on $\Lambda_2$. 

We observe that sign-change matrices in $\B$ act trivially on $\Lambda_2$, so
  \[\R(\Lambda_2)^{\B} = \R(\Lambda_2)^{\GS} = \R(\lambda_1, \lambda_2, \lambda_3)^{\GS},\]
where the symmetric group $\GS \subset \B$ acts by permuting $\lambda_1, \lambda_2, \lambda_3$ (which we view as coordinates on $\Lambda_2$).
By the Fundamental Theorem of symmetric polynomials, we can thus choose the set of elementary symmetric polynomials
  \[\setP[] = \{\lambda_1 + \lambda_2 + \lambda_3,\; \lambda_1 \lambda_2 + \lambda_2 \lambda_3 + \lambda_3 \lambda_1,\; \lambda_1 \lambda_2 \lambda_3\}\]
  as a generating set of rational invariants for $\R(\Lambda_2)^{\B}$. 
By Corollary~\ref{cor:SliceLemmaGen}, there exist unique rational $\G$-invariants $\setPT[] \subset \R(\VV_2)^{\G}$ on $\VV_2$ restricting to these invariants on the subspace $\Lambda_2$, and they form a generating set for $\R(\VV_2)^{\G}$. By uniqueness, the generating set $\setPT[]$ consists up to scalars precisely of the polynomial invariants $e_1, e_2, e_3 \in \R[\VV_2]^{\G}$ described in \eqref{eq:quadrInvariants}, arising as the coefficients of the characteristic polynomial of the symmetric matrix $A$ corresponding to a quadratic form.
Alternatively one could consider 
\[\setP[] = \{\lambda_1 + \lambda_2 + \lambda_3,\; \lambda_1^2+\lambda_2^2+\lambda_3^2,\; \lambda_1^3+\lambda_2^3+\lambda_3^3\},\] the set of Newton sums, 
and 
\[\setPT[]=\{ \hbox{Tr}(A),\;\hbox{Tr}(A^2),\;\hbox{Tr}(A^3) \}.\] 
We shall actually opt for this choice when extending the result to ternary quartics in 
Theorem~\ref{thm:TernaryQuarticInvariantsNonMinimal}.

In this case, one can in fact show that $\setPT[]$ even generates the ring of \emph{polynomial} invariants $\R[\VV_2]^{\G}$. This will however no longer be true for our construction of generating rational invariants in higher degree. Furthermore, the construction of $\B$-invariants on the slice will be more involved, especially because the sign-change matrices in $\B$ no longer act trivially on $\Lambda_{2d}$ for $d \geq 2$. 

\section{Invariants of ternary quartics} \label{sec:Quartics}
   
In this section we implicitly describe a generating set of rational invariants of minimal cardinality for ternary quartics under the action of $\GO$, i.e.\ for the case $2d = 4$. %
Following the approach of Section~\ref{ssec:SliceLemma} this set of rational invariants is uniquely determined 
by a set of rational invariants for the action of $\GB$ on the slice $\Lambda_4$. 
The first step is to construct an appropriate basis of $\Lambda_4$  that is equivariant.
In this basis, a minimal generating set of rational $\GB$-invariants takes a 
particularly compact form, 
and can be chosen to consist of polynomial invariants. 

We provide an additional (near minimal) generating set of $\R(\Lambda_4)^{\GB}$ that extends the invariants for quadratic forms. For both choices of generating invariants, we make explicit how to write any other invariant in terms of these, following \cite{HK08}.

The construction of this section serves as a model for invariants of ternary forms in higher degree $2d > 4$ 
which are treated in Section~\ref{sec:B3Harmonics}.

   \subsection{A \texorpdfstring{$\B$}{B\_3}-equivariant basis for harmonic quartics} \label{ssec:HarmonicBasisQuartics}
   
In order to construct $\B$-invariants on the vector space $\Lambda_4 = \H_4 \oplus \VQ \Lambda_2$, we introduce a basis of $\H_4$ that exhibits certain symmetries with respect to $\GB$. In Section~\ref{sec:B3Harmonics} we  show how to construct bases of $\H_{2d}$ with analogous symmetry properties for arbitrary $d \geq 2$.

\begin{prop} \label{prop:HarmonicBasisQuartic}
  The following nine ternary quartics form a basis for the $\R$-vector space $\H_4$:
  \begin{gather*}
  \Vr{1} := y^4 - 6y^2z^2 + z^4, \quad \Vs{1} := y^3z - yz^3, \quad \Vt{1} := 6x^2yz-y^3z-yz^3, \\
  \Vr{2} := z^4 - 6z^2x^2 + x^4, \quad \Vs{2} := z^3x - zx^3, \quad \Vt{2} := 6y^2zx-z^3x-zx^3, \\
  \Vr{3} := x^4 - 6x^2y^2 + y^4, \quad \Vs{3} := x^3y - xy^3, \quad \Vt{3} := 6z^2xy-x^3y-xy^3.
  \end{gather*}
  The group $\GB \subset \GO$ acts on $\H_4$ with respect to this basis as follows: For $\sigma \in \GS$ and $\tau \in \GA$, the corresponding permutation and sign-change matrices $g_\sigma$ and $g_\tau$ act by
	\begin{align*}
	  g_{\sigma} \Vr{i} &= \Vr{\sigma(i)}, & \qquad g_\sigma \Vs{i} &= \sgn(\sigma) \Vs{\sigma(i)}, & \qquad g_{\sigma}\Vt{i} &= \Vt{\sigma(i)}, \\
		g_\tau \Vr{i} &= \Vr{i}, & \qquad g_\tau \Vs{i} &= \frac{\tau_1 \tau_2 \tau_3}{\tau_i} \Vs{i}, & \qquad g_\tau\Vt{i} &= \frac{\tau_1 \tau_2 \tau_3}{\tau_i} \Vt{i}.
	\end{align*}
\end{prop}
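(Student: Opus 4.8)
The plan is to prove the basis claim by a dimension count and then read off the $\GB$-action by direct substitution. By the Harmonic Decomposition (Theorem~\ref{thm:HarmonicDecomp}) we have $\dim \H_4 = \dim \VV_4 - \dim \VV_2 = 15 - 6 = 9$, so it suffices to show that the nine listed quartics are harmonic and linearly independent. Harmonicity is a one-line computation per polynomial: applying $\Delta$ to $\Vr{1}, \Vs{1}, \Vt{1}$ gives $0$ (for instance $\Delta(y^4-6y^2z^2+z^4) = 12y^2 - 12z^2 - 12y^2 + 12z^2 = 0$), and the cases $i=2,3$ follow from the cyclic symmetry of the definitions in $x,y,z$.

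For linear independence I would sort the degree-$4$ monomials by the parities of their exponents. Since $i+j+k=4$ is even, every such monomial has either all exponents even or exactly two odd, giving four monomial classes with exponent-parities $(0,0,0)$, $(0,1,1)$, $(1,0,1)$, $(1,1,0)$; the spans of these four (disjoint) monomial sets are in direct sum inside $\VV_4$. Now $\Vr{1},\Vr{2},\Vr{3}$ lie in the $(0,0,0)$-block (spanned by $x^4,y^4,z^4,y^2z^2,z^2x^2,x^2y^2$) and are independent there because $y^2z^2$ occurs only in $\Vr{1}$, $z^2x^2$ only in $\Vr{2}$, and $x^2y^2$ only in $\Vr{3}$; and $\Vs{1},\Vt{1}$ lie in the $(0,1,1)$-block, $\Vs{2},\Vt{2}$ in the $(1,0,1)$-block, $\Vs{3},\Vt{3}$ in the $(1,1,0)$-block, each pair being independent since $\Vt{i}$ involves a monomial (e.g.\ $x^2yz$ for $i=1$) absent from $\Vs{i}$. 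Thus the nine quartics are independent, hence a basis of $\H_4$.

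For the action, write $x_1=x$, $x_2=y$, $x_3=z$ and recall that $g_\sigma$ acts on a form by the variable substitution $x_l \mapsto x_{\sigma(l)}$, while $g_\tau = \diag(\tau_1,\tau_2,\tau_3)$ acts by $x_l \mapsto \tau_l x_l$, i.e.\ by scaling a monomial $x^iy^jz^k$ by $\tau_1^i\tau_2^j\tau_3^k$. With indices read cyclically, $\Vr{i} = x_{i+1}^4 - 6x_{i+1}^2x_{i+2}^2 + x_{i+2}^4$, $\Vs{i} = x_{i+1}^3x_{i+2} - x_{i+1}x_{i+2}^3$, and $\Vt{i} = 6x_i^2x_{i+1}x_{i+2} - x_{i+1}^3x_{i+2} - x_{i+1}x_{i+2}^3$. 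Every monomial of $\Vr{i}$ has all exponents even, so $g_\tau \Vr{i} = \Vr{i}$; every monomial of $\Vs{i}$ and of $\Vt{i}$ has its odd exponents exactly in slots $i+1,i+2$, so $g_\tau$ scales it by $\tau_{i+1}\tau_{i+2} = \tau_1\tau_2\tau_3/\tau_i$ (using $\tau_l^2=1$), giving the stated sign-change formulas. For permutations, $\{\sigma(i+1),\sigma(i+2)\} = \{1,2,3\}\setminus\{\sigma(i)\}$, so after the substitution $\Vr{i}$ and $\Vt{i}$ — symmetric in their two variables — become $\Vr{\sigma(i)}$ and $\Vt{\sigma(i)}$, whereas $\Vs{i}$, antisymmetric under $x_{i+1} \leftrightarrow x_{i+2}$, becomes $\pm\Vs{\sigma(i)}$ with sign $+1$ precisely when $\sigma$ respects the cyclic order $1\to2\to3\to1$, i.e.\ when $\sigma$ is even; hence the factor $\sgn(\sigma)$. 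It is enough to verify all of this on the transpositions generating $\GS$ and the elementary sign changes generating $\GA$.

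I expect no genuine obstacle here; the content is bookkeeping. The one point requiring care is the convention for $g_\sigma$ — whether it realizes $\sigma$ or $\sigma^{-1}$ on the variables — which must be extracted from the definition $gf = f\circ g^{-1}$ so that the indices in $g_\sigma\Vr{i}=\Vr{\sigma(i)}$ come out correctly. Since each $\Vr{i}$ and $\Vt{i}$ is symmetric and each $\Vs{i}$ is merely sign-sensitive in the two variables it involves, this ambiguity only ever affects a sign and is settled by a direct check on transpositions. This computation also re-confirms, consistently with Proposition~\ref{prop:InvariantSubspaces}, that the span of the nine quartics is a $\GB$-invariant subspace, which by the dimension count must be all of $\H_4$.
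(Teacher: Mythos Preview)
Your proposal is correct and follows essentially the same route as the paper: verify harmonicity via the Laplacian, use the dimension count $\dim\H_4=9$ to reduce to linear independence, check independence by inspecting which monomials occur, and read off the $\GB$-action from the explicit formulas. Your parity-block decomposition and the cyclic-index bookkeeping for the $g_\sigma$- and $g_\tau$-actions are a more detailed spelling-out of what the paper summarizes as ``easily follows from examining the monomials'' and ``read off the formulas''.
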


Figure~\ref{fig:quartics} illustrates this basis of $\H_4$ as explained in Section~\ref{ssec:motivation}, similar to Figure~\ref{fig:rotateSurfaces}. Here, different colors within one picture correspond to different signs of the harmonic polynomial at corresponding points.

\begin{figure}
\begin{tabular}{m{0.02\textwidth}m{0.25\textwidth}m{0.02\textwidth}m{0.25\textwidth}m{0.02\textwidth}m{0.25\textwidth}}
$\Vr{1}$ & \includegraphics[width=0.2\textwidth]{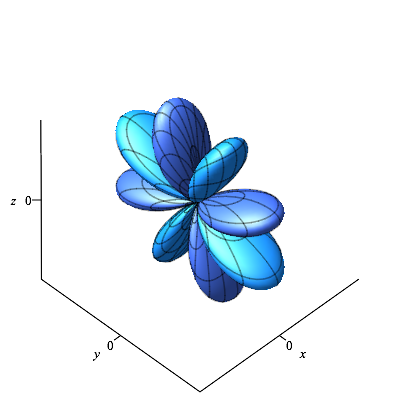} &
$\Vs{1}$ & \includegraphics[width=0.2\textwidth]{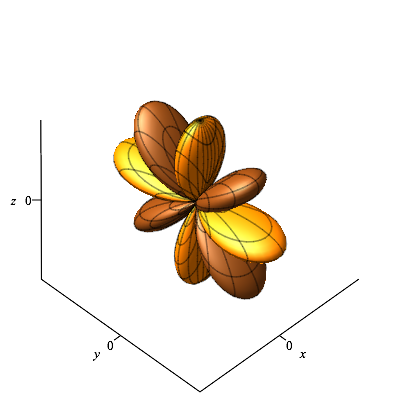} &
$\Vt{1}$ & \includegraphics[width=0.2\textwidth]{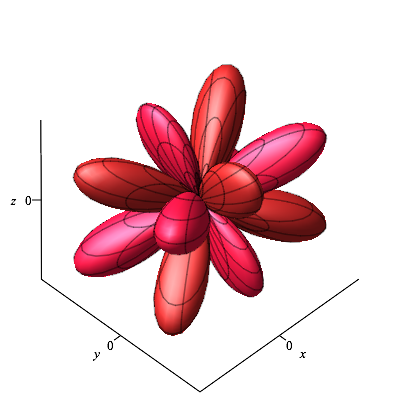} \\
$\Vr{2}$ & \includegraphics[width=0.2\textwidth]{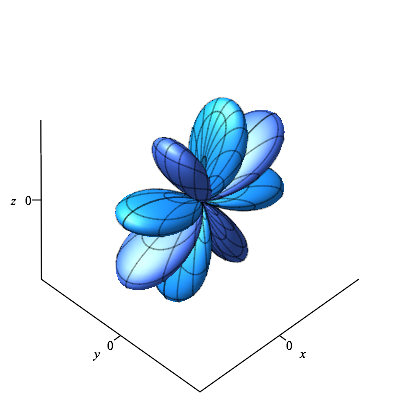} &
$\Vs{2}$ & \includegraphics[width=0.2\textwidth]{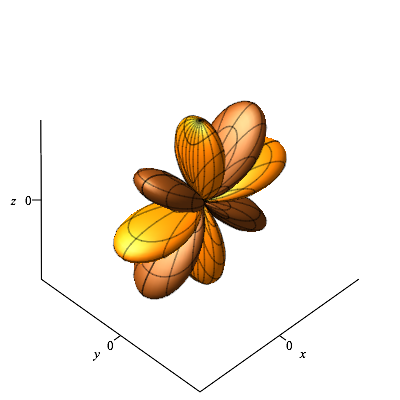} &
$\Vt{2}$ & \includegraphics[width=0.2\textwidth]{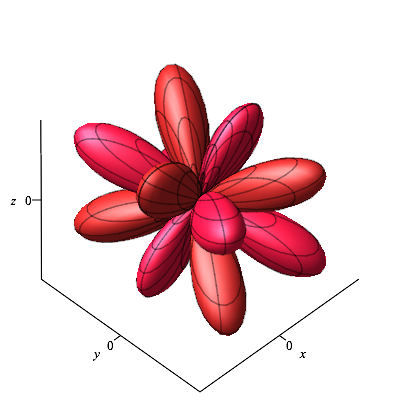} \\
$\Vr{3}$ & \includegraphics[width=0.2\textwidth]{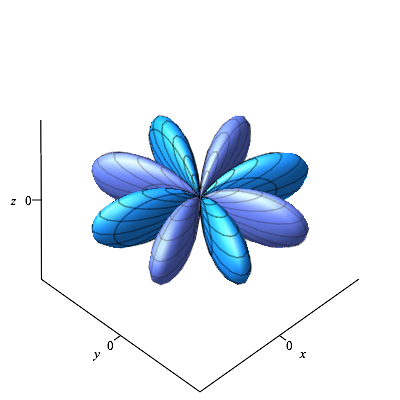} &
$\Vs{3}$ & \includegraphics[width=0.2\textwidth]{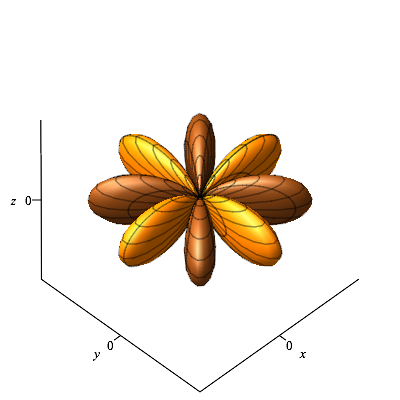} &
$\Vt{3}$ & \includegraphics[width=0.2\textwidth]{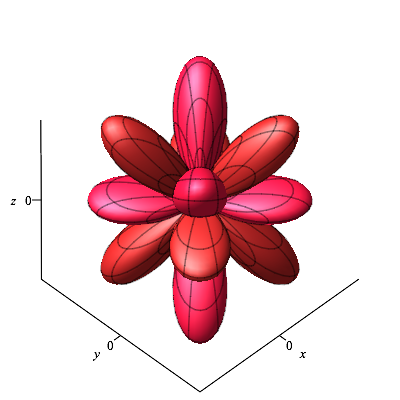}
\end{tabular}
\caption{Harmonic quartics} \label{fig:quartics}
\end{figure}

\begin{proof} 
  One can first check that the polynomials $\Vr{i}, \Vs{i}, \Vt{i}$ ($1 \leq i \leq 3$) are in the kernel of the Laplacian operator $\Delta=\frac{\partial^2\hbox{ }}{\partial \xx^2}+\frac{\partial^2\hbox{ }}{\partial \yy^2}+\frac{\partial^2\hbox{ }}{\partial \zz^2}$. As $\H_4$ is a 9-dimensional $\R$-vector space, it suffices to check that all $\Vr{i}, \Vs{i}, \Vt{i}$ ($1 \leq i \leq 3$) are linearly independent. This easily follows from examining the monomials occurring in their expressions.

  Note that a permutation matrix in $\GB$ acts by applying the corresponding permutation to the variables $x$, $y$, $z$, and a sign-change matrix acts by replacing some of the variables $x$, $y$, $z$ by their negatives. The claim about the action of $\GB$ is then read off the formulas for the basis elements.
\end{proof}

A polynomial $v$ in $\Lambda_4 = \VQ \Lambda_2 \oplus \H_4$ can thus be written as
\begin{equation} \label{coordsPoly}
  v = (\xx^2+\yy^2+\zz^2) \sum_{i=1}^3 \Cl[i] x_i^2 +\sum_{i=1}^3 \left(\Cr[i]  \Vr{i}+\Cs[i] \Vs{i}+\Ct[i] \Vt{i} \right).
\end{equation}

We introduce the vector of coefficients $\Cl=\begin{pmatrix} \Cl[1] & \Cl[2] &\Cl[3]\end{pmatrix}^T$ and similarly $\Cr$, $\Cs$ and $\Ct$. Then Proposition~\ref{prop:HarmonicBasisQuartic} shows: For $\sigma\in\GS$, if $g_\sigma v$ has coefficients $\left(\tilde\Cl,\tilde\Cr,\tilde\Cs,\tilde\Ct\right)$,
then
  \[\tilde\Cl = g_\sigma \Cl, \quad  \tilde\Cr = g_\sigma \Cr, \quad \tilde \Cs = \sgn(\sigma) g_\sigma \Cs, \quad \tilde \Ct = g_\sigma \Ct.\]
For $\tau\in\GA$,  if $g_\tau v$ is determined by $\left(\bar\Cl,\bar\Cr,\bar\Cs,\bar\Ct\right)$,
then 
  \[\bar\Cl = \Cl, \quad \bar \Cr = \Cr, \quad \bar \Cs = \det(g_\tau) g_\tau \Cs, \quad \bar \Ct = \det(g_\tau) g_\tau \Ct.\]

   \subsection{\texorpdfstring{$\B$}{B\_3}-invariants on the slice} \label{ssec:QuarticInvariants}

The rational invariants of $\GB$ on $\Lambda_4$ can be obtained computationally 
by applying the general algorithm for generating sets of rational invariants 
presented in \cite{HK08}. The results obtained with this approach suggest 
nice structures when the basis of Proposition~\ref{prop:HarmonicBasisQuartic} 
is used. We accordingly present generating sets of invariants as the results of 
the composition of equivariant maps. 
We first present a minimal generating set, which consists of $12$ algebraically independent polynomials, 
and then  a generating set that consist of $13$ rational invariants, including the invariants for $\Lambda_2$. 

\begin{lemma} \label{lem:EqMap}
	The maps $\Eqm[1], \Eqm[2] \colon \Lambda_2 \to \R^{3\times 3}$ and $\Eqm[3], \Eqm[4] \colon \Lambda_4 \to \R^3$ whose values at $v$ as in (\ref{coordsPoly}) are respectively given by 
	\[\Eqm[1](v) = \begin{pmatrix} 
								 1 & \Cl[1] & \Cl[1]^2 \\
								 1 & \Cl[2] & \Cl[2]^2 \\
								 1 & \Cl[3] & \Cl[3]^2 \\
               \end{pmatrix},   
  \quad %
  \Eqm[2](v) =
         \begin{pmatrix} 
           1 & \Ct[1]^2 & \Ct[1]^4 \\
           1 & \Ct[2]^2 & \Ct[2]^4 \\
           1 & \Ct[3]^2 & \Ct[3]^4 \\
         \end{pmatrix},
  \quad  %
  \Eqm[3](v) = \begin{pmatrix}
								 \Cs[1]\Ct[1] [\Cl] \\
								 \Cs[2]\Ct[2] [\Cl] \\
								 \Cs[3]\Ct[3] [\Cl]
							 \end{pmatrix}
  \quad \hbox{ and }\quad
  \Eqm[4](v) = \begin{pmatrix}
	               \Cs[1]\Ct[1] [\Ct^2] \\
								 \Cs[2]\Ct[2] [\Ct^2] \\
							 	 \Cs[3]\Ct[3] [\Ct^2]
							\end{pmatrix},\]
	where
    \[
			[\Cl] := (\Cl[1]-\Cl[2])(\Cl[2]-\Cl[3])(\Cl[3]-\Cl[1])=\det \Eqm[1](v)
			\quad \hbox{ and }\quad
			[\Ct^2] := 
			(\Ct[1]^2-\Ct[2]^2)(\Ct[2]^2-\Ct[3]^2)(\Ct[3]^2-\Ct[1]^2)=\det\Eqm[2](v),
		\]
 are equivariant in the sense that $\Eqm[i] (g_{\sigma}v) = g_{\sigma} 
 \Eqm[i](v)$ and $\Eqm[i](g_{\tau}v) =  \Eqm[i](v)$.
\end{lemma}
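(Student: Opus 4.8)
The plan is to verify the asserted equivariances directly from the explicit transformation rules for the coefficient vectors $\Cl, \Cr, \Cs, \Ct$ recorded just before the statement. Recall that for $\sigma \in \GS$ one has $\tilde\Cl = g_\sigma\Cl$, $\tilde\Cs = \sgn(\sigma)g_\sigma\Cs$, $\tilde\Ct = g_\sigma\Ct$, and for $\tau \in \GA$ one has $\bar\Cl = \Cl$, $\bar\Cs = \det(g_\tau)g_\tau\Cs$, $\bar\Ct = \det(g_\tau)g_\tau\Ct$. Since $g_\sigma$ is a permutation matrix, $g_\sigma$ acting on a vector just permutes entries by $\sigma$, i.e. the $i$-th entry of $g_\sigma w$ is $w_{\sigma^{-1}(i)}$; and $g_\tau = \diag(\tau_1,\tau_2,\tau_3)$ scales the $i$-th entry by $\tau_i$. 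I would organize the proof as four short verifications, one per map, treating permutations and sign-changes separately, and I would first isolate the behaviour of the two scalar quantities $[\Cl]$ and $[\Ct^2]$ since they feed into $\Eqm[3]$ and $\Eqm[4]$.

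First, for $[\Cl] = (\Cl[1]-\Cl[2])(\Cl[2]-\Cl[3])(\Cl[3]-\Cl[1])$: this is the Vandermonde determinant $\det\Eqm[1](v)$, so under $\sigma$ it transforms by $\sgn(\sigma)$ (permuting the rows of the Vandermonde matrix), and under $\tau$ it is unchanged since $\bar\Cl = \Cl$. Likewise $[\Ct^2] = (\Ct[1]^2-\Ct[2]^2)(\Ct[2]^2-\Ct[3]^2)(\Ct[3]^2-\Ct[1]^2) = \det\Eqm[2](v)$ is a Vandermonde in the $\Ct[i]^2$; under $\sigma$ the entries $\Ct[i]^2$ are permuted (the sign $\sgn(\sigma)$ in the transformation of $\Ct$ does not arise here, but $\Ct$ itself transforms without a sign, so $\Ct[i]^2 \mapsto \Ct[\sigma^{-1}(i)]^2$), giving a factor $\sgn(\sigma)$; under $\tau$ each $\Ct[i]$ is multiplied by $\pm\tau_1\tau_2\tau_3$, hence $\Ct[i]^2$ is unchanged, so $[\Ct^2]$ is invariant. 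I would state these two facts as a preliminary observation.

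Next, the matrix maps. For $\Eqm[1]$ and $\Eqm[2]$ the invariance under $\tau$ is immediate: the entries $\Cl[i]$ and $\Ct[i]^2$ are all $\tau$-invariant by the above. For the $\sigma$-equivariance: applying $g_\sigma$ to $v$ permutes the rows of $\Eqm[1](v)$ (since row $i$ is determined by $\Cl[i]$, which becomes $\Cl[\sigma^{-1}(i)]$), and left-multiplication of a matrix by the permutation matrix $g_\sigma$ does exactly the same row permutation; so $\Eqm[1](g_\sigma v) = g_\sigma\Eqm[1](v)$, and identically for $\Eqm[2]$ using that $\Ct[i]^2$ is permuted without sign. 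For $\Eqm[3]$ and $\Eqm[4]$, the $i$-th entry is $\Cs[i]\Ct[i]$ times the global scalar $[\Cl]$, resp. $[\Ct^2]$. Under $\tau$: $\Cs[i]\mapsto (\pm\tau_1\tau_2\tau_3)\Cs[i]$ and $\Ct[i]\mapsto(\pm\tau_1\tau_2\tau_3)\Ct[i]$ with the same sign, so the product $\Cs[i]\Ct[i]$ picks up $(\tau_1\tau_2\tau_3)^2 = 1$, hence is unchanged; combined with the $\tau$-invariance of $[\Cl]$ and $[\Ct^2]$ this gives $\Eqm[i](g_\tau v) = \Eqm[i](v)$. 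Under $\sigma$: $\Cs[i]\Ct[i] \mapsto (\sgn\sigma\,\Cs[\sigma^{-1}(i)])(\Ct[\sigma^{-1}(i)]) = \sgn(\sigma)\Cs[\sigma^{-1}(i)]\Ct[\sigma^{-1}(i)]$, so the vector $(\Cs[i]\Ct[i])_i$ becomes $\sgn(\sigma)\,g_\sigma(\Cs[i]\Ct[i])_i$; multiplying by the scalar $[\Cl]$ (resp. $[\Ct^2]$), which itself acquires a factor $\sgn(\sigma)$, the two signs cancel, leaving $\Eqm[i](g_\sigma v) = g_\sigma\Eqm[i](v)$.

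I do not anticipate a genuine obstacle here — the statement is a bookkeeping consequence of Proposition~\ref{prop:HarmonicBasisQuartic} and the transformation formulas. The only point requiring a little care, and the one I would write out most carefully, is the cancellation of sign factors: the appearance of $\sgn(\sigma)$ in $\tilde\Cs$ and of $\det(g_\tau) = \tau_1\tau_2\tau_3$ in both $\bar\Cs$ and $\bar\Ct$ is exactly what is needed so that the products $\Cs[i]\Ct[i]$ behave well and so that the $\sgn(\sigma)$ from the Vandermonde-type scalars $[\Cl]$ and $[\Ct^2]$ is absorbed. I would also remark that this cancellation is precisely the design principle behind choosing these particular combinations, which is why they will serve as building blocks for the invariants in the next subsection.
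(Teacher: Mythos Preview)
Your proposal is correct and takes essentially the same approach as the paper, which simply states that the lemma is a direct consequence of the transformation rules for $\Cl,\Cr,\Cs,\Ct$ derived from Proposition~\ref{prop:HarmonicBasisQuartic}. You have spelled out in full the sign-cancellation bookkeeping that the paper leaves implicit; the only cosmetic point is that your ``$\pm\tau_1\tau_2\tau_3$'' factor for $\Cs[i]$ and $\Ct[i]$ is more precisely $\tau_1\tau_2\tau_3\,\tau_i = \frac{\tau_1\tau_2\tau_3}{\tau_i}$, but since it is the \emph{same} $\pm 1$ factor for both, your conclusion that $\Cs[i]\Ct[i]$ is $\tau$-invariant is unaffected.
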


\begin{proof}
  This is a direct consequence of the discussion in Section~\ref{ssec:HarmonicBasisQuartics} 
  of the action of $\B$ on the basis of Proposition~\ref{prop:HarmonicBasisQuartic}.
\end{proof}

It follows that the entries of $\Eqm[i]^T\Eqm[j]$, for $i,j\in\{1,2,3,4\}$, and of  $\Eqm[i]^{-1}\Eqm[j]$, for $i\in\{1,2\}$ and $j\in\{1,2,3,4\}$, are rational invariants on $\Lambda_4$. With this mechanism, we provide a generating set of invariants of minimal cardinality.

\begin{thm} \label{thm:TernaryQuarticInvariants}
  A generating set of rational invariants for $\R(\Lambda_4)^{\B}$ is given by the $12$ polynomial functions $\setP[4] = \{\Il[i],\Ir[i],\Is[i],\It[i] \mid 1 \leq i \leq 3\}$, whose values at $v$ as in (\ref{coordsPoly}) are given by 
  \begin{align*}
  \It[1] &:= \Ct[1]^2+\Ct[2]^2+\Ct[3]^2, \\
  \It[2] &:= \Ct[1] \Ct[2]\Ct[3],\\
  \It[3] &:= \Ct[1]^4+\Ct[2]^4+\Ct[3]^4, 
  \end{align*}
  and the entries of the $3\times 3$ matrix 
  \[\begin{pmatrix}
  \Il[1] & \Ir[1] & \Is[1] \\
  \Il[2] & \Ir[2] & \Is[2] \\
  \Il[3] & \Ir[3] & \Is[3]
  \end{pmatrix}
  := \begin{pmatrix}
    1 & 1 & 1 \\
    \Ct[1]^2 & \Ct[2]^2 & \Ct[3]^2 \\
    \Ct[1]^4 & \Ct[2]^4 & \Ct[3]^4
    \end{pmatrix}
  \begin{pmatrix}
    \Cl[1] & \Cr[1] & \Cs[1]\Ct[1] [\Ct^2] \\
    \Cl[2] & \Cr[2] & \Cs[2]\Ct[2] [\Ct^2]  \\
    \Cl[3] & \Cr[3] & \Cs[3]\Ct[3] [\Ct^2] 
    \end{pmatrix} 
    \]
where $[\Ct^2] := (\Ct[1]^2-\Ct[2]^2)(\Ct[2]^2-\Ct[3]^2)(\Ct[3]^2-\Ct[1]^2)$.
\end{thm}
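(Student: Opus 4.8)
The plan is to establish this in two halves, corresponding to the two kinds of symmetry the group $\B$ exhibits on $\Lambda_4$. First I would exhibit that the proposed $12$ functions are indeed rational (in fact polynomial) $\B$-invariants: for $\It[1], \It[2], \It[3]$ this is immediate from Lemma~\ref{lem:EqMap} together with the observation that $g_\tau \Ct{i} = \pm \Ct{i}$ only affects signs, so that symmetric functions of $\Ct{1}^2, \Ct{2}^2, \Ct{3}^2$ are $\B$-invariant; and $\It[1], \It[2], \It[3]$ are precisely (up to scalars) the Newton sums / elementary symmetric polynomials in the $\Ct{i}^2$. For the matrix entries $\Il{i}, \Ir{i}, \Is{i}$, I would note that the matrix equality in the statement expresses the $9$ functions as the entries of $\Eqm[2](v)^T \cdot M(v)$ where $M$ is the $3\times 3$ matrix with columns $(\Cl{i}), (\Cr{i}), (\Cs{i}\Ct{i}[\Ct^2])$. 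Each column of $M$ is an equivariant map to $\R^3$ in the sense of Lemma~\ref{lem:EqMap} (for the first two columns this is the $\Eqm[1]$-column and the $\Cr$-analogue; for the third it is $\Eqm[4](v)$, which is exactly $\Cs{i}\Ct{i}[\Ct^2]$), and $\Eqm[2](v)$ is also equivariant. Hence $\Eqm[2]^T M$ transforms as $g_\sigma \mapsto (g_\sigma \Eqm[2])^T (g_\sigma M) = \Eqm[2]^T g_\sigma^T g_\sigma M = \Eqm[2]^T M$ since $g_\sigma$ is orthogonal, and similarly is fixed by $g_\tau$; so all $9$ entries are $\B$-invariant.

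Next comes the substantive part: showing this set \emph{generates} $\R(\Lambda_4)^{\B}$, for which by Theorem~\ref{thm:rationalOrbitSeparation} it suffices to show the $12$ functions separate general $\B$-orbits on $\Lambda_4$, i.e. to reconstruct $(\Cl, \Cr, \Cs, \Ct) \in \R^{12}$ up to the $\B$-action from the values of the invariants, for a general point. I would proceed in stages. First, from $\It[1], \It[2], \It[3]$ one recovers the unordered triple $\{\Ct{1}^2, \Ct{2}^2, \Ct{3}^2\}$ (these are the power sums of degrees $1,2$ and the product; for a general point the three values $\Ct{i}^2$ are distinct and nonzero). This determines $[\Ct^2] = (\Ct{1}^2 - \Ct{2}^2)(\Ct{2}^2 - \Ct{3}^2)(\Ct{3}^2 - \Ct{1}^2) \ne 0$ up to the choice of an ordering of the triple, i.e. up to $\sgn(\sigma)$; fixing any ordering fixes a representative of the $\GS$-action, and then each $\Ct{i} = \pm\sqrt{\Ct{i}^2}$ is pinned down up to the $\GA$-action — note $\It[2] = \Ct{1}\Ct{2}\Ct{3}$ gives one relation among the three signs, which is exactly the content of the constraint $g_\tau \Ct{i} = (\tau_1\tau_2\tau_3/\tau_i)\Ct{i}$ (the sign-change action on $(\Ct{1}, \Ct{2}, \Ct{3})$ is the kernel-of-product subgroup), so no information beyond the orbit is lost. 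Second, having fixed such a representative, the matrix $\Eqm[2](v)$ is now a known \emph{invertible} Vandermonde-type matrix (invertible since the $\Ct{i}^2$ are distinct), so from the known product $\Eqm[2]^T M$ I solve $M = (\Eqm[2]^T)^{-1} \cdot (\text{invariant matrix})$, recovering all three columns: $(\Cl{i})$ directly, $(\Cr{i})$ directly, and $(\Cs{i}\Ct{i}[\Ct^2])$; dividing the last by the known nonzero $\Ct{i}[\Ct^2]$ recovers $(\Cs{i})$. That recovers all of $\Cl, \Cr, \Cs, \Ct$ in the chosen representative, completing the orbit separation.

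The main obstacle — and the place to be careful — is the bookkeeping of which $\B$-ambiguity survives at each step and the verification that the surviving ambiguity is exactly a single $\B$-orbit, no more and no less. Concretely: (a) one must check that for general $v$ the values $\Ct{1}^2, \Ct{2}^2, \Ct{3}^2$ are pairwise distinct and nonzero, so that $[\Ct^2] \ne 0$ and $\Eqm[2](v)$ is invertible — this is a genericity statement requiring that a certain explicit polynomial in the coefficients not vanish identically, which follows by exhibiting one point where it doesn't; (b) one must confirm that once an ordering of the $\Ct{i}^2$ and signs of the $\Ct{i}$ compatible with $\It[2]$ are chosen, the reconstruction of $M$ is forced and hence any two points with equal invariants differ by the composite of the permutation/sign-change moves used, i.e. by an element of $\B$ — here the subtle point is that $g_\sigma$ acting on $\Cs$ carries the extra factor $\sgn(\sigma)$ and $g_\tau$ on $\Cs$ the factor $\det(g_\tau) g_\tau$, so one must check these are precisely compensated by the corresponding transformation of $\Ct{i}[\Ct^2]$ (since $[\Ct^2]$ picks up $\sgn(\sigma)$ under $g_\sigma$ and is $g_\tau$-invariant), making $\Cs{i}\Ct{i}[\Ct^2]$ transform like $\Cl{i}$ — which is exactly the equivariance of $\Eqm[4]$ already recorded in Lemma~\ref{lem:EqMap}. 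Granting those checks, minimality ($12 = \binom{4+2}{2} - 3 = \dim\Lambda_4$, matching Theorem~\ref{thm:NumberInvariants} via the Slice Lemma isomorphism of Proposition~\ref{prop:MainSlice}) and algebraic independence follow because $12$ algebraically independent functions is the maximum possible on a $12$-dimensional variety, and algebraic independence is itself a consequence of the orbit-separation argument showing the map $\Lambda_4 \dashrightarrow \R^{12}$ is dominant (its image contains a dense subset, forcing the $12$ coordinate functions to be algebraically independent).
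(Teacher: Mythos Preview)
Your proof is correct but takes a genuinely different route from the paper's. The paper proves generation by \emph{direct rewriting}: (Step~1) any rational $\B$-invariant is a quotient of polynomial invariants since $\B$ is finite; (Step~2) any polynomial invariant in $\Ct[1],\Ct[2],\Ct[3]$ alone is a polynomial in $\It[1],\It[2],\It[3]$, by first reducing modulo sign-changes to a polynomial in $\Ct[i]^2$ and $\It[2]$, then applying the Fundamental Theorem of symmetric polynomials to the $\Ct[i]^2$; (Step~3) for a general polynomial invariant, one inverts $\Eqm[2](v)^T$ in the matrix relation to solve for $\Cl[i],\Cr[i],\Cs[i]$ in terms of $\Il[i],\Ir[i],\Is[i]$ and rational functions of $\Ct$, leaving invariant coefficients in $\Ct$ alone to which Step~2 applies. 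You instead invoke Theorem~\ref{thm:rationalOrbitSeparation} and argue by orbit separation, reconstructing $(\Cl,\Cr,\Cs,\Ct)$ from the invariant values up to the $\B$-action.

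Both arguments rest on the same key algebraic fact---generic invertibility of the Vandermonde-type matrix $\Eqm[2](v)$---but deploy it in opposite directions. The paper's approach is constructive: it produces an explicit rewriting procedure, which is then reused for the Rewriting Problem in Section~\ref{sec:AlgorithmicSolutions}. Your approach is conceptually cleaner as a pure generation proof and makes the minimality/algebraic-independence claim transparent (generically finite fibers on a map between $12$-dimensional spaces forces dominance), but does not by itself yield rewrite rules. One small caveat: Theorem~\ref{thm:rationalOrbitSeparation} is stated in the paper only for $\G$ acting on $\VV_d$; the underlying Rosenlicht/Popov--Vinberg result you need is general (and for finite groups is elementary, since $\R(\Lambda_4)/\R(\Lambda_4)^{\B}$ is a finite extension), but you should say this explicitly rather than cite the theorem as written.
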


\begin{proof}
We first observe that the polynomials $\Il[i],\Ir[i],\Is[i] \in \R[\Lambda_4]$ are the entries of the matrix $\Eqm[2]^T(v) \Eqm(v)$, where $\Eqm[2]$ was introduced in Lemma~\ref{lem:EqMap} and $\Eqm \colon \Lambda_2 \to \R^{3\times 3}$ is given by 
\[\Eqm[](v) := \begin{pmatrix}
    \Cl[1] & \Cr[1] & \Cs[1]\Ct[1] [\Ct^2] \\
    \Cl[2] & \Cr[2] & \Cs[2]\Ct[2] [\Ct^2]  \\
    \Cl[3] & \Cr[3] & \Cs[3]\Ct[3] [\Ct^2] 
  \end{pmatrix}.\]
With Lemma~\ref{lem:EqMap} one easily checks that $\Eqm[]$ is equivariant in the sense that $\Eqm[] ({\sigma}v) = g_{\sigma} \Eqm[](v)$ and $\Eqm[]({\tau}v) =  \Eqm[](v)$. Hence the entries of $\Eqm[2]^T(v) \Eqm(v)$ are invariants. It is also straightforward to check that $\It[1], \It[2], \It[3] \in \R[\Lambda_4]$ are polynomial invariants.

We now show in three steps that any rational invariant can be written as a rational expression in terms of $\It[i],\Il[i],\Ir[i],\Is[i]$, for $i\in\{1,2,3\}$.

	\emph{Step 1: Reducing the problem to rewriting invariant polynomials.}

	For a finite group, the field of rational invariants is the quotient field of the ring of invariants \cite[Theorem 3.3]{PV94}. That means that for any rational invariant $p \in \R(\Lambda_4)^{\GB}$ there exist $p_1,p_0\in \R[\Lambda_4]^{\GB}$ such that $p=\frac{p_1}{p_0}$. 
We are thus left to show that any polynomial invariant
 can be written as a rational expression of $\Il[i],\Ir[i],\Is[i],\It[i]$, for $i\in\{1,2,3\}$. 

  \emph{Step 2: If $p \in \R[\Lambda_4]^{\GB}$ is an invariant whose value at $v$ as in (\ref{coordsPoly}) is given by a polynomial expression in $\Ct[1], \Ct[2], \Ct[3]$ only, then $p$ can be written polynomially in terms of $\It[1], \It[2], \It[3]$.}
  
  First, we consider the monomials $\Ct[1]^i \Ct[2]^j \Ct[3]^k$ of the polynomial expression $p$. By Proposition~\ref{prop:HarmonicBasisQuartic}, a sign-change matrix $g_\tau = \diag(\tau_1, \tau_2, \tau_3) \in \GB$ acts on $q$ by replacing $\Ct[1]^i \Ct[2]^j \Ct[3]^k$ by
    \[\tau_1^{j+k} \tau_2^{i+k} \tau_3^{i+j}  \Ct[1]^i \Ct[2]^j \Ct[3]^k,\]
  so $p$ can only be invariant with respect to all sign-change matrices if for all its monomials $\Ct[1]^i \Ct[2]^j \Ct[3]^k$, the numbers $i+j$, $i+k$ and $j+k$ are even numbers, i.e.\ $i \equiv j \equiv k \pmod 2$. In particular, we can write $p$ as a polynomial in
    \[\It[2] = \Ct[1] \Ct[2] \Ct[3], \quad \delta_1 := \Ct[1]^2, \quad \delta_2 := \Ct[2]^2, \quad \delta_3 := \Ct[3]^2.\]
  
  If $g_\sigma\in \GB$ is a permutation matrix, then $g_\sigma$ acts according to Proposition~\ref{prop:HarmonicBasisQuartic} on $p$ by replacing $\delta_i$ by $\delta_{\sigma_i}$. Therefore, $p$ is a symmetric polynomial expression in the three variables $\delta_1, \delta_2, \delta_3$. By the Fundamental Theorem of symmetric functions,
  $p$ can therefore be written as a polynomial expression in the three symmetric power sum polynomials
  \begin{align*}
    &\delta_1 + \delta_2 + \delta_3 = \It[1], \\
    &\delta_1^2 + \delta_2^2 + \delta_3^2 = \It[3] \text{ and} \\
    &\delta_1^3 + \delta_2^3 + \delta_3^3 = \frac{3}{2} \It[1] \It[3] - \frac{1}{2} \It[1]^3 + 3\It[2]^2.
  \end{align*}
  With this, we have expressed $p$ as a polynomial expression in terms of $\It[1], \It[2], \It[3]$.

  \emph{Step 3: If $p \in \R[\Lambda_4]^{\GB}$ is a polynomial expression in $\Cl,\Cr,\Cs,\Ct$, then $p$ can be written as a rational expression in the invariants $\Il,\Ir,\Is,\It$.}

  We have
  \[
  \begin{pmatrix}
    \Cl[1] & \Cr[1] & \Cs[1]\Ct[1] [\Ct^2] \\
    \Cl[2] & \Cr[2] & \Cs[2]\Ct[2] [\Ct^2]  \\
    \Cl[3] & \Cr[3] & \Cs[3]\Ct[3] [\Ct^2] 
    \end{pmatrix} 
  =  \Eqm[2]^{-T}(v)
    \begin{pmatrix}
  \Il[1] & \Ir[1] & \Is[1] \\
  \Il[2] & \Ir[2] & \Is[2] \\
  \Il[3] & \Ir[3] & \Is[3]
  \end{pmatrix},
    \]
  where $\Eqm[2](v)$ is a matrix that only involves the variables $\Ct[1], \Ct[2], \Ct[3]$. With this, we can replace each occurrence of $\Cl[i],\Cr[i],\Cs[i]$ for $i\in \{1,2,3\}$, by a linear combination of  $\Il[i],\Ir[i],\Is[i]$ with coefficients that are rational expressions in $\Ct[1], \Ct[2], \Ct[3]$. Hence, $p$ is written as a polynomial in  $\Il[i],\Ir[i],\Is[i]$ with coefficients
  that are rational expressions in $\Ct[1], \Ct[2], \Ct[3]$. Since the $\Il[i],\Ir[i],\Is[i]$ are algebraically independent, these rational expressions of  $\Ct[1], \Ct[2], \Ct[3]$ must be invariant. 
  By Step~1 and 2 they can be written as rational expressions of $\It[1], \It[2], \It[3]$.
\end{proof}

Except maybe for \emph{Step~1}, the proof above  shows how to rewrite any 
rational invariants in terms of the minimal generating set. One could 
alternatively rely on a computational proof following \cite{HK08}; that way 
shows that rewriting a rational invariant in terms of this minimal generating 
set of invariants can be done by applying the following rewrite rules to both 
the numerator and denominator:
\[\begin{pmatrix}\Cl[1] \\ \Cl[2] \\ \Cl[3] \end{pmatrix}
          \longrightarrow 
          \frac{[\Ct^2]}{\It[0]}\begin{pmatrix} 
(\Ct[3]^2-\Ct[2]^2)\Ct[2]^2\Ct[3]^2 & \Ct[2]^4-\Ct[3]^4 & \Ct[3]^2-\Ct[2]^2\\
(\Ct[1]^2-\Ct[3]^2)\Ct[3]^2\Ct[1]^2 & \Ct[3]^4-\Ct[1]^4 & \Ct[1]^2-\Ct[3]^2\\
(\Ct[2]^2-\Ct[1]^2)\Ct[1]^2\Ct[2]^2 & \Ct[1]^4-\Ct[2]^4 & \Ct[2]^2-\Ct[1]^2\\
\end{pmatrix} \begin{pmatrix} \Il[1] \\ \Il[2]\\ \Il[3]\end{pmatrix},
\]
\[\begin{pmatrix}\Cr[1] \\ \Cr[2] \\ \Cr[3] \end{pmatrix}
          \longrightarrow 
        \frac{[\Ct^2]}{\It[0]}\begin{pmatrix} 
(\Ct[3]^2-\Ct[2]^2)\Ct[2]^2\Ct[3]^2 & \Ct[2]^4-\Ct[3]^4 & \Ct[3]^2-\Ct[2]^2\\
(\Ct[1]^2-\Ct[3]^2)\Ct[3]^2\Ct[1]^2 & \Ct[3]^4-\Ct[1]^4 & \Ct[1]^2-\Ct[3]^2\\
(\Ct[2]^2-\Ct[1]^2)\Ct[1]^2\Ct[2]^2 & \Ct[1]^4-\Ct[2]^4 & \Ct[2]^2-\Ct[1]^2\\
\end{pmatrix}
  \begin{pmatrix} \Ir[1] \\ \Ir[2]\\ \Ir[3]\end{pmatrix},
\]
\[\begin{pmatrix}\Cs[1]\\ \Cs[2] \\ \Cs[3] \end{pmatrix}  
\longrightarrow  \frac{1}{\It[0]\It[2]}
          \begin{pmatrix}
            \Ct[2]\Ct[3] & 0 & 0 \\ 0& \Ct[3]\Ct[1] & 0  \\  0 & 0 &\Ct[1]\Ct[2]  
           \end{pmatrix} 
           \begin{pmatrix} 
(\Ct[3]^2-\Ct[2]^2)\Ct[2]^2\Ct[3]^2 & \Ct[2]^4-\Ct[3]^4 & \Ct[3]^2-\Ct[2]^2\\
(\Ct[1]^2-\Ct[3]^2)\Ct[3]^2\Ct[1]^2 & \Ct[3]^4-\Ct[1]^4 & \Ct[1]^2-\Ct[3]^2\\
(\Ct[2]^2-\Ct[1]^2)\Ct[1]^2\Ct[2]^2 & \Ct[1]^4-\Ct[2]^4 & \Ct[2]^2-\Ct[1]^2\\
\end{pmatrix}
  \begin{pmatrix} \Is[1] \\ \Is[2]\\ \Is[3]  \end{pmatrix},  
 \]
 \[\Ct[1] \longrightarrow  \frac{1}{\It[2]}(\It[1]\Ct[2]\Ct[3] -\Ct[2]^3\Ct[3]-\Ct[2]\Ct[3]^3),
  \; 
  \Ct[2]^4 \longrightarrow \It[1]\Ct[3]^2-\Ct[2]^2\Ct[3]^2+\It[1]\Ct[3]^2 -\Ct[3]^4+\frac{1}{2}(\It[3]-\It[1]^2),
  \;
  \Ct[3]^6 \longrightarrow  \frac{1}{2}(\It[3]-\It[1]^2)\Ct[3]^2+\It[1]\Ct[3]^4 +\It[2]^2,\]
where
  \[\It[0] := [\Ct^2]^2 = 
  (\Ct[1]^2-\Ct[2]^2)^2(\Ct[2]^2-\Ct[3]^2)^2(\Ct[3]^2-\Ct[1]^2)^2 = 
  \frac{1}{2}\It[3]^3-\frac{1}{4}\It[1]^6-27\It[2]^4+\It[1]^4\It[3] - 
  \frac{5}{4}\It[1]^2\It[3]^2 - 9\It[2]^2\It[1]\It[3] + 5\It[2]^2\It[1]^3.\]
These rewrite rules reflect a (non-reduced) Gröbner basis of the ideal 
of the generic orbit of the action.

The field of rational functions $\R(\Lambda_4)$ has transcendence degree $\dim \Lambda_4 = 12$ over $\R$, and since $\B$ is a finite group, the same is true for the field of invariants $\R(\Lambda_4)^{\B}$. Therefore, the generating set $\setP[4] \subset \R(\Lambda_4)^{\B}$ of rational invariants specified in Theorem~\ref{thm:TernaryQuarticInvariants} is of minimal cardinality. By Corollary~\ref{cor:SliceLemmaGen} and Proposition~\ref{prop:MainSlice}, there are 12 unique rational invariants in $\R(\VV_4)^{\G}$ restricting to the $\B$-invariants $\setP[4]$ on the subspace $\Lambda_4$, and they form a generating set for $\R(\VV_4)^{\G}$. In particular, the cardinality of this generating set attains the lower bound given in Theorem~\ref{thm:NumberInvariants}:

\begin{cor} \label{cor:QuarticFullInvariants}
  There exists a generating set of $\dim \VV_4 - \dim \G = 12$ rational invariants $\setPT[4] \subset \R(\VV_4)^{\G}$ restricting to the invariants $\setP[4] \subset \R[\Lambda_4]^{\B}$ on the subspace $\Lambda_4 \subset \VV_4$ specified in Theorem~\ref{thm:TernaryQuarticInvariants}.
\end{cor}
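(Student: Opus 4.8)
The plan is to obtain this corollary with no new computation, purely by feeding the pieces already in place into the slice machinery. First I would invoke Proposition~\ref{prop:MainSlice}: it identifies $\Lambda_4 \subset \VV_4$ as a slice for the action of $\G$ on $\VV_4$ whose stabilizer is the group $\B$ of signed permutation matrices, and it gives the field isomorphism $\varrho \colon \R(\VV_4)^{\G} \xrightarrow{\cong} \R(\Lambda_4)^{\B}$ obtained by restricting rational functions. Next I would recall from Theorem~\ref{thm:TernaryQuarticInvariants} that $\setP[4] = \{\Il[i],\Ir[i],\Is[i],\It[i] \mid 1 \leq i \leq 3\}$ is a generating set of $\R(\Lambda_4)^{\B}$, and that it has exactly $12$ elements.

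Then I would apply Corollary~\ref{cor:SliceLemmaGen} with $\Ggen = \G$, $\VV = \VV_4$, $\Lambda = \Lambda_4$ and $\Bgen = \B$: it yields immediately that $\setPT[4] := \{\varrho^{-1}(p) \mid p \in \setP[4]\}$ is a generating set of rational invariants for $\R(\VV_4)^{\G}$. Since $\varrho$ is the restriction map, $\varrho^{-1}$ sends each $p \in \setP[4]$ to the unique $\G$-invariant whose restriction to $\Lambda_4$ equals $p$, so by construction every element of $\setPT[4]$ restricts to the corresponding element of $\setP[4]$ on $\Lambda_4$, as required. Finally $|\setPT[4]| = |\setP[4]| = 12$, and a one-line dimension count gives $\dim \VV_4 - \dim \G = \binom{6}{2} - 3 = 15 - 3 = 12$, matching the claimed cardinality.

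The one point that deserves an explicit sentence — and which is the closest thing to an obstacle, though it is routine — is the assertion that $\varrho^{-1}(p)$ really is a rational \emph{invariant on $\VV_4$}, i.e.\ that the recipe $v \mapsto p(gv)$ with $g \in \G$ chosen so that $gv \in \Lambda_4$ defines a genuine element of $\R(\VV_4)^{\G}$. This is exactly what the Slice Lemma (Theorem~\ref{thm:SliceLemma}) guarantees, using property~(ii) of Definition~\ref{def:Slice}, which was checked for $\Lambda_4$ inside the proof of Proposition~\ref{prop:MainSlice}; the verification over $\C$ there is precisely what forces $\varrho^{-1}(p)$ to be rational rather than merely $\G$-locally defined. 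I would also append a remark that, since $\R(\Lambda_4)^{\B}$ has transcendence degree $\dim\Lambda_4 = 12$ over $\R$ and $\varrho$ is an $\R$-isomorphism, $\setPT[4]$ is in fact a \emph{minimal} generating set, attaining the lower bound of Theorem~\ref{thm:NumberInvariants}. All the genuine difficulty lies upstream in Theorem~\ref{thm:TernaryQuarticInvariants} and in setting up the slice; the corollary itself is just bookkeeping.
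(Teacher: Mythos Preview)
Your proposal is correct and follows essentially the same approach as the paper: invoke Proposition~\ref{prop:MainSlice} to identify $\Lambda_4$ as a slice with stabilizer $\B$, apply Corollary~\ref{cor:SliceLemmaGen} to the generating set $\setP[4]$ from Theorem~\ref{thm:TernaryQuarticInvariants}, and verify the dimension count $\dim\VV_4 - \dim\G = 12$. The paper's argument is the brief paragraph immediately preceding the corollary statement; your additional remarks on why $\varrho^{-1}(p)$ is rational and on minimality via transcendence degree are already implicit in the cited results but do no harm.
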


While the above generating set $\setPT[4] \subset \R(\VV_4)^{\G}$ (respectively, $\setP[4] \subset \R(\Lambda_4)^{\B}$) is minimal, we observe that it does not directly contain the three invariants for quadratic forms from Section~\ref{ssec:QuadrInv}, which can be considered as invariants on $\VV_4$ via the decomposition $\VV_4 = \H_4 \oplus \VQ \VV_2$ (respectively on $\Lambda_4$ via $\Lambda_4 = \H_4 \oplus \VQ \Lambda_2$).
We therefore now introduce an alternative, non-minimal, generating set of 
rational $\B$-invariants on $\Lambda_4$, such that a generating set of 
rational  
invariants of $\GB$ on $\Lambda_2$ is obtained by restriction. We also make 
explicit how to rewrite any other invariants in terms of these.

\begin{thm} \label{thm:TernaryQuarticInvariantsNonMinimal}
  A generating  set of rational invariants for $\R(\Lambda_4)^{\GB}$ is given by the $13$ rational functions $\Ilw[i],\Irw[i],\Isw[i],\Itw[i]$, for $i \in \{1,2,3\}$, and $\Itw[0]$ 
  whose values at $v$ as in (\ref{coordsPoly}) are given by 
    \[\Itw[0] =\Ct[1]\Ct[2]\Ct[3],\]
    \[\Ilw[1] = \Cl[1]+\Cl[2]+\Cl[3], \quad  \Ilw[2] = 
       \Cl[1]^2+\Cl[2]^2+\Cl[3]^2, \quad  \Ilw[3] =\Cl[1]^3+\Cl[2]^3+\Cl[3]^3,\]
  and the entries of the $3\times 3$ matrix
   \[\begin{pmatrix}
       \Irw[1] & \Isw[1]  & \Itw[1] \\
       \Irw[2] & \Isw[2]  & \Itw[2]\\
       \Irw[3] & \Isw[3]  & \Itw[3]
     \end{pmatrix}
  := \begin{pmatrix} 
           1 & \Cl[1] & \Cl[1]^2 \\
           1 & \Cl[2] & \Cl[2]^2 \\
           1 & \Cl[3] & \Cl[3]^2
          \end{pmatrix}^{-1}   
    \begin{pmatrix}
          \Cr[1] & \Cs[1]\Ct[1] [\Cl] & \Ct[1]^2  \\
          \Cr[2] & \Cs[2]\Ct[2] [\Cl] & \Ct[2]^2   \\
          \Cr[3] & \Cs[3]\Ct[3] [\Cl] & \Ct[3]^2  
    \end{pmatrix}.
    \]
  where $[\Cl] := (\Cl[1]-\Cl[2])(\Cl[2]-\Cl[3])(\Cl[3]-\Cl[1])$.

  Let 
    \[\Ilw[0] := (\Cl[1]-\Cl[2])^2(\Cl[2]-\Cl[3])^2(\Cl[3]-\Cl[1])^2=\frac{3}{2}\Ilw[2]\Ilw[1]^4-\frac{1}{6}\Ilw[1]^6 + 6\Ilw[3]\Ilw[2]\Ilw[1]-\frac{4}{3}\Ilw[3]\Ilw[1]^3-\frac{7}{2}\Ilw[2]^2\Ilw[1]^2-3\Ilw[3]^2+\frac{1}{2}\Ilw[2]^3.\]
  Then any rational invariant $p = p_0/p_1 \in \R(\Lambda_4)^{\B}$ can be written in terms of the above generating set by applying the following rewrite rules to both the numerator $p_0$ and the denominator $p_1$:
    \[\begin{pmatrix} \Cr[1] \\ \Cr[2] \\ \Cr[3] \end{pmatrix}
      \longrightarrow 
      \begin{pmatrix} 
           1 & \Cl[1] & \Cl[1]^2 \\
           1 & \Cl[2] & \Cl[2]^2 \\
           1 & \Cl[3] & \Cl[3]^2 \\
      \end{pmatrix}  
      \begin{pmatrix} \Irw[1] \\ \Irw[2]\\ \Irw[3]\end{pmatrix},
    \]
  \[\begin{pmatrix} \Cs[1]\\ \Cs[2] \\ \Cs[3] \end{pmatrix}
    \longrightarrow \frac{[\lambda]}{\Itw[0]\Ilw[0]}
    \begin{pmatrix}
       \Ct[2]\Ct[3] & 0 & 0 \\ 0& \Ct[3]\Ct[1] & 0  \\  0 & 0 &\Ct[1]\Ct[2]  
    \end{pmatrix}
    \begin{pmatrix} 
       1 & \Cl[1] & \Cl[1]^2 \\
       1 & \Cl[2] & \Cl[2]^2 \\
       1 & \Cl[3] & \Cl[3]^2 \\
    \end{pmatrix}  
    \begin{pmatrix} \Isw[1] \\ \Isw[2]\\ \Isw[3]  \end{pmatrix},  
  \]
  \[\Ct[1]\Ct[2]\Ct[3] \longrightarrow \Itw[0],\]
  \[\begin{pmatrix}\Ct[1]^2 \\ \Ct[2]^2 \\ \Ct[3]^2  \end{pmatrix}
     \longrightarrow 
          \begin{pmatrix} 
           1 & \Cl[1] & \Cl[1]^2 \\
           1 & \Cl[2] & \Cl[2]^2 \\
           1 & \Cl[3] & \Cl[3]^2 \\
          \end{pmatrix}  
  \begin{pmatrix} \Itw[1] \\ \Itw[2]\\ \Itw[3] \end{pmatrix},
  \qquad 
  \begin{pmatrix} 
    \Ct[2]\Ct[3] \\ \Ct[1]\Ct[3] \\ \Ct[1]\Ct[2] 
  \end{pmatrix}
  \longrightarrow \frac{1}{\Itw[0]}
  \begin{pmatrix} 
    \Ct[1] (\Itw[1]+\Cl[2]\Itw[2]+\Cl[2]^2\Itw[3])(\Itw[1]+\Cl[3]\Itw[2]+\Cl[3]^2\Itw[3])  \\
    \Ct[2] (\Itw[1]+\Cl[1]\Itw[2]+\Cl[1]^2\Itw[3])(\Itw[1]+\Cl[3]\Itw[2]+\Cl[3]^2\Itw[3])  \\
    \Ct[3] (\Itw[1]+\Cl[1]\Itw[2]+\Cl[1]^2\Itw[3])(\Itw[1]+\Cl[2]\Itw[2]+\Cl[2]^2\Itw[3])
  \end{pmatrix},
  \]
  and finally
 \[
  \Cl[1] \longrightarrow \Il[1]-\Cl[3]-\Cl[2], 
  \Cl[2]^2 \longrightarrow \Il[1]\Cl[2]+\Il[1]\Cl[3]-\Cl[3]\Cl[2]-\Cl[3]^2+\frac{1}{2}\Il[2]-\frac{1}{2}\Il[1]^2, 
  \Cl[3]^3 \longrightarrow \frac{1}{3}\Il[3]+\frac{1}{6}\Il[1]^3-\frac{1}{2}\Il[2]\Il[1]
            +\frac{1}{2}(\Il[2]-\Il[1]^2)\Ct[3]+\Il[1]\Ct[3]^2.
  \]
\end{thm}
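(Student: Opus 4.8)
The plan is to follow the proof of Theorem~\ref{thm:TernaryQuarticInvariants} almost verbatim, with the Vandermonde matrix $\Eqm[1]$ in the coordinates $\Cl[1],\Cl[2],\Cl[3]$ playing the role that $\Eqm[2]$ (in the coordinates $\Ct[1],\Ct[2],\Ct[3]$) played there. The one structural difference --- and the reason the extra invariant $\Itw[0]=\Ct[1]\Ct[2]\Ct[3]$ must be adjoined --- is that the coordinates $\Cl$ only detect $\Ct[i]^2$ and the products $\Cs[i]\Ct[i]$, not $\Ct[i]$ and $\Cs[i]$ individually. First I would record that the thirteen functions are $\GB$-invariant: $\Ilw[1],\Ilw[2],\Ilw[3]$ are the power sums of the $\Cl[i]$, which $\GA$ fixes and $\GS$ permutes (Proposition~\ref{prop:HarmonicBasisQuartic}); $\Itw[0]$ is the product $\Ct[1]\Ct[2]\Ct[3]$, which $\GS$ fixes and a sign change $g_\tau$ sends to $(\tau_1\tau_2\tau_3)^2\,\Ct[1]\Ct[2]\Ct[3]=\Ct[1]\Ct[2]\Ct[3]$; and arguing as in Lemma~\ref{lem:EqMap}, the $3\times3$ matrix whose $i$-th row is $(\Cr[i],\ \Cs[i]\Ct[i][\Cl],\ \Ct[i]^2)$ is $\GB$-equivariant (under $g_\sigma$ its rows are permuted, the alternating factor $[\Cl]$ cancelling the sign $\sgn(\sigma)$ that $\Cs$ picks up; under $g_\tau$ its rows are fixed, $\Cs[i]\Ct[i]$ and $\Ct[i]^2$ being sign-invariant), while the rows of $\Eqm[1]^{-1}$ transform contragrediently, so the entries $\Irw[i],\Isw[i],\Itw[i]$ of the product of $\Eqm[1]^{-1}$ with that matrix are invariant.

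Steps~1 and~2 of the rewriting are then exactly as in Theorem~\ref{thm:TernaryQuarticInvariants}: since $\GB$ is finite, $\R(\Lambda_4)^{\GB}$ is the fraction field of $\R[\Lambda_4]^{\GB}$ by \cite[Theorem~3.3]{PV94}, so it suffices to rewrite a polynomial invariant $p$; and a polynomial invariant depending only on $\Cl[1],\Cl[2],\Cl[3]$ is automatically $\GA$-invariant and, being $\GS$-symmetric, is a polynomial in $\Ilw[1],\Ilw[2],\Ilw[3]$ (higher power sums reducing by Newton's identities). Step~3 is the heart. Reading the action on monomials from Proposition~\ref{prop:HarmonicBasisQuartic}, a sign change $g_\tau$ multiplies a monomial $\prod_i\Cl[i]^{a_i}\Cr[i]^{b_i}\Cs[i]^{c_i}\Ct[i]^{d_i}$ of $p$ by $\prod_i\tau_i^{E-(c_i+d_i)}$ with $E=\sum_j(c_j+d_j)$, so $\GA$-invariance forces $c_1+d_1\equiv c_2+d_2\equiv c_3+d_3\pmod{2}$ in every monomial. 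Factoring out one factor $\Cs[i]$ or $\Ct[i]$ per index exactly where these exponents are odd, each monomial of $p$ becomes a product of $\Cl[i]$'s, $\Cr[i]$'s, $\Ct[i]^2$'s, $\Cs[i]\Ct[i]$'s and at most one triple product $\prod_i(\Cs[i]\text{ or }\Ct[i])$; each such triple product is a rational expression in the $\Cl[i]$, the $\Cs[i]\Ct[i]$, the $\Ct[i]^2$ and $\Itw[0]$ --- for instance $\Ct[1]\Ct[2]\Ct[3]=\Itw[0]$, $\Cs[1]\Ct[2]\Ct[3]=(\Cs[1]\Ct[1])\,\Itw[0]/\Ct[1]^2$, $\Cs[1]\Cs[2]\Ct[3]=(\Cs[1]\Ct[1])(\Cs[2]\Ct[2])\,\Ct[3]^2/\Itw[0]$ --- while $\Cs[i]^2=(\Cs[i]\Ct[i])^2/\Ct[i]^2$. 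By the matrix identity in the statement, $\Cr[i]=\Irw[1]+\Cl[i]\Irw[2]+\Cl[i]^2\Irw[3]$, $\Cs[i]\Ct[i][\Cl]=\Isw[1]+\Cl[i]\Isw[2]+\Cl[i]^2\Isw[3]$ and $\Ct[i]^2=\Itw[1]+\Cl[i]\Itw[2]+\Cl[i]^2\Itw[3]$, while $[\Cl]=\det\Eqm[1]$ is a polynomial in $\Cl$; substituting, $p$ is rewritten as a rational function of $\Cl[1],\Cl[2],\Cl[3]$ and the invariants $\Irw,\Isw,\Itw,\Itw[0]$, valid on a dense open set and hence in $\R(\Lambda_4)$.

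To conclude, reduce each $\Itw[0]^2$ via $\Itw[0]^2=\prod_i(\Itw[1]+\Cl[i]\Itw[2]+\Cl[i]^2\Itw[3])$ and collect $p=\sum_\alpha c_\alpha(\Cl)\,m_\alpha$, the $m_\alpha$ being monomials in $\Irw,\Isw,\Itw$ times $\Itw[0]^{\varepsilon}$ ($\varepsilon\in\{0,1\}$) and $c_\alpha\in\R(\Cl)$. These $m_\alpha$ are linearly independent over $\R(\Cl)$: the functions $\Cl[1],\Cl[2],\Cl[3]$ and the nine $\Irw[i],\Isw[i],\Itw[i]$ ($1\le i\le3$) are algebraically independent over $\R$ (from them one recovers through $\Eqm[1]$ the twelve coordinate expressions $\Cl[i],\Cr[i],\Cs[i]\Ct[i],\Ct[i]^2$, and the map $\Lambda_4\dashrightarrow\R^{12}$ onto the latter is dominant since $(\Cs[i],\Ct[i])\mapsto(\Cs[i]\Ct[i],\Ct[i]^2)$ has finite generic fibres), and $\Itw[0]\notin\R(\Cl,\Irw,\Isw,\Itw)=\R(\Cl,\Cr,\Cs\Ct,\Ct^2)$ because that field is fixed by each substitution $(\Cs[i],\Ct[i])\mapsto(\varepsilon_i\Cs[i],\varepsilon_i\Ct[i])$ whereas $\Itw[0]\mapsto\varepsilon_1\varepsilon_2\varepsilon_3\Itw[0]$. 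Since a permutation $g_\sigma$ fixes all of $\Irw,\Isw,\Itw,\Itw[0]$ and permutes the $\Cl[i]$, invariance of $p$ together with the linear independence of the $m_\alpha$ forces each $c_\alpha$ to be symmetric in $\Cl[1],\Cl[2],\Cl[3]$, hence $c_\alpha\in\R(\Ilw[1],\Ilw[2],\Ilw[3])$ (as $\R(\Cl)^{\GS}=\R(\Ilw[1],\Ilw[2],\Ilw[3])$). This writes $p$ in the thirteen generators and, together with their invariance, proves that they generate $\R(\Lambda_4)^{\GB}$. I expect the main obstacle to be exactly this Step~3 bookkeeping --- keeping track of the parities of the $\Cs,\Ct$-exponents and checking that $\Itw[0]$ is genuinely needed (not already rational in $\Cl$ and the other invariants) --- since none of this arose in Theorem~\ref{thm:TernaryQuarticInvariants}, where $\Ct[i]$ itself already belonged to the pivot ring.

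Finally I would match the displayed rewrite rules to this procedure: the $\Cr$-rule and the $\Ct[i]^2$-rule are the first and third columns of the matrix identity; the $\Cs$-rule is its second column combined with $\Ct[i]\cdot\Ct[j]\Ct[k]=\Itw[0]$ and $\Ilw[0]=[\Cl]^2$; the rule $\Ct[1]\Ct[2]\Ct[3]\to\Itw[0]$ together with $\Ct[j]\Ct[k]\to\Ct[i](\Itw[1]+\Cl[j]\Itw[2]+\Cl[j]^2\Itw[3])(\Itw[1]+\Cl[k]\Itw[2]+\Cl[k]^2\Itw[3])/\Itw[0]$ (which equals $\Ct[i]\Ct[j]^2\Ct[k]^2/\Itw[0]=\Ct[j]\Ct[k]$) clears all $\Ct$-content; and the last three rules eliminate $\Cl[1]$ and reduce $\Cl[2],\Cl[3]$ modulo the relations $\Cl[1]+\Cl[2]+\Cl[3]=\Ilw[1]$, $\Cl[1]^2+\Cl[2]^2+\Cl[3]^2=\Ilw[2]$, so that exhaustive application leaves only the generators --- these rules being, as in the minimal case, a (non-reduced) Gröbner basis of the ideal of the generic $\GB$-orbit. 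The stated polynomial for $\Ilw[0]$ is the classical discriminant $e_1^2e_2^2-4e_2^3-4e_1^3e_3+18e_1e_2e_3-27e_3^2$ of a cubic with roots $\Cl[1],\Cl[2],\Cl[3]$, rewritten in the power sums $\Ilw[1],\Ilw[2],\Ilw[3]$ via Newton's identities; this is a routine computation.
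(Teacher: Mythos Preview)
Your argument is essentially correct and considerably more explicit than what the paper offers: the paper does not give a direct proof of this theorem but states that the result was obtained computationally via \cite[Theorem~2.16]{HK08}, recognising the displayed rewrite rules as a (non-reduced) Gr\"obner basis of the generic orbit ideal, whose coefficients then automatically generate $\R(\Lambda_4)^{\B}$. Your route instead mirrors the hand proof of Theorem~\ref{thm:TernaryQuarticInvariants}, swapping the roles of $\Cl$ and $\Ct$ in the Vandermonde pivot and adjoining $\Itw[0]=\Ct[1]\Ct[2]\Ct[3]$ to compensate for the fact that only $\Ct[i]^2$ and $\Cs[i]\Ct[i]$ are recovered from $\Eqm[1]^{-1}$. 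The parity analysis under $g_\tau$, the reduction of the triple products to $\Itw[0]$, and the Galois-type argument that $\Itw[0]\notin\R(\Cl,\Cr,\Cs\Ct,\Ct^2)$ are all sound and are exactly the extra bookkeeping that distinguishes this theorem from Theorem~\ref{thm:TernaryQuarticInvariants}.

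There is one small slip you should tighten. When you ``collect $p=\sum_\alpha c_\alpha(\Cl)\,m_\alpha$'' with $c_\alpha\in\R(\Cl)$ and $m_\alpha$ monomials in $\Irw,\Isw,\Itw$ times $\Itw[0]^{\varepsilon}$, this presentation is not available as stated: the substitution $\Cs[i]^2=(\Cs[i]\Ct[i])^2/\Ct[i]^2$ introduces $\Ct[i]^2=\Itw[1]+\Cl[i]\Itw[2]+\Cl[i]^2\Itw[3]$ into denominators, so the ``coefficients'' genuinely involve $\Itw$. The clean fix uses exactly the ingredients you already assembled: set $F:=\R(\Cl,\Irw,\Isw,\Itw)=\R(\Cl,\Cr,\Cs\Ct,\Ct^2)$, a purely transcendental extension of $\R$ of degree~$12$ on which $\GS$ acts by permuting $\Cl$ and fixing the other nine generators; your rewriting shows $p\in F(\Itw[0])$, and writing $p=A+B\,\Itw[0]$ with $A,B\in F$ (using $\Itw[0]^2=\prod_i\Ct[i]^2\in F$), invariance of $p$ and of $\Itw[0]$ together with $\Itw[0]\notin F$ force $A,B\in F^{\GS}$. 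Since $\Irw,\Isw,\Itw$ are algebraically independent over $\R(\Cl)$ and $\GS$-fixed, $F^{\GS}=\R(\Cl)^{\GS}(\Irw,\Isw,\Itw)=\R(\Ilw[1],\Ilw[2],\Ilw[3],\Irw,\Isw,\Itw)$, which gives $p\in\R(\Ilw,\Irw,\Isw,\Itw,\Itw[0])$ as desired. With this adjustment your proof stands; compared to the paper's computational certificate, it has the advantage of being self-contained and of exhibiting the structural reason the thirteenth generator $\Itw[0]$ is needed.
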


One observes that with this generating set, the rewriting of a polynomial 
invariant only introduces powers of $\Itw[0]$ and $\Ilw[0]$ as the denominator. 
This result was first obtained by applying  the construction in \cite[Theorem 
2.16]{HK08}. This latter theorem shows that the coefficients of the reduced 
Gröbner basis of the generic orbit ideal form a generating set and how to 
rewrite any other rational invariants in terms of these. The rewrite rules 
given above can be recognized as a Gröbner basis of the generic orbit ideal. We 
do not present the reduced Gröbner basis as it is rather cumbersome.

\section{Harmonic bases with \texorpdfstring{$\B$}{B\_3}-symmetries and invariants for higher degree} \label{sec:B3Harmonics}

In this section, we extend the description of generating rational invariants for ternary quartics given in Section~\ref{sec:Quartics} to ternary forms of arbitrary even degree. %
According to Section~\ref{sec:SliceMethod}, the $\G$-invariants on $\VV_{2d}$ are uniquely determined by the $\GB$-invariants on $\Lambda_{2d}$.

A crucial step in order to describe invariants of the $\B$-action on $\Lambda_{2d}$ consists in constructing a basis of the vector space $\H_{2d}$ exhibiting certain symmetries with respect to the group $\B$ of signed permutations. We make precise what we mean by this in Section~\ref{ssec:B3HarmonicsFormulation}, and subsequently we give an explicit construction of such a $\B$-equivariant basis.%

Because of the wide-range of applications that involve harmonic functions, this $\B$-equivariant basis is of independent interest. We therefore devote Section~\ref{ssec:HarmonicsComparison} to an illustration of the constructed harmonic basis and briefly discuss connections to other bases for $\H_{2d}$. We mention at this point that our construction of $\B$-equivariant bases for harmonic polynomials would work analogously for the case of odd degree $\H_{2d+1}$, but with a view toward rational $\B$-invariants on $\Lambda_{2d}$, we restrict our treatment to the case of even degree $\H_{2d}$.

Finally, in Section~\ref{ssec:HigherInvariants}, we deduce from the $\B$-equivariant basis for $\H_{2d}$ a generating set of rational $\GB$-invariants on $\Lambda_{2d}$, as a natural extension of Theorem~\ref{thm:TernaryQuarticInvariants}.   

  \subsection{\texorpdfstring{$\B$}{B\_3}-equivariant bases} \label{ssec:B3HarmonicsFormulation}

We construct a basis $\H_{2d}$ ($d \geq 2$) that essentially splits into subsets of three polynomials that are obtained from one another by permutation of the variables. Each of these subsets spans a $\B$-invariant subspace and the action of $\B$ on this subspace is given by signed permutations on the subset. We introduce the following notation to be more precise.

\begin{definition} \label{def:B3Equivariance}
  Let $k \in \N$ and consider two maps $\zeta, \xi \colon \{0,\ldots, k-1\} \to \{0,1\}$. An indexed set 
    \[\{u_{i,j} \mid 1 \leq i \leq 3, 0 \leq j < k\} \subset \VV_{2d}\]
  of $3k$ elements of $\VV_{2d}$ is called a \defstyle{$\B$-equivariant subset of $\VV_{2d}$ with respect to $\zeta, \xi$} (or, a \defstyle{$(\zeta, \xi)$-equivariant subset}) if the action of $\B$ on this set is given as follows: For $\sigma \in \GS$ and $\tau \in \GA$, the permutation and sign-change matrices $g_\sigma$ and $g_\tau$ act by
    \[\forall i,j: \qquad g_\sigma u_{i,j} = \sgn(\sigma)^{\zeta(j)}\: u_{\sigma(i),j} \qquad \text{and} \qquad g_\tau u_{i,j} = \left(\frac{\tau_1 \tau_2 \tau_3}{\tau_i}\right)^{\xi(j)} u_{i,j}.\]
\end{definition}

Our aim  is to find a $\B$-equivariant basis for the vector space $\H_{2d}$ for arbitrary $d \geq 2$, similar to Proposition~\ref{prop:HarmonicBasisQuartic} in the case of $\H_4$. For $d \not\equiv 2 \pmod 3$, however, we need to slightly relax this aim, and instead of a basis, we shall determine a $\B$-equivariant \emph{spanning set} of the vector space $\H_{2d}$ together with the linear relationships satisfied. %

\begin{thm} \label{thm:HarmonicBasisHigher}
  Let $d \geq 2$ and let $k := \left\lceil \frac{4d+1}{3} \right\rceil$. Then there exist $3k$ harmonic polynomials $u_{i,j} \in \H_{2d}$ for $i \in \{1,2,3\}, j \in \{0, \ldots, k-1\}$ forming a $(\zeta, \xi)$-equivariant set for some maps $\zeta, \xi \colon \{0,\ldots, k-1\} \to \{0,1\}$, such that:
  \begin{enumerate}[(i)]
    \item \label{it:HBasisGen} The harmonic polynomials $u_{i,j} \in \H_{2d}$ for $i \in \{1,2,3\}$, $j \in \{0, \ldots, k-1\}$ span the vector space $\H_{2d}$.
    \item \label{it:HBasisRel} \begin{enumerate}[(a)]
            \item \label{it:HbasisRel2} If $d \equiv 2 \pmod 3$, then the $u_{i,j}$ form a basis of $\H_{2d}$.
            \item \label{it:HbasisRel1} If $d \equiv 1 \pmod 3$, then $\dim \H_{2d} = 3k-1$ and the $u_{i,j}$ satisfy the linear relation
                    \[u_{1,0} + u_{2,0} + u_{3,0} = 0.\]
            \item \label{it:HbasisRel0} If $d \equiv 0 \pmod 3$, then $\dim \H_{2d} = 3k-2$ and the $u_{i,j}$ satisfy the relations 
                    \[u_{1,0} = u_{2,0} = u_{3,0}.\]
           \end{enumerate}
  \end{enumerate}
\end{thm}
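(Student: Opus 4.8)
The plan is to construct the $u_{i,j}$ explicitly as harmonic projections of simple polynomials, deduce harmonicity and $\B$-equivariance almost for free, and then concentrate the real effort on the spanning statement and on identifying the relations.

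\emph{Construction.} Write $\pi_{\mathcal H}\colon\VV_{2d}\to\H_{2d}$ for the orthogonal projection along $\VQ\VV_{2d-2}$ provided by Definition~\ref{def:Harmonics}. For each index $j\in\{0,\dots,k-1\}$ I would single out one explicit polynomial $f_{1,j}\in\VV_{2d}$ with few monomials, chosen so that it is either symmetric or antisymmetric under the exchange of $y$ and $z$ (this dichotomy will be recorded by $\zeta(j)\in\{0,1\}$) and so that every monomial $x^{\alpha}y^{\beta}z^{\gamma}$ occurring in it has $\alpha$ even and $\beta\equiv\gamma\pmod 2$ with one fixed common parity (recorded by $\xi(j)\in\{0,1\}$). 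Mirroring the three families of Proposition~\ref{prop:HarmonicBasisQuartic}, one can take real and imaginary parts of $(y+\mathrm iz)^{2d}$ for the two extreme frequencies and harmonic projections of expressions like $x^{2a}(yz)^{b}$ or $x^{2a}(y^{2}-z^{2})(yz)^{b}$ (with $2a+2b\le 2d$) for the rest, arranging the exponents so that the total number of triples is exactly $k=\lceil(4d+1)/3\rceil$. Then I set $u_{1,j}:=\pi_{\mathcal H}(f_{1,j})$ and $u_{2,j}:=g_{(1\,2\,3)}u_{1,j}$, $u_{3,j}:=g_{(1\,3\,2)}u_{1,j}$.

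\emph{Harmonicity and equivariance.} Harmonicity is automatic. For equivariance, $\VQ$ is $\B$-fixed and the apolar product is $\B$-invariant (Proposition~\ref{prop:InnerProduct}), so $\H_{2d}$ is a $\B$-submodule of $\VV_{2d}$ (Proposition~\ref{prop:InvariantSubspaces}) and $\pi_{\mathcal H}$ is $\B$-equivariant. As a permutation matrix permutes the variables and a sign-change matrix flips their signs, the prescribed symmetry of $f_{1,j}$ gives $g_{\sigma}f_{1,j}=\sgn(\sigma)^{\zeta(j)}f_{1,j}$ for the transposition swapping $y$ and $z$, and $g_{\tau}f_{1,j}=(\tau_{1}\tau_{2}\tau_{3}/\tau_{1})^{\xi(j)}f_{1,j}$ for all $\tau$; applying $\pi_{\mathcal H}$ and reading off the cyclic images carries these to all the $u_{i,j}$, so $\{u_{i,j}\}$ is $(\zeta,\xi)$-equivariant in the sense of Definition~\ref{def:B3Equivariance}.

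\emph{Spanning, relations, and the main difficulty.} Recall $\dim\H_{2d}=\dim\VV_{2d}-\dim\VV_{2d-2}=4d+1$ by Theorem~\ref{thm:HarmonicDecomp}. Since $\pi_{\mathcal H}$ identifies $\H_{2d}$ with $\VV_{2d}/\VQ\VV_{2d-2}$ and, for a term order with $\mathrm{lt}(\VQ)=x^{2}$, the latter has as a basis the $4d+1$ monomials of degree $2d$ not divisible by $x^{2}$ (the initial ideal of the principal ideal $(\VQ)$ being $(x^{2})$), the $u_{i,j}$ span $\H_{2d}$ iff the reductions modulo $\VQ$ of the $f_{1,j}$ and their cyclic images span this $(4d+1)$-dimensional space; a triangularity argument on the explicit $f_{1,j}$, designed so that their leading monomials are distinct and exhaust the required basis monomials once the one or two redundant members of the $j=0$ triple are discarded, settles this. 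Consequently the relation space among the $3k$ polynomials has dimension exactly $3k-(4d+1)\in\{0,1,2\}$, and, the construction making the triples with $j\ge 1$ genuinely independent and in direct sum, this relation space lies inside the single triple $W:=\mathrm{span}(u_{1,0},u_{2,0},u_{3,0})$; it is a $\B$-submodule of $W$. A triple whose $\xi$-value is $1$ has its three lines $\R u_{i,0}$ permuted by $\GS$ but acted on by the sign changes through three distinct characters, so its only $\B$-submodules are $0$ and $W$; hence, as $u_{1,0}\ne 0$, the degenerate triple has $\xi$-value $0$, the sign changes then act trivially on $W$, and the classification reduces to that of $\GS$-submodules, whose proper nonzero ones are exactly $\R(u_{1,0}+u_{2,0}+u_{3,0})$ (dimension $1$) and $\{a_{1}u_{1,0}+a_{2}u_{2,0}+a_{3}u_{3,0}\mid a_{1}+a_{2}+a_{3}=0\}$ (dimension $2$). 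Matching against $3k-(4d+1)$ yields: a basis when $d\equiv 2\pmod 3$; the relation $u_{1,0}+u_{2,0}+u_{3,0}=0$ with $\dim\H_{2d}=3k-1$ when $d\equiv 1\pmod 3$; and $u_{1,0}=u_{2,0}=u_{3,0}$ with $\dim\H_{2d}=3k-2$ when $d\equiv 0\pmod 3$. The main obstacle is the bookkeeping in the spanning step — pinning the exponents in the $f_{1,j}$ so that the count is exactly $k$, the reductions modulo $\VQ$ remain triangular, and all degeneracy is confined to $j=0$; an alternative is to replace it by a $\B$-character computation, evaluating $\chi_{\H_{2d}}=\chi_{\VV_{2d}}-\chi_{\VV_{2d-2}}$ on the conjugacy classes of $\B$ via complete homogeneous symmetric functions of eigenvalues and comparing with the sum of the twisted permutation characters of the triples, though one still needs the construction to localize the resulting relation at $j=0$.
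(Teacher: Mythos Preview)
Your outline takes a genuinely different route from the paper. The paper constructs the $u_{i,j}$ directly as harmonic polynomials $\Vg{\ell},\Vf{\ell}$ determined by coefficient recursions (Lemmas~\ref{lem:evenHarmonics}--\ref{lem:oddHarmonics}), proves linear independence by a leading-monomial argument indexed by the $x$-degree (Lemma~\ref{lem:linearIndep}), and establishes the single relation for $d\equiv 1\pmod 3$ by an explicit coefficient computation (Lemma~\ref{lem:ZeroSum}). Your plan instead uses the $\B$-equivariance of the harmonic projector to get equivariance for free, and --- most interestingly --- replaces the paper's ad hoc verification of the relations by the observation that the relation space is a $\B$-submodule of the abstract $\R^3$ carrying $W_{\zeta(0),\xi(0)}$, whose proper nonzero submodules (for $\xi(0)=0$) are exactly $\R(1,1,1)$ and its complement, so the dimension $3k-(4d+1)\in\{0,1,2\}$ pins down the relation uniquely. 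That is a clean and correct argument, and it buys you a conceptual explanation of why the three residue classes of $d$ behave as they do.

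The gap is that essentially all the content of the theorem is hidden in the step you flag yourself: choosing the $f_{1,j}$ and proving spanning with the degeneracy confined to $j=0$. Two concrete difficulties. First, ``triangularity modulo $\VQ$ with $\mathrm{lt}(\VQ)=x^{2}$'' does not interact well with the cyclic images: the standard monomials (those with $x$-degree $\le 1$) are not $\B$-stable, so reducing $g_{\Vc}f_{1,j}$ and $g_{\Vc}^{2}f_{1,j}$ modulo $\VQ$ destroys the simple monomial structure you started from, and the triangularity you need is among all $3k$ elements simultaneously, not just among the $f_{1,j}$. The paper handles this by working in $\H_{2d}$ itself and carefully tracking, for each $\Vg{\ell}$ and $\Vf{\ell}$, which monomials of the form $x^{2i}y^{2j}z^{2k}$ (respectively $x^{2i}y^{2j+1}z^{2k+1}$) can occur; the proof of Lemma~\ref{lem:linearIndep} then exploits that for $\ell\le\lfloor(d-2)/3\rfloor$ one can always find a ``witness'' monomial whose three exponents exceed $2\ell$ in the right pattern. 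Second, your representation-theoretic step requires not just that the $u_{i,j}$ with $j\ge 1$ be linearly independent, but that no relation mixes them with the $j=0$ triple; this is a strictly stronger statement (equivalent to the images of the $j\ge 1$ basis vectors being linearly independent in $\H_{2d}/\mathrm{span}(u_{i,0})$), and it is exactly what the paper's case analysis at the end of the proof of Theorem~\ref{thm:HarmonicBasisHigher} establishes by exhibiting specific monomials like $x^{2d/3}y^{2d/3}z^{2d/3}$ or $x^{(2d\pm 2)/3}y^{(2d\mp 1)/3}z^{(2d\mp 1)/3}$. Until you pin down the $f_{1,j}$ and carry out this bookkeeping, the plan is not a proof; the paper's Lemmas~\ref{lem:evenHarmonics}--\ref{lem:ZeroSum} are precisely the missing work.

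A minor point of hygiene: the relation space is a submodule of the abstract coefficient space $\R^{3}$ (the $j=0$ factor of $\R^{3k}$), not of $W=\mathrm{span}(u_{1,0},u_{2,0},u_{3,0})\subset\H_{2d}$; conflating the two obscures the argument, since $W$ itself may already have dimension $1$ or $2$.
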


We observe that Theorem~\ref{thm:HarmonicBasisHigher} generalizes Proposition~\ref{prop:HarmonicBasisQuartic}: For $2d = 4$, we have $k = 3$ and the harmonic polynomials $u_{i,0} := \Vr{i}$, $u_{i,1} := \Vs{i}$ and $u_{i,2} := \Vt{i}$ ($i \in \{1,2,3\}$) form a basis of $\H_{2d}$ satisfying the assertions of Theorem~\ref{thm:HarmonicBasisHigher} for
  \[\zeta(j) = \begin{cases} 1 &\text{if } j = 1, \\ 0 &\text{otherwise}\end{cases} \qquad \text{and} \qquad \xi(j) = \begin{cases} 1 &\text{if } j \in \{1,2\}, \\ 0 &\text{otherwise}.\end{cases}\]

We formulated Theorem~\ref{thm:HarmonicBasisHigher} as an existence result, but the crucial part for applications is the constructive proof given below. We provide explicit closed formulas for the elements $u_{i,j}$, in addition to algorithmic constructions that appear as proofs.

  \subsection{Construction of \texorpdfstring{$\B$}{B\_3}-equivariant harmonic bases} \label{ssec:B3HarmonicsConstruction}
  
We prove Theorem~\ref{thm:HarmonicBasisHigher} by constructing a spanning set with the asserted properties. %

For nonnegative integers $i,j,k$ and $r = i+j+k$, we denote the \emph{multinomial coefficient} 
  \[\binom{r}{i,j,k} := \frac{r!}{i!j!k!}.\] 

The following result is a simple characterization of harmonic polynomials.

\begin{lemma} \label{lem:TernaryHarmonicsCriterion}
  Let $d \geq 2$. A ternary form
    \[v = \sum_{i+j+k = 2d} \binom{2d}{i,j,k} \beta_{i,j,k} \; x^i y^j z^k \in V_{2d}\]
  lies in $\H_{2d}$ if and only if $\beta_{i+2,j,k} + \beta_{i,j+2,k} + \beta_{i,j,k+2} = 0$ holds for all $i,j,k \in \N$ such that $i+j+k = 2d-2$.
\end{lemma}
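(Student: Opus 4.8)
The plan is to compute the Laplacian of $v$ directly in the multinomial-coefficient basis and read off when it vanishes. Write $v = \sum_{i+j+k=2d} \binom{2d}{i,j,k} \beta_{i,j,k}\, x^i y^j z^k$. Applying $\partial^2/\partial x^2$ to the monomial $x^i y^j z^k$ gives $i(i-1) x^{i-2} y^j z^k$, and similarly for the other two variables. The key is that the falling factorial $i(i-1)$ interacts cleanly with the multinomial coefficient: one has the identity
\[\binom{2d}{i,j,k}\, i(i-1) = \frac{(2d)!}{i!j!k!}\, i(i-1) = (2d)(2d-1)\,\frac{(2d-2)!}{(i-2)!\,j!\,k!} = (2d)(2d-1)\binom{2d-2}{i-2,\,j,\,k}.\]
So the first thing I would do is record this elementary identity, which is the whole engine of the lemma.

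Next I would reindex. Using the identity above, $\Delta(v)$ becomes
\[\Delta(v) = (2d)(2d-1)\sum_{\substack{i+j+k=2d}} \left[\binom{2d-2}{i-2,j,k} + \binom{2d-2}{i,j-2,k} + \binom{2d-2}{i,j,k-2}\right]\beta_{i,j,k}\, x^{i-2}y^{j-2}z^{k-2}\cdot(\text{appropriate shifts}),\]
so — being careful with the shifts — after collecting terms the coefficient of $\binom{2d-2}{a,b,c} x^a y^b z^c$ (for $a+b+c = 2d-2$) in $\Delta(v)/((2d)(2d-1))$ is exactly $\beta_{a+2,b,c} + \beta_{a,b+2,c} + \beta_{a,b,c+2}$. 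The cleanest way to present this is to substitute $i = a+2$ in the first sum, $j = b+2$ in the second, $k = c+2$ in the third, so that all three sums run over the same index set $\{a+b+c = 2d-2\}$ and the same monomials $x^a y^b z^c$ with the same coefficient $\binom{2d-2}{a,b,c}$; then one just adds the $\beta$'s.

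Finally, since $\{x^a y^b z^c \mid a+b+c = 2d-2\}$ is a basis of $\VV_{2d-2}$ and $(2d)(2d-1)\neq 0$, the polynomial $\Delta(v)$ is zero if and only if every coefficient $\beta_{a+2,b,c} + \beta_{a,b+2,c} + \beta_{a,b,c+2}$ vanishes, for all $a,b,c \in \N$ with $a+b+c = 2d-2$. By the characterization of $\H_{2d}$ as the kernel of $\Delta$ established in Section~\ref{ssec:HarmDecomp}, this is precisely the assertion. There is no real obstacle here; the only thing to watch is bookkeeping with the index shifts and making sure the multinomial identity is applied with the correct arguments (in particular that terms with a negative index, e.g. $i<2$, contribute nothing, which is automatic since $\partial^2/\partial x^2$ kills them and $\binom{2d-2}{i-2,j,k}$ is understood to be $0$ for $i<2$).
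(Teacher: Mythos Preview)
Your argument is correct. The paper's own proof takes a slightly different but closely related route: rather than computing $\Delta v$ directly, it uses the apolar-product characterization $v \in \H_{2d} \iff \langle v, \VQ\, x^i y^j z^k\rangle_{2d} = 0$ for all $i+j+k = 2d-2$, and then computes that inner product from the definition to obtain $(2d)!\,(\beta_{i+2,j,k}+\beta_{i,j+2,k}+\beta_{i,j,k+2})$. The two computations are dual via the identity $\langle v, \VQ f\rangle_{2d} = \langle \Delta v, f\rangle_{2d-2}$ already recorded in Section~\ref{ssec:HarmDecomp}, so neither approach is really more work than the other; yours has the minor advantage of being self-contained once one accepts $\H_{2d}=\ker\Delta$, while the paper's version ties the criterion directly to the definition of $\H_{2d}$ given there. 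One presentational note: your displayed formula for $\Delta(v)$ with ``$x^{i-2}y^{j-2}z^{k-2}\cdot(\text{appropriate shifts})$'' is not literally correct as written---each of the three summands shifts only one index---so in a final write-up you should present the three reindexed sums separately (as you do in words immediately afterwards) rather than trying to combine them prematurely.
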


\begin{proof}
  Recall that $\H_{2d}$ is the orthogonal complement of $\VQ \VV_{2d-2}$ in $\VV_{2d}$ with respect to the apolar product, where $\VQ = x^2+y^2+z^2$. Hence, $v \in \H_{2d}$ if and only if $\langle v, \VQ x^i y^j z^k \rangle = 0$ for all $i,j,k$ such that $i+j+k = 2d-2$. From
    \[\langle x^{i'} y^{j'} z^{k'}, x^i y^j z^k  \rangle = \begin{cases} i!j!k! & \text{if } i=i', j=j', k=k' \\ 0 & \text{otherwise},\end{cases}\]
  we see that
    \[\langle v, \VQ x^i y^j z^k \rangle = (2d)! (\beta_{i+2,j,k} + \beta_{i,j+2,k} + \beta_{i,j,k+2}),\]
  so the claim follows.
\end{proof}

\begin{lemma} \label{lem:evenHarmonics}
  Let $d \geq 2$ and $0 \leq \ell \leq d$. There exists a harmonic polynomial 
  $\Vg{\ell} \in \H_{2d}$, unique up to scaling, with the following four 
  properties:
  \begin{enumerate}[(1)]
    \item \label{it:evenMon} Each monomial of $\Vg{\ell}$ is of even degree in each of the variables $x$, $y$ and $z$.
    \item \label{it:evenDeg} The highest degree in which the variable $x$ occurs in $\Vg{\ell}$ is $2\ell$.
    \item \label{it:evenZeroes} The monomial $y^{2j} z^{2k}$ (for $j+k = d$) does \emph{not} occur in the expression $\Vg{\ell}$ if $|j-k| < \ell$.
    \item \label{it:evenSym} Interchanging the variables $y$ and $z$ in the 
    expression $\Vg{\ell}$ gives $(-1)^{d-\ell} \Vg{\ell}$.
  \end{enumerate}
  In this unique polynomial $\Vg{\ell}$, all monomials in $\VV_{2d}$ of degree 
  $2\ell$ in $x$ and even degree in each $y$ and $z$ occur with a non-zero 
  coefficient.
\end{lemma}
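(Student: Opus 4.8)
The plan is to transfer the problem to binary forms in $y,z$ by exploiting the triangular structure, with respect to the degree in $x$, of the Laplacian $\Delta = \partial_x^2 + \Delta_{yz}$, where $\Delta_{yz} := \partial_y^2 + \partial_z^2$. A polynomial $v$ of degree $2d$ that is even in each of $x,y,z$ is uniquely of the form $v = \sum_{a \ge 0} x^{2a} p_a(y,z)$ with every $p_a$ even in $y$ and in $z$, and $\Delta v = 0$ is equivalent to the recursion $p_{a+1} = -\frac{1}{(2a+2)(2a+1)}\Delta_{yz} p_a$; hence $p_a = \frac{(-1)^a}{(2a)!}\Delta_{yz}^a p_0$, so $v$ is determined by $q := p_0 = v|_{x=0}$, which may be an arbitrary binary form of degree $2d$ that is even in $y$ and $z$. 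Writing $q = \sum_j \beta_j\, y^{2j} z^{2(d-j)}$, the four conditions on $\theta_\ell$ translate as follows: (1) is built in; (2) becomes $\Delta_{yz}^{\ell+1} q = 0$; (3) becomes $\beta_j = 0$ whenever $|2j - d| < \ell$ (the pure $y,z$ monomials of $v$ are precisely those of $q$); and (4) becomes ``$q$ is a swap eigenvector, $\tau q = (-1)^{d-\ell} q$'', where $\tau$ denotes the transposition $y \leftrightarrow z$ (which commutes with $\Delta_{yz}$). Let $U_{d,\ell}$ be the space of such $q$; the closing assertion of the lemma says that the top coefficient $p_\ell = \frac{(-1)^\ell}{(2\ell)!}\Delta_{yz}^\ell q$ has \emph{every} monomial $y^{2b}z^{2c}$ ($b+c = d-\ell$) with nonzero coefficient. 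So I must show $\dim U_{d,\ell} = 1$ and that $\Delta_{yz}^\ell q \ne 0$ for $0 \ne q \in U_{d,\ell}$.

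\emph{Existence ($\dim U_{d,\ell} \ge 1$).} By the harmonic decomposition of binary forms (the two-variable version of Theorem~\ref{thm:HarmonicDecomp}), the even binary forms of degree $2d$ annihilated by $\Delta_{yz}^{\ell+1}$ are $\bigoplus_{s=0}^{\ell}(y^2+z^2)^{s}\cdot\R\operatorname{Re}(y+iz)^{2(d-s)}$; the $s$-th summand is a $\tau$-eigenline of eigenvalue $(-1)^{d-s}$, so condition~(4) keeps the summands with $s \equiv \ell \pmod 2$, a space of dimension $\lfloor \ell/2\rfloor + 1$. A short bookkeeping of the forbidden monomials $y^{2j}z^{2(d-j)}$ with $|2j-d| < \ell$ — they form $\tau$-orbits, the orbit of $y^d z^d$ being inert when the eigenvalue is $-1$ — shows that condition~(3) drops the dimension by at most $\lfloor\ell/2\rfloor$, whence $\dim U_{d,\ell} \ge 1$.

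\emph{Uniqueness and the leading coefficient.} The key device is the map $\partial_x^2 \colon \H_{2d} \to \H_{2d-2}$, which in the normal form replaces $q = p_0$ by $-\Delta_{yz} q$. One verifies that $\partial_x^2$ maps $U_{d,\ell}$ into $U_{d-1,\ell-1}$ for $\ell \ge 1$: harmonicity, evenness and the drop of the $x$-degree are immediate, the $\tau$-eigenvalue is unchanged and equals $(-1)^{(d-1)-(\ell-1)}$, and (3) is preserved because $|2j-(d-1)| < \ell-1$ forces both $|2j-d| < \ell$ and $|2(j+1)-d| < \ell$, so $\beta_j = \beta_{j+1} = 0$ and hence the $y^{2j}z^{2(d-1-j)}$-coefficient of $\partial_x^2 v$ (a combination of $\beta_j$ and $\beta_{j+1}$) vanishes. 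The map $\partial_x^2$ is injective on $U_{d,\ell}$: its kernel consists of $v$ with no $x$, i.e. of harmonic even binary forms $q(y,z)$ of degree $2d$, all proportional to $\operatorname{Re}(y+iz)^{2d}$ — a form of $\tau$-eigenvalue $(-1)^d$ with all $\beta_j \ne 0$, contradicting (4) for $\ell$ odd and (3) for $\ell \ge 2$. Iterating, $U_{d,\ell}$ embeds in $U_{d-\ell,0} = \R\operatorname{Re}(y+iz)^{2(d-\ell)}$, so $\dim U_{d,\ell} \le 1$; with existence, $\dim U_{d,\ell} = 1$, which gives $\theta_\ell$ and its uniqueness up to scaling. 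Finally, if $0 \ne q \in U_{d,\ell}$ had $\Delta_{yz}^\ell q = 0$, then $\theta_\ell$ would have $x$-degree $\le 2\ell-2$, and applying $\partial_x^2$ exactly $\ell-1$ times (injective at each step) would produce a nonzero harmonic even binary form of degree $2(d-\ell+1)$, necessarily $\propto \operatorname{Re}(y+iz)^{2(d-\ell+1)}$ of $\tau$-eigenvalue $(-1)^{d-\ell+1}$, whereas it must have eigenvalue $(-1)^{d-\ell}$: contradiction. Hence $p_\ell = \gamma\operatorname{Re}(y+iz)^{2(d-\ell)}$ with $\gamma \ne 0$, and the coefficient of $y^{2b}z^{2c}$ in it — equivalently of $x^{2\ell}y^{2b}z^{2c}$ in $\theta_\ell$ — is $\gamma(-1)^{c}\binom{2(d-\ell)}{2c} \ne 0$.

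I expect the crux to be spotting $\partial_x^2$ as the correct descent operator and checking that it preserves condition~(3); that is the only genuinely combinatorial point, and it rests on the elementary observation that the forbidden index window shrinks compatibly when the degree drops from $2d$ to $2d-2$. Setting up the normal form, invoking the binary harmonic decomposition, and the dimension count for existence are routine.
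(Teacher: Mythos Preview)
Your proof is correct and proceeds along a genuinely different route from the paper's. The paper works directly with the coefficient array $\beta_{i,j,k}$ in the expansion $\theta_\ell=\sum\binom{2d}{2i,2j,2k}\beta_{i,j,k}\,x^{2i}y^{2j}z^{2k}$, invokes Lemma~\ref{lem:TernaryHarmonicsCriterion} to turn harmonicity into the three-term recurrence $\beta_{i+1,j,k}+\beta_{i,j+1,k}+\beta_{i,j,k+1}=0$, and then determines the $\beta_{i,j,k}$ iteratively level by level (in decreasing $i$ and increasing $|j-k|$), establishing existence, uniqueness, and the nonvanishing of the top row in one constructive sweep. You instead encode $v$ by its restriction $q=v|_{x=0}$ via the normal form $v=\sum_a \frac{(-1)^a}{(2a)!}x^{2a}\Delta_{yz}^a q$, so that all conditions become conditions on a single binary form; uniqueness comes from the descent $\partial_x^2\colon U_{d,\ell}\hookrightarrow U_{d-1,\ell-1}$ terminating in the visibly one-dimensional $U_{d-\ell,0}=\R\,\mathrm{Re}(y+iz)^{2(d-\ell)}$, and existence from a separate dimension count using the two-variable harmonic decomposition and the $\tau$-orbit bookkeeping on the forbidden monomials. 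The paper's approach is more elementary and, crucially, constructive: it leads directly to the closed formulas of Remark~\ref{rem:ExplicitFormulae}, which are used in the sequel. Your approach is structurally cleaner and makes the role of the $y\leftrightarrow z$ parity and of the ``top slice'' $p_\ell\propto\mathrm{Re}(y+iz)^{2(d-\ell)}$ transparent, at the cost of not producing explicit coefficients.
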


\begin{proof}
  By Lemma~\ref{lem:TernaryHarmonicsCriterion}, determining the expression 
  \[\Vg{\ell} = \sum_{i+j+k = d} \binom{2d}{2i,2j,2k} \beta_{i,j,k} \; x^{2i} y^{2j} z^{2k}\]
  amounts to finding $\beta_{i,j,k} \in \R$ for $i,j,k \geq 0$, $i+j+k = d$ such that
  \begin{enumerate}[(i)]
    \item $\beta_{i+1,j,k} + \beta_{i,j+1,k} + \beta_{i,j,k+1} = 0$ for all $i,j,k \geq 0, i+j+k = d-1$, \label{itP:harm} \label{itP:first}
    \item $\beta_{i,j,k} = 0$ if $i > \ell$, \label{itP:deg}
    \item $\beta_{\ell,j,k} \neq 0$ for some $j, k$, \label{itP:degNZ}
    \item $\beta_{i,j,k} = (-1)^{d-\ell} \beta_{i,k,j}$ for all $i+j+k = d$, \label{itP:sym}
    \item $\beta_{0,j,k} = 0$ if $|j-k| < \ell$. \label{itP:zero} \label{itP:last}
  \end{enumerate}
  From this, the coefficients $\beta_{i,j,k}$ can be determined iteratively. Combining (\ref*{itP:zero}) with (\ref*{itP:harm}), one shows by induction on $i$ that
    \begin{equation} \label{eq:zeroBetas}
      \beta_{i,j,k} = 0 \quad \text{ if } i + |j-k| < \ell.
    \end{equation}
  
  Moreover, (\ref*{itP:deg}) imposes $\beta_{i,j,k} = 0$ if $i > \ell$. Then (\ref*{itP:harm}) implies that $\beta_{\ell,j+1,k} = -\beta_{\ell, j, k+1}$ for all $j,k \geq 0$ with $\ell+j+k=d-1$. Hence, $\beta_{\ell, j, k} = (-1)^j \beta_{\ell, 0, d-\ell}$. Because of (\ref*{itP:degNZ}), we may assume (after rescaling all $\beta_{i,j,k}$) that $\beta_{\ell, j, k} = (-1)^j$ for all $j+k = d-\ell$.
  
  If $d-\ell$ is even and $s := \frac{d-\ell}{2}$, we additionally have $\beta_{\ell-1, s, s+1} + \beta_{\ell-1, s+1, s} + \beta_{\ell, s, s} = 0$ by (\ref*{itP:harm}). From $\beta_{\ell,s,s} = (-1)^s$, we conclude with (\ref*{itP:sym}) that
    \begin{equation} \label{eq:excBeta}
      \beta_{\ell-1,s,s+1} = \beta_{\ell-1,s+1,s} = \frac{(-1)^{s+1}}{2} \qquad (\neq 0).
    \end{equation}
  
  At this stage we have determined all $\beta_{i,j,k}$ for $i+j+k = d$ such that $i \geq \ell$ or $|j-k| < \max\{\ell-i, 2\}$ and they are compatible with (\ref*{itP:first})--(\ref*{itP:last}) in the sense that all statements from (\ref*{itP:first})--(\ref*{itP:last}) only involving those $\beta_{i,j,k}$ are satisfied. We now proceed to determine $\beta_{\ell-K,j,k}$ for $K \in \N$ increasing from $1$ to $\ell$, assuming that the values $\beta_{i,j,k}$ for $i > \ell-K$ are already known and compatible with (\ref*{itP:first})--(\ref*{itP:last}).
  
  Recall that from \eqref{eq:zeroBetas} and \eqref{eq:excBeta} we already know the values $\beta_{\ell-K,j,k}$ whenever $|j-k| < \max\{K, 2\}$. For increasing values of $|j-k|$, we then iteratively obtain the values $\beta_{\ell-K,j,k}$ for $j > k$ from the recursion
    \[\beta_{\ell-K,j+1,k} = - \beta_{\ell-(K-1),j,k} - \beta_{\ell-K,j,k+1}\]
  and for $k > j$ from the recursion
    \[\beta_{\ell-K,j,k+1} = - \beta_{\ell-(K-1),j,k} - \beta_{\ell-K,j+1,k}\]
  (both resulting from (\ref*{itP:harm})). This determines all $\beta_{\ell-K, j, k}$. Note that $\beta_{\ell-K,j,k} = (-1)^{d-\ell} \beta_{\ell-K,k,j}$ follows iteratively from the assumption that property~(\ref*{itP:sym}) holds for all previously constructed $\beta_{i,j,k}$. Hence, we determined the unique values for $\beta_{\ell-K,j,k}$ such that the properties (\ref*{itP:first})--(\ref*{itP:last}) are still preserved. This concludes the construction.
\end{proof}

\begin{remark} \label{rem:ExplicitFormulae}
  The proof above is actually an iterative construction of the coefficients $\beta_{i,j,k}$ defining $\Vg{\ell}$. We now give an explicit formula: If $d - \ell$ is odd and $s := \frac{d-\ell+1}{2}$, then the values
    \[\beta_{i,j,k} = (-1)^j \binom{j-s}{\ell-i}\]
  are easily checked to satisfy (\ref*{itP:first})--(\ref*{itP:last}). 
  Here, we employ for $a, b \in \Z$ the convention %
    \[\binom{a}{b} := \begin{cases} \frac{a(a-1) \ldots (a-b+1)}{b!} &\text{if } b \geq 0, \\ 0 & \text{otherwise} \end{cases}\]
  and use the identities $\binom{a+1}{b+1} = \binom{a}{b} + \binom{a}{b+1}$ and $\binom{a}{b} = (-1)^b\binom{b-a-1}{b}$.
  On the other hand, if $d - \ell$ is even and $s := \frac{d-\ell}{2}$, then
    \[\beta_{i,j,k} = (-1)^j \binom{j-s}{\ell-i} + (-1)^j \binom{j-s-1}{\ell-i}\]
  satisfy the desired properties. This gives an explicit closed formula for the harmonic polynomials $\Vg{\ell}$.
\end{remark}

Similar to Lemma~\ref{lem:evenHarmonics}, we also obtain the following result.

\begin{lemma} \label{lem:oddHarmonics}
  Let $d \geq 2$ and $0 \leq \ell \leq d-1$. Then there exists a harmonic 
  polynomial $\Vf{\ell} \in \H_{2d}$, unique up to scaling, with the following 
  four properties:
  \begin{enumerate}[(1)]
    \item \label{it:oddMon} Each monomial of $\Vf{\ell}$ is of even degree in $x$ and of odd degree in each of the variables $y$ and $z$.
    \item \label{it:oddDeg} The highest degree in which the variable $x$ occurs in $\Vf{\ell}$ is $2\ell$.
    \item \label{it:oddZeroes} The monomial $y^{2j+1} z^{2k+1}$ (for $j+k = d-1$) does \emph{not} occur in the expression $\Vf{\ell}$ if $|j-k| < \ell$.
    \item \label{it:oddSym} Interchanging the variables $y$ and $z$ in the 
    expression $\Vf{\ell}$ gives $(-1)^{d-\ell+1} \Vf{\ell}$.
  \end{enumerate}
  In this unique polynomial $\Vf{\ell}$, all monomials in $\VV_{2d}$ of degree 
  $2\ell$ in $x$ and odd degree in each $y$ and $z$ occur with a non-zero 
  coefficient.
\end{lemma}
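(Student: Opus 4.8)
The plan is to mimic the proof of Lemma~\ref{lem:evenHarmonics} almost verbatim, the only change being the parity pattern of the monomials. Write a candidate as
\[\Vf{\ell} = \sum_{i+j+k = d-1} \binom{2d}{2i,\,2j+1,\,2k+1}\, \beta_{i,j,k}\; x^{2i}\, y^{2j+1}\, z^{2k+1},\]
which makes property~(\ref{it:oddMon}) automatic. Feeding this into Lemma~\ref{lem:TernaryHarmonicsCriterion}, the three exponent triples it couples at $(2i,2j{+}1,2k{+}1)$, namely $(2i{+}2,2j{+}1,2k{+}1)$, $(2i,2j{+}3,2k{+}1)$ and $(2i,2j{+}1,2k{+}3)$, all have the form (even, odd, odd) exactly when $i+j+k = d-2$, and for index triples of any other parity all three of the relevant coefficients vanish; so harmonicity is equivalent to the three-term recursion
\[\beta_{i+1,j,k} + \beta_{i,j+1,k} + \beta_{i,j,k+1} = 0 \qquad \text{for all } i,j,k \geq 0,\ i+j+k = d-2.\]

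Properties~(\ref{it:oddDeg}), (\ref{it:oddZeroes}) and (\ref{it:oddSym}) translate, respectively, into: $\beta_{i,j,k} = 0$ for $i > \ell$ with $\beta_{\ell,j,k} \neq 0$ for some $j,k$; $\beta_{0,j,k} = 0$ whenever $|j-k| < \ell$; and $\beta_{i,j,k} = (-1)^{d-\ell+1}\beta_{i,k,j}$. Read off the index set $\{i+j+k = d-1\}$, this is precisely the system (\ref*{itP:first})--(\ref*{itP:last}) from the proof of Lemma~\ref{lem:evenHarmonics} with $d$ replaced by $d-1$; in particular the sign $(-1)^{d-\ell}$ there becomes $(-1)^{(d-1)-\ell} = (-1)^{d-\ell+1}$, matching property~(\ref{it:oddSym}). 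So the iterative construction carried out there applies unchanged: solve the recursion along the slice $i = \ell$ to get $\beta_{\ell,j,k} = (-1)^j$ after fixing the one remaining scaling (this is already the asserted non-vanishing statement, as $\beta_{\ell,j,k} \neq 0$ for all admissible $j,k$), with the parity subcase requiring a separately treated central coefficient on the slice $i = \ell-1$ now being ``$d-\ell$ odd'' rather than ``$d-\ell$ even''; then descend over $i = \ell-1, \ell-2, \ldots, 0$, propagating outward from small $|j-k|$ --- where the values vanish by the analog of \eqref{eq:zeroBetas} --- while preserving the $y \leftrightarrow z$ symmetry. This gives existence and uniqueness up to scaling. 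As in Remark~\ref{rem:ExplicitFormulae}, the same binomial closed forms with $d$ replaced by $d-1$ give an explicit formula for $\Vf{\ell}$.

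I do not anticipate a genuine obstacle; the proof is bookkeeping. The two things to get exactly right are (a) that Lemma~\ref{lem:TernaryHarmonicsCriterion} collapses onto the reduced indices to precisely the recursion above --- immediate once one observes that the coefficient of $x^I y^J z^K$ in $\Vf{\ell}$ vanishes unless $I$ is even and $J, K$ are both odd --- and (b) that the parities are compatible, so that the antisymmetry sign of property~(\ref{it:oddSym}) and the forbidden band $|j-k| < \ell$ of property~(\ref{it:oddZeroes}) are consistent with the recursion, which is the same consistency check already done in the proof of Lemma~\ref{lem:evenHarmonics}. Alternatively, for $d \geq 3$ one may transport the existence and uniqueness statement of Lemma~\ref{lem:evenHarmonics} directly through this identification of coefficient systems and settle $d = 2$ by hand (where $\Vf{0} \propto y^3 z - y z^3$ and $\Vf{1} \propto 6x^2 y z - y^3 z - y z^3$ recover $\Vs{1}$ and $\Vt{1}$ of Proposition~\ref{prop:HarmonicBasisQuartic}), but reproving the one-line recursion is no longer than setting that up.
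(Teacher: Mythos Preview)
Your proposal is correct and follows exactly the paper's own approach: write $\Vf{\ell}$ in the stated form, observe that the coefficient conditions are precisely those of Lemma~\ref{lem:evenHarmonics} with $d$ replaced by $d-1$, and invoke that construction (and Remark~\ref{rem:ExplicitFormulae}). You have in fact spelled out more detail than the paper, which compresses this into a single sentence.
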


\begin{proof}
  Writing 
    \[\Vf{\ell} = \sum_{i+j+k = d-1} \binom{2d}{2i,2j+1,2k+1} \beta_{i,j,k} \; x^{2i} y^{2j+1} z^{2k+1},\]
  the coefficients $\beta_{i,j,k}$ for $i+j+k = d-1$ have to meet the same conditions as before (replacing $d$ by $d-1$), so the construction is the same as before (and the corresponding explicit formulas from Remark~\ref{rem:ExplicitFormulae} apply).
\end{proof}

While the harmonic polynomials $\Vg{\ell}$, $\Vf{\ell}$ are (anti-)symmetric with respect to the variables $y$ and $z$, the variable $x$ plays a special role. We therefore now consider the expressions that arise from $\Vg{\ell}, \Vf{\ell}$ by cyclically permuting the variables $x$, $y$ and $z$. For this, we consider the cycle $\Vc = (123) \in \GS$ and its associated permutation matrix
  \[g_\Vc = \begin{pmatrix} 0 & 0 & 1 \\ 1 & 0 & 0 \\ 0 & 1 & 0 \end{pmatrix} \in \B.\]

\begin{lemma} \label{lem:linearIndep}
  Let $d \geq 2$ and consider $\Vf{\ell}, \Vg{\ell} \in \H_{2d}$ as in Lemmas~\ref{lem:oddHarmonics} and \ref{lem:evenHarmonics}.
  Then the harmonic polynomials
    \begin{align*}
      &\Vf{0}, \ g_\Vc \Vf{0}, \ g_\Vc^2 \Vf{0}, \quad \Vf{1}, \ g_\Vc \Vf{1}, \ g_\Vc^2 \Vf{1}, \quad \dots, \quad \Vf{d-1}, \ g_\Vc \Vf{d-1}, \ g_\Vc^2 \Vf{d-1}, \\
      &\Vg{0}, \ g_\Vc \Vg{0}, \ g_\Vc^2 \Vg{0}, \quad \Vg{1}, \ g_\Vc \Vg{1}, \ g_\Vc^2 \Vg{1}, \quad \dots, \quad \Vg{r}, \ g_\Vc \Vg{r}, \ g_\Vc^2 \Vg{r}
    \end{align*}
  are linearly independent in $\H_{2d}$ for $r := \left\lfloor \frac{d-2}{3} \right\rfloor$.
\end{lemma}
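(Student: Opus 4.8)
The plan is to separate the listed polynomials according to the parities of the exponents of $x,y,z$ in their monomials. A ternary form in $\VV_{2d}$ decomposes uniquely into four pieces, according to whether the exponents of $(x,y,z)$ are congruent modulo $2$ to $(0,0,0)$, $(0,1,1)$, $(1,0,1)$ or $(1,1,0)$ -- the only parity vectors of even total weight. Since $\VQ = x^2+y^2+z^2$ has all of its monomials of parity $(0,0,0)$, and monomials of distinct parities are apolar-orthogonal, this refinement is compatible with the Harmonic Decomposition, so $\H_{2d} = \H_{2d}^{(0,0,0)} \oplus \H_{2d}^{(0,1,1)} \oplus \H_{2d}^{(1,0,1)} \oplus \H_{2d}^{(1,1,0)}$ with $\H_{2d}^{(\varepsilon)}$ the span of the parity-$\varepsilon$ harmonic polynomials. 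By Lemmas~\ref{lem:evenHarmonics} and~\ref{lem:oddHarmonics}, together with the explicit effect of $g_{\Vc}$ on monomials, each polynomial in the list lies in exactly one summand: the $\Vg{\ell}$, $g_{\Vc}\Vg{\ell}$ and $g_{\Vc}^2\Vg{\ell}$ all lie in $\H_{2d}^{(0,0,0)}$, whereas $\Vf{\ell} \in \H_{2d}^{(0,1,1)}$, $g_{\Vc}\Vf{\ell} \in \H_{2d}^{(1,0,1)}$ and $g_{\Vc}^2\Vf{\ell} \in \H_{2d}^{(1,1,0)}$. Hence it suffices to prove linear independence within each of the four summands.

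For the three "odd" summands this is immediate: Lemma~\ref{lem:oddHarmonics} tells us that $\Vf{\ell}$ has degree exactly $2\ell$ in $x$, so $\Vf{0},\ldots,\Vf{d-1}$ have pairwise distinct $x$-degrees and hence are linearly independent (inspect the term of top $x$-degree in any hypothetical relation); applying the invertible maps $g_{\Vc}$ and $g_{\Vc}^2$ transfers this to the two other families.

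The real content lies in $\H_{2d}^{(0,0,0)}$, which receives all $3(r+1)$ polynomials $g_{\Vc}^{a}\Vg{\ell}$ with $a\in\{0,1,2\}$ and $0\le \ell\le r$. I would argue as follows. Given a relation, group it as $p_0+p_1+p_2=0$ with $p_a:=\sum_{\ell=0}^{r}c_{a,\ell}\,g_{\Vc}^{a}\Vg{\ell}$. Lemma~\ref{lem:evenHarmonics}(2), transported by $g_{\Vc}$ and $g_{\Vc}^2$, shows that every monomial $x^{2i}y^{2j}z^{2k}$ of $p_0$ has $i\le r$, every monomial of $p_1$ has $j\le r$, and every monomial of $p_2$ has $k\le r$. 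If $p_2\ne 0$, let $\ell_0$ be the largest index with $c_{2,\ell_0}\ne 0$; the part of $p_2$ homogeneous of degree $2\ell_0$ in $z$ is then $c_{2,\ell_0}$ times that of $g_{\Vc}^2\Vg{\ell_0}$, which by the concluding statement of Lemma~\ref{lem:evenHarmonics} (carried over by $g_{\Vc}^2$) contains every monomial $x^{2i}y^{2j}z^{2\ell_0}$ with $i+j=d-\ell_0$, each with nonzero coefficient. Since $r=\left\lfloor\frac{d-2}{3}\right\rfloor$ forces $d-\ell_0\ge d-r\ge 2(r+1)$, one can pick such $i,j$ both exceeding $r$; the resulting monomial then occurs in $p_2$ but in neither $p_0$ nor $p_1$, contradicting the relation. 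Hence $p_2=0$, and then $p_0+p_1=0$ forces every monomial of $p_1$ to satisfy $i\le r$ as well; repeating the argument with the degree-$2\ell_0$-in-$y$ part of $g_{\Vc}\Vg{\ell_0}$ (which contains $x^{2(d-\ell_0)}y^{2\ell_0}$, of $x$-degree $2(d-\ell_0)>2r$) gives $p_1=0$, hence $p_0=0$. Finally $\Vg{0},\ldots,\Vg{r}$ have distinct $x$-degrees $0,2,\ldots,2r$ by Lemma~\ref{lem:evenHarmonics}(2), so all $c_{0,\ell}=0$.

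I expect the $\H_{2d}^{(0,0,0)}$ summand to be the main obstacle, and the crux is that a naive leading-term argument cannot work there: all of $\Vg{0},\ldots,\Vg{r}$ contain the extreme monomials $y^{2d}$ and $z^{2d}$, so no single monomial order separates them from their cyclic images. The device is to instead play the three complementary bounds (small degree in $x$, in $y$, in $z$, carried by the three cyclic families) against the inequality $d\ge 3r+2$ hard-wired into the definition of $r$. The remaining steps -- the parity decomposition of $\H_{2d}$ and the triangularity in $x$-degree of the $\Vf{\ell}$ and the $\Vg{\ell}$ -- are routine.
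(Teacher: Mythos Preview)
Your proof is correct and follows essentially the same approach as the paper: both split by monomial parity into four summands, dispatch the three ``odd'' summands via the distinct $x$-degrees of the $\Vf{\ell}$, and handle the $(0,0,0)$-summand by combining the degree bounds from Lemma~\ref{lem:evenHarmonics}(\ref{it:evenDeg}) (transported by $g_\Vc$ and $g_\Vc^2$) with the inequality $d\ge 3r+2$ built into $r$. The only organizational difference is that you peel off $p_2$, then $p_1$, then $p_0$ in turn, whereas the paper takes the maximal index $s$ across all three cyclic families simultaneously and exhibits a single monomial $x^{2s}y^{2j}z^{2k}$ with $j,k>s$ that only $\Vg{s}$ can contribute; this is a minor rearrangement of the same idea.
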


\begin{proof}
  The monomials in $\Vg{\ell}, g_\Vc \Vg{\ell}, g_\Vc^2 \Vg{\ell}$ are of the form $x^{2i} y^{2j} z^{2k}$, in $\Vf{\ell}$ of the form $x^{2i} y^{2j+1} z^{2k+1}$, in $g_\Vc \Vf{\ell}$ of the form $x^{2i+1} y^{2j} z^{2k+1}$, and in $g_\Vc^2 \Vf{\ell}$ of the form $x^{2i+1} y^{2j+1} z^{2k}$. Therefore, we can show the linear independence of those families of harmonic polynomials separately.
  
  The linear independence of $\Vf{0}, \Vf{1}, \ldots, \Vf{d-1} \in \H_{2d}$ is a consequence of property (\ref*{it:oddDeg}) in Lemma~\ref{lem:oddHarmonics}. The same for $g_\Vc \Vf{0}, \ldots, g_\Vc \Vf{d-1}$ and for $g_\Vc^2 \Vf{0}, \ldots, g_\Vc^2 \Vf{d-1}$ follows directly.
  
  To see that $\{g_\Vc^{i-1} \Vg{\ell} \mid 1 \leq i \leq 3, 0 \leq \ell \leq r\}$ is a linearly independent set, suppose that $\sum_{i = 1}^3 \sum_{\ell=0}^r \alpha_{i,\ell} g_\Vc^{i-1} \Vg{\ell} = 0$ for some $\alpha_{i,\ell} \in \R$, not all equal to zero. Let $s \in \{0,\ldots, r\}$ be maximal such that $\alpha_{i,s} \neq 0$ for some $i \in \{1,2,3\}$. We may assume that $\alpha_{1,s} \neq 0$. Note that $s \leq r = \left\lfloor \frac{d-2}{3} \right\rfloor$ implies that there exist integers $j, k > s$ such that $j+k+s=d$. It follows from property~(\ref*{it:evenDeg}) in Lemma~\ref{lem:evenHarmonics} that the coefficient of $x^{2s} y^{2j} z^{2k}$ in the expression $\sum_{i = 1}^3 \sum_{\ell=0}^s \alpha_{i,\ell} g_\Vc^{i-1} \Vg{\ell}$ is $\alpha_{1,\ell} \lambda_1$, where $\lambda_1 \neq 0$ is the coefficient of $x^{2s} y^{2j} z^{2k}$ in $\Vg{s}$. This implies $\sum_{i = 1}^3 \sum_{\ell=0}^s \alpha_{i,\ell} g_\Vc^{i-1} \Vg{\ell} \neq 0$, contrary to the assumption.
\end{proof}

\begin{lemma} \label{lem:ZeroSum}
  Let $d = 3k_0+1$ for some $k_0 \geq 1$ and let $g_\Vc \in \B$ and $\Vg{\ell}$ as before. Then 
    \[\Vg{k_0} + g_\Vc \Vg{k_0} + g_\Vc^2 \Vg{k_0} = 0.\]
\end{lemma}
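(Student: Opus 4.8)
Set $P := \Vg{k_0} + g_\Vc \Vg{k_0} + g_\Vc^2 \Vg{k_0} \in \H_{2d}$; the goal is $P = 0$. The idea is to read off the coefficients of $P$ from the explicit description of $\Vg{k_0}$ and show they all vanish. By the construction in the proof of Lemma~\ref{lem:evenHarmonics} we may write $\Vg{k_0} = \sum_{i+j+k=d}\binom{2d}{2i,2j,2k}\,\beta_{i,j,k}\,x^{2i}y^{2j}z^{2k}$, and since $d-k_0 = 2k_0+1$ is odd, Remark~\ref{rem:ExplicitFormulae} (with $s = k_0+1$) allows us to take $\beta_{i,j,k} = (-1)^j\binom{j-k_0-1}{k_0-i}$; this automatically encodes the vanishing $\beta_{i,j,k}=0$ for $i>k_0$, since $\binom{a}{b}=0$ whenever $b<0$. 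Because $g_\Vc$ cyclically permutes the variables, $g_\Vc$ sends $x^{2i}y^{2j}z^{2k}$ to $x^{2k}y^{2i}z^{2j}$ and $g_\Vc^2$ sends it to $x^{2j}y^{2k}z^{2i}$, while the multinomial coefficient is invariant under permuting its arguments. Hence the coefficient of $x^{2a}y^{2b}z^{2c}$ (for $a+b+c=d$) in $P$ equals $\binom{2d}{2a,2b,2c}\bigl(\beta_{a,b,c}+\beta_{b,c,a}+\beta_{c,a,b}\bigr)$, and since every monomial of $P$ has even degree in each variable, the lemma reduces to the identity
\[
\beta_{a,b,c}+\beta_{b,c,a}+\beta_{c,a,b}=0 \qquad\text{for all } a,b,c\ge 0 \text{ with } a+b+c=3k_0+1.
\]

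To prove this, observe that $a+b+c=3k_0+1>3k_0$ forces at least one of $a,b,c$ to exceed $k_0$, while $a+b+c<3(k_0+1)$ forbids all three from being $\ge k_0+1$. The cyclic sum above is invariant under cyclic rotation of $(a,b,c)$, so we may assume we are in one of two cases. \emph{Case 1: exactly one index, say $a$, exceeds $k_0$}, so $b,c\le k_0$ and $a=3k_0+1-b-c\ge k_0+1$. Then $\beta_{a,b,c}=0$, and we must check $\beta_{b,c,a}+\beta_{c,a,b}=0$, i.e.\ $(-1)^c\binom{c-k_0-1}{k_0-b}+(-1)^a\binom{a-k_0-1}{k_0-c}=0$. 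Using $a-k_0-1 = 2k_0-b-c$ and the upper‑negation identity $\binom{c-k_0-1}{k_0-b}=(-1)^{k_0-b}\binom{2k_0-b-c}{k_0-b}=(-1)^{k_0-b}\binom{2k_0-b-c}{k_0-c}$, both summands are $\pm\binom{2k_0-b-c}{k_0-c}$, and their signs $(-1)^{c+k_0-b}$ and $(-1)^a$ are opposite because $a+c-b+k_0 = 4k_0+1-2b$ is odd; hence the sum vanishes. \emph{Case 2: exactly two indices exceed $k_0$}; by a cyclic rotation place the remaining one last, so $c\le k_0$ while $a,b\ge k_0+1$. Then $\beta_{a,b,c}=0$ (first index $>k_0$) and $\beta_{b,c,a}=(-1)^c\binom{c-k_0-1}{k_0-b}=0$ (as $k_0-b<0$); finally $\beta_{c,a,b}=(-1)^a\binom{a-k_0-1}{k_0-c}=0$ as well, since $0\le a-k_0-1\le k_0-c-1$: the lower bound is $a\ge k_0+1$, the upper bound follows from $a+c = 3k_0+1-b\le 2k_0$, and $k_0-c\ge 1$ because $c=k_0$ would force $a+b=2k_0+1$ with $a,b\ge k_0+1$, which is impossible. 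In both cases $\beta_{a,b,c}+\beta_{b,c,a}+\beta_{c,a,b}=0$, so every coefficient of $P$ vanishes and $P=0$.

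The proof is essentially bookkeeping once the reduction to the cyclic coefficient identity is spotted; the only steps requiring care are the parity check ($4k_0+1-2b$ odd) in Case~1 and the inequality $a+c\le 2k_0$ in Case~2, and it is precisely here that the hypothesis $d=3k_0+1$ is used. As an alternative that shortens the case analysis, one may note that $P$ is homogeneous of degree $2d=6k_0+2$, even in each variable, and $g_\Vc$-invariant, so $\deg_x P=\deg_y P=\deg_z P$; it then suffices to show $\deg_x P\le 2k_0$ — which amounts to Cases~1 and~2 restricted to the situation where the first index $a>k_0$ — after which every monomial of $P$ would have total degree $\le 6k_0<2d$, forcing $P=0$.
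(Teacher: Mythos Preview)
Your proof is correct and takes a genuinely different route from the paper's. The paper argues structurally: it first shows (via Lemma~\ref{lem:linearIndep} and a dimension count) that the $3k_0$ elements $g_\Vc^{i-1}\Vg{\ell}$ for $0\le \ell\le k_0-1$ span exactly $U\cap\H_{2d}$, where $U$ is the span of the monomials $x^{2i}y^{2j}z^{2k}$ with $\min\{i,j,k\}\le k_0-1$; then it observes that the only monomials in $\Vg{k_0}$ outside $U$ cancel in the cyclic sum, so $P\in U\cap\H_{2d}$, and finally it rules out any nonzero expansion of $P$ in that basis by inspecting leading monomials. Your argument instead plugs in the closed formula of Remark~\ref{rem:ExplicitFormulae} (legitimate since $d-k_0=2k_0+1$ is odd, $s=k_0+1$) and verifies the cyclic coefficient identity $\beta_{a,b,c}+\beta_{b,c,a}+\beta_{c,a,b}=0$ by a short binomial computation. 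The trade-off is clear: your approach is shorter and entirely self-contained once the explicit $\beta$'s are in hand, and it does not rely on Lemma~\ref{lem:linearIndep}; the paper's approach uses only the defining properties (1)--(4) of $\Vg{\ell}$, so it would survive any alternative normalization of $\Vg{k_0}$, whereas yours is tied to the specific formula. Both methods exploit $d=3k_0+1$ at the same point --- in your write-up, through the parity $4k_0+1-2b$ and the bound $a+c\le 2k_0$ --- and your closing remark about $g_\Vc$-invariance forcing equal partial degrees is a clean way to see why the computation only needs the coefficients with $a>k_0$.
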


\begin{proof}
  From Lemma~\ref{lem:linearIndep} we know that the set $\{g_\Vc^{i-1}\Vg{\ell} \mid 1 \leq i \leq 3, 0 \leq \ell \leq k_0-1\}$ forms a basis for a $3k_0$-dimensional subspace $W \subset \H_{2d}$. By properties (\ref*{it:evenDeg}) and (\ref*{it:evenMon}) in Lemma~\ref{lem:evenHarmonics}, all monomials occurring in any $g_\Vc^{i-1}\Vg{\ell}$ for $1 \leq i \leq 3, 0 \leq \ell \leq k_0-1$ must be of the form $x^{2i} y^{2j} z^{2k}$ with $\min\{i,j,k\} \leq k_0-1$, so $W$ is contained in the subspace $U$ of $V_{2d}$ spanned by the monomials of this form. Lemma~\ref{lem:TernaryHarmonicsCriterion} implies that $\dim U \cap \H_{2d} = d-1 = 3k_0$, so in fact $W = U \cap \H_{2d}$.
  
  By properties (\ref*{it:evenDeg}) and (\ref*{it:evenSym}) in Lemma~\ref{lem:evenHarmonics} we have $\Vg{k_0} = \lambda x^{2k_0} y^{2k_0+2} z^{2k_0} - \lambda x^{2k_0} y^{2k_0+2} z^{2k_0} + u$ for some $u \in U$ and $\lambda \in \R$. Therefore,
    \[\Vg{k_0} + g_\Vc \Vg{k_0} + g_\Vc^2 \Vg{k_0} = u + g_\Vc u + g_\Vc^2 u =: u' \in U \cap \H_{2d}\]
  and we can write $u' = \sum_{i=1}^3 \sum_{\ell=0}^{k_0-1} \alpha_{i, \ell} g_\Vc^{i-1} \Vg{\ell}$. Suppose for contradiction that not all $\alpha_{i,\ell}$ are zero and let $s \in \{0, \ldots, k_0-1\}$ be minimal such that $\alpha_{i,s} \neq 0$ for some $i \in \{1,2,3\}$. By symmetry, we may assume $\alpha_{1,s} \neq 0$.
  
  If $d+s$ is even, then the monomial $y^{d+s} z^{d-s}$ occurs in $\Vg{s}$ with a non-zero coefficient $\mu \in \R$, but not in $\Vg{\ell}$ for $\ell > s$ by property (\ref*{it:evenZeroes}) in Lemma~\ref{lem:evenHarmonics}. Observe also that the $y$-degree and the $z$-degree of this monomial are both larger than $2k_0$, so $y^{d+s} z^{d-s}$ is not contained as a monomial in any of the expressions $g_\Vc \Vg{\ell}$ or $g_\Vc^2 \Vg{\ell}$ for $\ell \leq k_0$. Therefore, the monomial $y^{d+s} z^{d-s}$ occurs in $u'$ with the non-zero coefficient $\alpha_{1, s} \mu$, but not in $\Vg{k_0}$ or $g_\Vc \Vg{k_0}$ or $g_\Vc^2 \Vg{k_0}$, a contradiction.
  
  If $d+s$ is odd, then $s \leq k_0-2$ and we consider the monomial $y^{d+s+1} z^{d-s-1}$. As before, this monomial does not occur in $\Vg{\ell}$ for $\ell > s+1$ or in any of the expressions $g_\Vc \Vg{\ell}$ or $g_\Vc^2 \Vg{\ell}$ for $0 \leq \ell \leq k_0$. Hence, its coefficient in $u'$ is $\alpha_{1,s} \mu + \alpha_{1,s+1}\mu' = 0$, where $\mu, \mu' \neq 0$ are the coefficients of $y^{d+s+1} z^{d-s-1}$ in $\Vg{s}$ and $\Vg{s+1}$, respectively. In the same way we can consider the monomial $y^{d-s-1} z^{d+s+1}$, which then gives $-\alpha_{1,s} \mu + \alpha_{1,s+1}\mu' = 0$ by the (anti-)symmetry of $\Vg{s}$ and $\Vg{s+1}$ due to property (\ref*{it:evenSym}) in Lemma~\ref{lem:evenHarmonics}. From both equations we deduce $\alpha_{1,s} = 0$, a contradiction.
\end{proof}

\begin{proof}[Proof of Theorem~\ref{thm:HarmonicBasisHigher}]
  We use the notations of the previous Lemmas and denote 
    \[k_0 := \left\lceil \frac{d-2}{3} \right\rceil \quad \text{ and } \quad k := \left\lceil \frac{4d+1}{3} \right\rceil = k_0 + d+1.\]
  For $i \in \{1, 2, 3\}$ and $j \in \{1, \ldots, k-1\}$, we define
   \[u_{i,j} := \begin{cases}
                  g_\Vc^{i-1} \Vg{k_0-j}, &\text{if } 1 \leq j \leq k_0, \\
                  g_\Vc^{i-1} \Vf{j-k_0-1}, &\text{if } k_0+1 \leq j < k. \\
                \end{cases}\]
  Note that we have only defined $u_{i,j}$ for $j \geq 1$. Additionally, let
  \[u_{i,0} := \begin{cases}
                 g_\Vc^{i-1} \Vg{k_0}, &\text{if } d \equiv 1 \text{ or } 2 \pmod 3, \\
                 \Vg{k_0} + g_\Vc \Vg{k_0} + g_\Vc^2 \Vg{k_0}, &\text{if } d \equiv 0 \pmod 3.
               \end{cases}\]
  In the following, we show that these harmonic polynomials $u_{i,j} \in \H_{2d}$ for $i \in \{1,2,3\}$, $j \in \{0, \ldots, k\}$ satisfy the properties stated in Theorem~\ref{thm:HarmonicBasisHigher} if we define $\zeta, \xi \colon \{0, \ldots, k-1\} \to \{0,1\}$ as 
    \[\zeta(j) := \begin{cases} 0 &\text{if $d+k_0+j$ even}, \\ 1 &\text{otherwise},\end{cases} \qquad \text{and} \qquad \xi(j) = \begin{cases} 0 &\text{if } 0 \leq j \leq k_0, \\ 1 &\text{if } k_0 + 1 \leq j < k.\end{cases}\]
  
  First, we check the $(\zeta, \xi)$-equivariance of the harmonic polynomials $u_{i,j}$. The symmetric group $\GS$ is generated by the cycle $\Vc = (123)$ together with the transposition $\Vtr = (23)$, so in order to show $g_\sigma u_{i,j} = \sgn(\sigma)^{\zeta(j)} u_{\sigma(i),j}$ for all $\sigma \in \GS$, it suffices to check this for $\sigma = \Vc$ and $\sigma = \Vtr$.
  By definition of $u_{i,j}$ we have 
    \[g_\Vc u_{1,j} = u_{2,j}, \quad g_\Vc u_{2,j} = u_{3,j}, \quad g_\Vc u_{3,j} = u_{1,j},\]
  i.e.\ $g_\Vc u_{i,j} = u_{\Vc(i), j}$. Note that $\sgn(\Vc) = 1$. Since $g_\Vtr$ acts by interchanging the variables $y$ and $z$, it follows from property~(\ref*{it:oddSym}) in Lemmas~\ref{lem:evenHarmonics} and \ref{lem:oddHarmonics} that $g_\Vtr u_{1,j} = (-1)^{\zeta(j)} u_{1,j}$. Indeed, for $j \geq 1$ or $d \not\equiv 0 \pmod 3$, this is immediate from the definition of $u_{1,j}$, while for $j = 0$ and $d \equiv 0 \pmod 3$, it follows after using the identities $g_\Vtr g_\Vc = g_\Vc^2 g_\Vtr$ and $g_\Vtr g_\Vc^2 = g_\Vc g_\Vtr$ and observing that $\zeta(0) = 0$ in this case. After that, we deduce
    \[g_\Vtr u_{2,j}  = g_\Vtr (g_\Vc u_{1,j}) = g_\Vc^2 (g_\Vtr u_{1,j}) = (-1)^{\zeta(j)} g_\Vc^2(u_{1,j}) = (-1)^{\zeta(j)} u_{3,j},\]
  (where we used again $g_\Vtr g_\Vc = g_\Vc^2 g_\Vtr$) and therefore also $g_\Vtr u_{3,j} = (-1)^{\zeta(j)} u_{2,j}$. All together, we have $g_\Vtr u_{i,j} = (-1)^{\zeta(j)} u_{\Vtr(i), j}$. Note that $\sgn \Vtr = -1$. With this, we have verified verified the equivariance for permutation matrices.
  
  Now let $\tau \in \GA$. Note that property~(\ref*{it:evenMon}) of Lemmata~\ref{lem:evenHarmonics} and \ref{lem:oddHarmonics} ensures that $g_\tau u_{1,j} = (\tau_2 \tau_3)^{\xi(j)} u_{1,j}$. Using 
    \[g_\tau g_\Vc = \diag(\tau_2, \tau_3, \tau_1) \quad \text{and} \quad g_\tau g_\Vc^2 = \diag(\tau_3, \tau_1, \tau_2),\]
  we also see that $g_\tau u_{2,j} = (\tau_3 \tau_1)^{\xi(j)} u_{1,j}$ and $g_\tau u_{3,j} = (\tau_1 \tau_2)^{\xi(j)} u_{1,j}$. This concludes the proof of the $(\zeta, \xi)$-equivariance.
  
  If $d \equiv 2 \pmod 3$, then $k = \frac{4d+1}{3}$, so $\dim \H_{2d} = 3k$. The $3k$ harmonic polynomials $u_{i,j}$ are linearly independent by Lemma~\ref{lem:linearIndep}, hence they form a basis of $\H_{2d}$. This verifies properties (\ref*{it:HBasisGen}) and (\ref*{it:HBasisRel}) for $d \equiv 2 \pmod 3$.
  
  If $d \equiv 0 \pmod 3$, then $\dim \H_{2d} = 4d+1 = 3k-2$ and the identities $u_{1,0} = u_{2,0} = u_{3,0}$ hold by definition of $u_{i,0}$. This shows property (\ref*{it:HBasisRel}). The $3k-3$ harmonic polynomials $u_{i,j}$ for $i \in \{1,2,3\}$, $j \in \{1, \dots, k-1\}$ are linearly independent by Lemma~\ref{lem:linearIndep}, and the subspace of $\H_{2d}$ spanned by them does not contain the harmonic polynomial $u_{1,0} = u_{2,0} = u_{3,0}$, because the latter contains the monomial $x^{\frac{2d}{3}} y^{\frac{2d}{3}} z^{\frac{2d}{3}}$, which does not occur in $u_{i,j}$ for $j \geq 1$. This also establishes (\ref*{it:HBasisGen}) for $d \equiv 0 \pmod 3$.
  
  We are left with the case $d \equiv 1 \pmod 3$. Note that then $\dim \H_{2d} = 4d+1 = 3k-1$. By Lemma~\ref{lem:linearIndep}, the $3(k-1)$ polynomials $u_{i,j}$ for $i \in \{1,2,3\}$, $j \in \{1, \ldots, k-1\}$ span a $3(k-1)$-dimensional subspace of $\H_{2d}$. From considering the (non-)occurrence of the monomials
    \[x^{\frac{2d+2}{3}} y^{\frac{2d-1}{3}} z^{\frac{2d-1}{3}}, \qquad x^{\frac{2d-1}{3}} y^{\frac{2d+2}{3}} z^{\frac{2d-1}{3}} \qquad \text{and} \qquad x^{\frac{2d-1}{3}} y^{\frac{2d-1}{3}} z^{\frac{2d+2}{3}}\]
  in the expressions $u_{i,j}$, it follows that the subspace of $\H_{2d}$ spanned by all the $3k$ harmonic polynomials $u_{i,j}$ must be at least of dimension $3k-1$ and must hence coincide with $\H_{2d}$. Together with Lemma~\ref{lem:ZeroSum}, this verifies (\ref*{it:HBasisGen}) and (\ref*{it:HBasisRel}) in this last case, concluding the proof.
\end{proof}   

  \subsection{Illustrations and examples} \label{ssec:HarmonicsComparison}
  
For degree~4, the construction above reproduces the basis for $\H_4$ given in Proposition~\ref{prop:HarmonicBasisQuartic} (up to scalars). For arbitrary $d$, the proof of Theorem~\ref{thm:HarmonicBasisHigher} together with Remark~\ref{rem:ExplicitFormulae} give explicit closed formulas for elements $u_{i,j}$ of a $\B$-equivariant spanning set of $\H_{2d}$, by means of several case distinctions.

For example, tracing back the definitions gives: If $1 \leq j \leq k_0 = \left\lceil \frac{d-2}{3} \right\rceil$ and $d+j-k_0$ is odd, then
 \[u_{1,j} = \sum_{i_1+i_2+i_3 = d} (-1)^{i_2} \binom{i_2-s}{k_0-j-i_1} \;x^{2i_1} y^{2i_2} z^{2i_3},\]
where $s := \frac{1}{2}(d-k_0+j+1)$; and $u_{2,j}$, $u_{3,j}$ are obtained from this by cyclically permuting $\xyz$. For the other elements $u_{i,j}$, similar formulas can be written out.

We give examples for $\B$-equivariant spanning sets of $\H_{2d}$ for $d=3$ and $4$. For this, we will denote the constructed harmonic polynomials $u_{i,j} \in \H_{2d}$ from Theorem~\ref{thm:HarmonicBasisHigher} by $u_{i,j}^\supind{2d}$ to avoid confusion between different values of $d$.

In degree~6, the following 13 harmonic polynomials form the $\GB$-equivariant basis for $\H_6$ constructed in Theorem~\ref{thm:HarmonicBasisHigher}:

\begingroup \scriptsize
\[ \begin{array}{ll}
  u_{1, 0}^\supind{6} = u_{2, 0}^\supind{6} = u_{3, 0}^\supind{6} & = -2x^6-2z^6-2y^6+15 x^4 y^2+15 y^4 z^2+15 z^4 x^2+15 x^4 z^2+15 y^4 x^2+15 z^4 y^2-180 x^2 y^2 z^2,
  \end{array} \]
\[ \begin{array}{ll}
  u_{1, 1}^{(6)} &= -y^6+15 y^4 z^2-15 y^2 z^4+z^6, \\
  u_{2, 1}^{(6)} &= -z^6+15 z^4 x^2-15 z^2 x^4+x^6, \\
  u_{3, 1}^{(6)} &= -x^6+15 x^4 y^2-15 x^2 y^4+y^6, 
  \end{array} \qquad \begin{array}{ll}
  u_{1, 2}^{(6)} &= 3 y^5 z-10 y^3 z^3+3 y z^5, \\
  u_{2, 2}^{(6)} &= 3 z^5 x-10 z^3 x^3+3 z x^5, \\
  u_{3, 2}^{(6)} &= 3 x^5 y-10 x^3 y^3+3 x y^5, 
  \end{array}
\]
 \[ \begin{array}{ll}
  u_{1, 3}^{(6)} &= -10 x^2 y^3 z+10 x^2 y z^3+y^5 z-y z^5, \\
  u_{2, 3}^{(6)} &= -10 y^2 z^3 x+10 y^2 z x^3+z^5 x-z x^5, \\
  u_{3, 3}^{(6)} &= -10 z^2 x^3 y+10 z^2 x y^3+x^5 y-x y^5, 
   \end{array} \qquad \begin{array}{ll}
  u_{1, 4}^{(6)} &= 10 x^4 y z-10 x^2 y^3 z-10 x^2 y z^3+y^5 z+y z^5, \\
  u_{2, 4}^{(6)} &= 10 y^4 z x-10 y^2 z^3 x-10 y^2 z x^3+z^5 x+z x^5, \\
  u_{3, 4}^{(6)} &= 10 z^4 x y-10 z^2 x^3 y-10 z^2 x y^3+x^5 y+x y^5.
  \end{array}
\] 
\endgroup

The five harmonic polynomials $u_{1,j}^\supind{6}$ for $j \in \{0,\ldots, 4\}$ are illustrated in Figure~\ref{fig:Sextics}, presented in the same way as previously for quartics in Figure \ref{fig:quartics}. The remaining basis elements $u_{2,j}^\supind{6}$ and $u_{3,j}^\supind{6}$ not depicted arise from these by permuting the coordinates.

\begin{figure} 
  \includegraphics[width=0.19\textwidth]{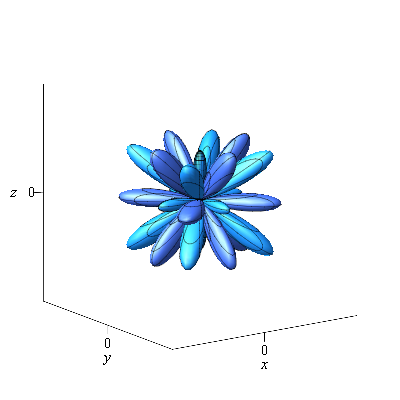}
  \includegraphics[width=0.19\textwidth]{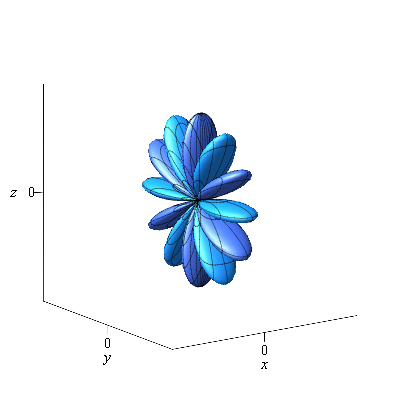}
  \includegraphics[width=0.19\textwidth]{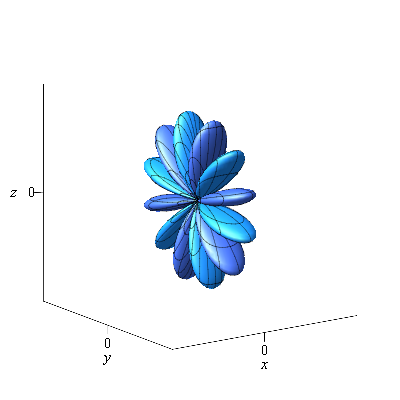}
  \includegraphics[width=0.19\textwidth]{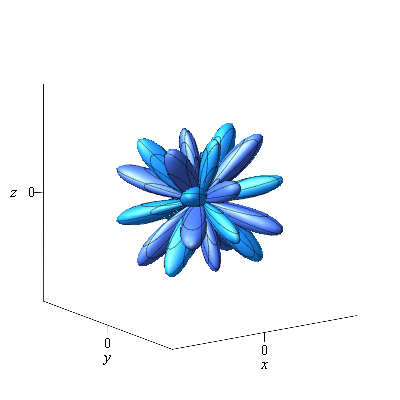}
  \includegraphics[width=0.19\textwidth]{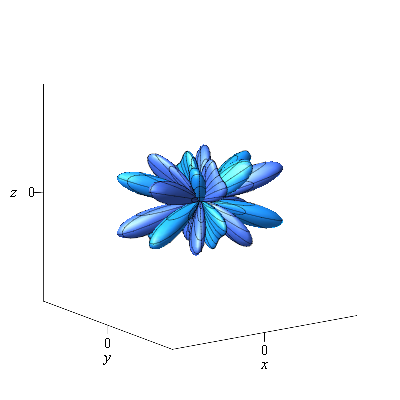}
  \caption{Illustration of the elements $u_{1,j}^\supind{6}$ in the 
  $\B$-symmetric basis of $\H_6$} \label{fig:Sextics}
\end{figure}

For degree~8, the following polynomials form the $\GB$-equivariant spanning set for $\H_8$:
\begingroup \scriptsize
\[ \begin{array}{ll}
u_{1,0}^{(8)} &= -14 x^2 y^6+210 z^2 x^2 y^4-210 y^2 z^4 x^2+14 z^6 x^2+y^8-14 y^6 z^2+14 y^2 z^6-z^8, \\
u_{2,0}^{(8)} &= -14 y^2 z^6+210 y^2 z^4 x^2-210 z^2 x^4 y^2+14 x^6 y^2+z^8-14 z^6 x^2+14 z^2 x^6-x^8, \\
u_{3,0}^{(8)} &= -14 z^2 x^6+210 z^2 x^4 y^2-210 z^2 x^2 y^4+14 y^6 z^2+x^8-14 x^6 y^2+14 x^2 y^6-y^8, \\
\end{array}\]
\[ \begin{array}{ll}
u_{1,1}^{(8)} &= y^8-28 y^6 z^2+70 y^4 z^4-28 y^2 z^6+z^8, \\
u_{2,1}^{(8)} &= z^8-28 z^6 x^2+70 z^4 x^4-28 z^2 x^6+x^8, \\
u_{3,1}^{(8)} &= x^8-28 x^6 y^2+70 x^4 y^4-28 x^2 y^6+y^8, \\
\end{array} \qquad \begin{array}{ll}
u_{1,2}^{(8)} &= -y^7 z+7 y^5 z^3-7 y^3 z^5+y z^7, \\
u_{2,2}^{(8)} &= -z^7 x+7 z^5 x^3-7 z^3 x^5+z x^7, \\
u_{3,2}^{(8)} &= -x^7 y+7 x^5 y^3-7 x^3 y^5+x y^7, \\
\end{array}\]
\[ \begin{array}{ll}
u_{1,3}^{(8)} &= 42 x^2 y^5 z-140 x^2 y^3 z^3+42 x^2 y z^5-3 y^7 z+7 y^5 z^3+7 y^3 z^5-3 y z^7, \\
u_{2,3}^{(8)} &= 42 y^2 z^5 x-140 y^2 z^3 x^3+42 y^2 z x^5-3 z^7 x+7 z^5 x^3+7 z^3 x^5-3 z x^7, \\
u_{3,3}^{(8)} &= 42 z^2 x^5 y-140 z^2 x^3 y^3+42 z^2 x y^5-3 x^7 y+7 x^5 y^3+7 x^3 y^5-3 x y^7, \\
\end{array}\]
\[ \begin{array}{ll}
u_{1,4}^{(8)} &= -35 x^4 y^3 z+35 x^4 y z^3+21 x^2 y^5 z-21 x^2 y z^5-y^7 z+y z^7, \\
u_{2,4}^{(8)} &= -35 y^4 z^3 x+35 y^4 z x^3+21 y^2 z^5 x-21 y^2 z x^5-z^7 x+z x^7, \\
u_{3,4}^{(8)} &= -35 z^4 x^3 y+35 z^4 x y^3+21 z^2 x^5 y-21 z^2 x y^5-x^7 y+x y^7, \\
\end{array}\]
\[ \begin{array}{ll}
u_{1,5}^{(8)} &= 14 x^6 y z-35 x^4 y^3 z-35 x^4 y z^3+21 x^2 y^5 z+21 x^2 y z^5-y^7 z-y z^7, \\
u_{2,5}^{(8)} &= 14 y^6 z x-35 y^4 z^3 x-35 y^4 z x^3+21 y^2 z^5 x+21 y^2 z x^5-z^7 x-z x^7, \\
u_{3,5}^{(8)} &= 14 z^6 x y-35 z^4 x^3 y-35 z^4 x y^3+21 z^2 x^5 y+21 z^2 x y^5-x^7 y-x y^7.
\end{array}\]
\endgroup 
They satisfy the linear relation
  \[u_{1,0}^\supind{8} + u_{2,0}^\supind{8} + u_{3,0}^\supind{8} = 0\]
and are otherwise linearly independent. This spanning set is illustrated in Figure~\ref{fig:Octics}.

\begin{figure}
  \hspace{-0.05\textwidth}\includegraphics[width=0.2\textwidth]{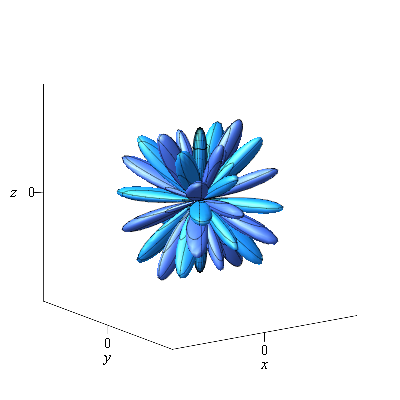}
  \hspace{-0.04\textwidth}\includegraphics[width=0.2\textwidth]{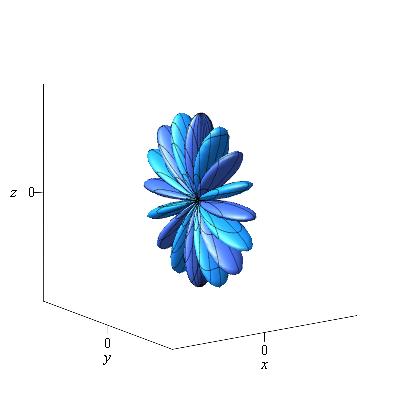}
  \hspace{-0.04\textwidth}\includegraphics[width=0.2\textwidth]{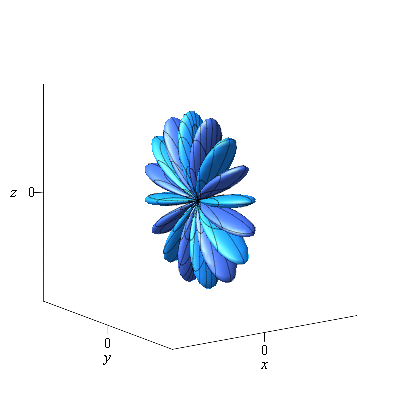}
  \hspace{-0.04\textwidth}\includegraphics[width=0.2\textwidth]{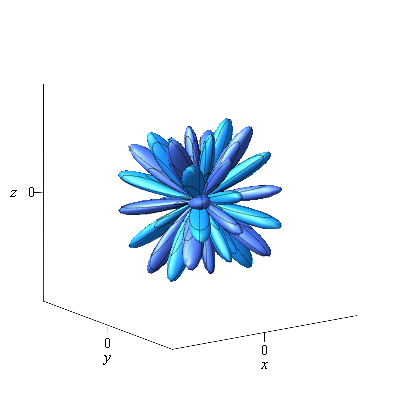}
  \hspace{-0.04\textwidth}\includegraphics[width=0.2\textwidth]{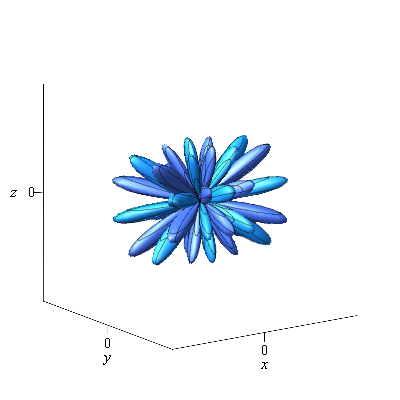}
  \hspace{-0.04\textwidth}\includegraphics[width=0.2\textwidth]{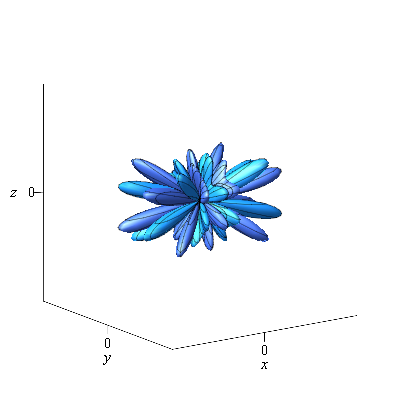}
  \caption{Illustration of the elements $u_{1,j}^\supind{8}$ in the 
  $\B$-symmetric spanning set of $\H_8$}  \label{fig:Octics}
\end{figure}

Generating these expressions is straightforward, as the definition of $u_{i,j}^\supind{2d}$ in the proof of Theorem~\ref{thm:HarmonicBasisHigher} and the formulas for $\Vg{\ell}$ and $\Vf{\ell}$ in Remark~\ref{rem:ExplicitFormulae} give explicit formulas for the coefficients in $u_{i,j}^\supind{2d}$. A straightforward implementation in \texttt{Maple}\footnote{Maple 2016. Maplesoft, a division of Waterloo Maple Inc., Waterloo, Ontario.} produces the $\B$-equivariant basis of $\H_{100}$ in a few seconds.

\subsubsection*{Representation-theoretic viewpoint and cubic harmonics}

While \emph{spherical harmonics} form the most commonly used basis of harmonic polynomials, another basis is given by the \emph{cubic harmonics} \cite{Muggli72,Fox77}, which correspond to the decomposition of the space of harmonic polynomials into irreducible $\B$-representation spaces. 
With the following representation theoretic viewpoint on Theorem~\ref{thm:HarmonicBasisHigher}, one can see how
our explicitly constructed $\B$-equivariant bases are related to these cubic harmonics. 
Indeed, the construction of the spanning set $u_{i,j}^\supind{2d}$ of $\H_{2d}$ 
can be seen as decomposing $\H_{2d}$ into $\B$-invariant subspaces endowed with 
one out of six possible representations of dimension $3$, $2$ or $1$.

For $\zeta, \xi \in \{0,1\}$, there is a three-dimensional representation space $W_{\zeta, \xi}$ of $\B$ given by
  \[\B \to \GL(\R^3) \subset \R^{3 \times 3}, \qquad g = g_\tau g_\sigma \mapsto (\sgn \sigma)^\zeta (\det g_\tau)^\xi g_\tau^\xi g_\sigma,\]
where we (uniquely) write $g \in \B$ as a product of a sign-change matrix $g_\tau$ and a permutation matrix $g_\sigma$.
$W_{0,1}$ and $W_{1,1}$ are irreducible representations of $\B$, but 
  \[W_{0,0} \cong W_{\text{triv}} \oplus W_{\text{std}}, \qquad W_{1,0} = W_{\text{sgn}} \oplus W_{\text{std}},\]
where 
$W_{\text{triv}}$ is the one-dimensional trivial representation,
$W_{\text{std}}$ the two-dimensional irreducible representation  of the symmetric group $\GS[3]$,
and 
$W_{\text{sgn}}$ the one-dimensional alternating representation of the symmetric group $\GS[3]$.
As representations of $\B$
 $W_{\text{triv}}$, $W_{\text{sgn}}$, $W_{\text{std}}$, $W_{0,1}$, $W_{1,1}$ 
are  exactly the irreducible $\B$-representations such that 
$-\id \in \B$ lies in the kernel of the corresponding group homomorphism $\B\to 
\GL(\R^k)$ \cite{GRW09}. These are thus the five irreducible 
representations of $\B/\langle -\id \rangle \cong \GS[4]$.

For $d \equiv 2 \pmod 3$, Theorem~\ref{thm:HarmonicBasisHigher} then corresponds to a decomposition of %
$\H_{2d}$ into the four types of representations $W_{0,0}$, $W_{1,0}$, $W_{0,1}$ and $W_{1,1}$ as
  \[\H_{2d} \cong \bigoplus_{j = 0}^{k-1} W_{\zeta(j), \xi(j)},\]
and we have given an explicit construction of a basis $u_{i,j}$ of $\H_{2d}$ corresponding to this decomposition. 
Indeed, for each $j \in \{0, \ldots, k-1\}$, the three-dimensional subspace of $\H_{2d}$ spanned by $u_{1,j}, u_{2,j}, u_{3,j}$ corresponds to the representation $W_{\zeta(j), \xi(j)}$. 

If $d \not\equiv 2 \pmod 3$, then only for $j \geq 1$ is it true that the subspace spanned by $u_{1, j}, u_{2, j}, u_{3, j}$ is the three-dimensional representation $W_{\zeta(j), \xi(j)}$. 
For $j = 0$, we get the one-dimensional trivial representation $W_{\text{triv}}$ if $d \equiv 0 \pmod 3$, or the two-dimensional irreducible 
representation $W_{\text{std}}$ of the symmetric group if $d \equiv 1 \pmod 3$. Precise counting gives the following decompositions of $\H_{2d}$ into the mentioned $\B$-representations~:
\[\H_{2d} \cong \begin{cases}
                  \left\lceil\frac{d+1}{6}\right\rceil \, W_{0,0} \,\oplus\, \left\lfloor\frac{d+1}{6}\right\rfloor\, W_{1,0} \,\oplus\, \left\lceil\frac{d}{2}\right\rceil \, W_{0,1} \,\oplus\, \left\lfloor\frac{d}{2}\right\rfloor\,  W_{1,1}
                  &\text{if }d \equiv 2 \pmod 3, \vspace{0.1cm}\\
                  W_{\text{triv}} \,\oplus\, \left\lfloor\frac{d}{6}\right\rfloor \, W_{0,0} \,\oplus\, \left\lceil\frac{d}{6}\right\rceil \, W_{1,0} \,\oplus\, \left\lceil\frac{d}{2}\right\rceil \, W_{0,1} \,\oplus\, \left\lfloor\frac{d}{2}\right\rfloor \, W_{1,1}
                  &\text{if }d \equiv 1 \pmod 3, \vspace{0.1cm}\\
                  W_{\text{std}} \,\oplus\, \left\lceil\frac{d-1}{6}\right\rceil \, W_{0,0} \,\oplus\, \left\lfloor\frac{d-1}{6}\right\rfloor \, W_{1,0} \,\oplus\, \left\lfloor\frac{d}{2}\right\rfloor \, W_{0,1}\,\oplus\, \left\lceil\frac{d}{2}\right\rceil \, W_{1,1}
                  &\text{if }d \equiv 0 \pmod 3.
                \end{cases}\]

  \subsection{\texorpdfstring{$\B$}{B\_3}-equivariant bases for \texorpdfstring{$\Lambda_{2d}$}{the slice}}

In order to describe rational $\B$-invariants on $\Lambda_{2d}$, we now describe a basis for $\Lambda_{2d} = \H_{2d} \oplus \VQ \H_{2d-2} \oplus \dots \oplus \VQ^{d-1}\Lambda_2$ with $\B$-symmetries. The main observation is that  we can combine the linear spanning sets of the spaces $\H_{2d+2}$ and $\H_{2d}$, when $d \equiv 0 \pmod 3$, from Theorem~\ref{thm:HarmonicBasisHigher}, so as to remove the linear dependencies. 
Specifically, we observe:
\begin{lemma} \label{lem:combineTwoHarmonicSpaces}
  Let $m \geq 1$ and $k := 8m+2$. The $3k$-dimensional subspace $\H_{6m+2} \oplus \VQ \H_{6m} \subset V_{6m+2}$ has a $\B$-equivariant basis $\tilde u_{i,j}$ ($1 \leq i \leq 3, 0 \leq j < k$) with respect to some maps $\zeta, \xi \colon \{0,\ldots, k-1\} \to \{0,1\}$.
\end{lemma}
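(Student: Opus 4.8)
The plan is to assemble the requested basis of $\H_{6m+2} \oplus \VQ\H_{6m}$ by gluing together the two $\B$-equivariant spanning sets furnished by Theorem~\ref{thm:HarmonicBasisHigher}. Write $d := 3m+1 \equiv 1 \pmod 3$ and $d' := 3m \equiv 0 \pmod 3$. Theorem~\ref{thm:HarmonicBasisHigher} gives a $(\zeta_1,\xi_1)$-equivariant spanning set $\{u_{i,j}^{\supind{6m+2}} \mid 1 \leq i \leq 3,\ 0 \leq j < 4m+2\}$ of $\H_{6m+2}$ subject to the single relation $u_{1,0}^{\supind{6m+2}} + u_{2,0}^{\supind{6m+2}} + u_{3,0}^{\supind{6m+2}} = 0$, and a $(\zeta_0,\xi_0)$-equivariant spanning set $\{u_{i,j}^{\supind{6m}} \mid 1 \leq i \leq 3,\ 0 \leq j < 4m+1\}$ of $\H_{6m}$ subject to $u_{1,0}^{\supind{6m}} = u_{2,0}^{\supind{6m}} = u_{3,0}^{\supind{6m}}$; by the explicit construction in the proof of that theorem one moreover has $\zeta_1(0) = 1$, $\xi_1(0) = 0$ and $\zeta_0(0) = \xi_0(0) = 0$. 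Since $\VQ$ is $\B$-fixed, the polynomials $\VQ u_{i,j}^{\supind{6m}}$ form an equivariant spanning set of $\VQ\H_{6m}$ with the same relations. By Lemma~\ref{lem:linearIndep} the two sub-families indexed by $j \geq 1$ are linearly independent; comparing cardinalities with $\dim\H_{6m+2} = 12m+5$ and $\dim\VQ\H_{6m} = \dim\H_{6m} = 12m+1$, and invoking Theorem~\ref{thm:HarmonicDecomp}, one gets a decomposition $\H_{6m+2}\oplus\VQ\H_{6m} = W' \oplus L_1 \oplus L_2$, where $W'$ is the span of the $j\geq1$ elements (of dimension $3k-3$ with $k = 8m+2$), $L_2 := \langle u_{1,0}^{\supind{6m+2}}, u_{2,0}^{\supind{6m+2}}, u_{3,0}^{\supind{6m+2}}\rangle$ is $2$-dimensional (a copy of $W_{\mathrm{std}}$, with $\GS$ acting on the generators by $g_\sigma u_{i,0}^{\supind{6m+2}} = \sgn(\sigma)\,u_{\sigma(i),0}^{\supind{6m+2}}$ and $\GA$ trivially), and $L_1 := \langle \VQ u_{1,0}^{\supind{6m}}\rangle$ is $1$-dimensional (the trivial $\B$-module).

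For the indices $j \geq 1$ I would keep these elements unchanged, reindexing so that the $3(4m+1)$ polynomials $u_{i,j}^{\supind{6m+2}}$ occupy $j \in \{1,\dots,4m+1\}$ and the $3(4m)$ polynomials $\VQ u_{i,j}^{\supind{6m}}$ occupy $j \in \{4m+2,\dots,8m+1\}$, inheriting the corresponding values of $\zeta,\xi$ (the $\VQ$-factor changes nothing, as $\VQ$ is $\B$-fixed). The content of the proof is concentrated at $j = 0$: there I must fuse the defective pieces $L_1 \cong W_{\mathrm{triv}}$ and $L_2 \cong W_{\mathrm{std}}$ into a single honest equivariant triple. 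Guided by the isomorphism $W_{0,0} \cong W_{\mathrm{triv}} \oplus W_{\mathrm{std}}$ recalled in the excerpt, I set, with indices of $u^{\supind{6m+2}}$ read cyclically in $\{1,2,3\}$,
\[\tilde u_{i,0} := \VQ u_{1,0}^{\supind{6m}} + u_{i+1,0}^{\supind{6m+2}} - u_{i+2,0}^{\supind{6m+2}}, \qquad \zeta(0) := 0, \quad \xi(0) := 0.\]
The antisymmetric difference $u_{i+1,0}^{\supind{6m+2}} - u_{i+2,0}^{\supind{6m+2}}$ is chosen exactly so that the $\sgn(\sigma)$-twist carried by $L_2$ is absorbed and $\GS$ ends up acting by a plain permutation.

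It then remains to verify equivariance and the basis property. For the $j\geq1$ elements equivariance is immediate from Theorem~\ref{thm:HarmonicBasisHigher}; for the $\tilde u_{i,0}$ it suffices to test the generators $\Vc = (123)$ and $(23)$ of $\GS$ together with the sign-change matrices, and a direct computation — using $\sgn(\Vc) = 1$, $\sgn((23)) = -1$, the $\B$-invariance of $\VQ u_{1,0}^{\supind{6m}}$, and $g_\sigma u_{i,0}^{\supind{6m+2}} = \sgn(\sigma)\,u_{\sigma(i),0}^{\supind{6m+2}}$, $g_\tau u_{i,0}^{\supind{6m+2}} = u_{i,0}^{\supind{6m+2}}$ — yields $g_\sigma \tilde u_{i,0} = \tilde u_{\sigma(i),0}$ and $g_\tau \tilde u_{i,0} = \tilde u_{i,0}$, i.e.\ $(\zeta(0),\xi(0)) = (0,0)$-equivariance. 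Finally, the full family $\{\tilde u_{i,j}\}$ has $3k$ elements and $\dim(\H_{6m+2}\oplus\VQ\H_{6m}) = 3k$, so it is enough to check that it spans: the $j\geq1$ part spans $W'$, the sum $\tilde u_{1,0} + \tilde u_{2,0} + \tilde u_{3,0} = 3\,\VQ u_{1,0}^{\supind{6m}}$ recovers $L_1$, and hence also the differences $u_{i+1,0}^{\supind{6m+2}} - u_{i+2,0}^{\supind{6m+2}} = \tilde u_{i,0} - \VQ u_{1,0}^{\supind{6m}}$, which span the $2$-dimensional $L_2$ since $\sum_i u_{i,0}^{\supind{6m+2}} = 0$; so the triple spans $L_1\oplus L_2$ and the whole family is a basis.

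The only genuinely delicate part is the sign bookkeeping in the $j = 0$ equivariance check — arranging the cyclic offsets in $\tilde u_{i,0}$ so that the $\sgn$-twist on $W_{\mathrm{std}}$ cancels exactly — together with the routine but slightly fiddly dimension count that forces $\dim L_1 = 1$, $\dim L_2 = 2$ and upgrades ``spanning set'' to ``basis''. Everything else follows directly from Theorems~\ref{thm:HarmonicBasisHigher} and~\ref{thm:HarmonicDecomp} and Lemma~\ref{lem:linearIndep}.
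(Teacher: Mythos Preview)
Your proof is correct. The overall strategy---keep all $j \geq 1$ triples from both harmonic pieces unchanged and merge only the two defective $j=0$ pieces into a single honest triple---is the same as the paper's, but your treatment of the $j=0$ triple differs and is in fact more careful. The paper simply takes $\tilde u_{i,0} := \VQ u_{i,0}^{\supind{6m}} + u_{i,0}^{\supind{6m+2}}$, whereas you use the antisymmetric combination $\tilde u_{i,0} := \VQ u_{1,0}^{\supind{6m}} + u_{i+1,0}^{\supind{6m+2}} - u_{i+2,0}^{\supind{6m+2}}$.

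Your choice is the right one. From the explicit construction in Theorem~\ref{thm:HarmonicBasisHigher} one has $\zeta_1(0)=1$ and $\zeta_0(0)=0$ (this is visible already in the degree-$8$ example, where swapping $y\leftrightarrow z$ sends $u_{1,0}^{\supind{8}}$ to $-u_{1,0}^{\supind{8}}$). Consequently the paper's $\tilde u_{i,0}$ does \emph{not} satisfy Definition~\ref{def:B3Equivariance} for either value of $\zeta(0)$: for instance
\[g_{(23)}\big(\VQ u_{1,0}^{\supind{6m}} + u_{1,0}^{\supind{6m+2}}\big) \;=\; \VQ u_{1,0}^{\supind{6m}} - u_{1,0}^{\supind{6m+2}},\]
which equals neither $\tilde u_{1,0}$ nor $-\tilde u_{1,0}$. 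Your antisymmetric twist absorbs the $\sgn$-factor carried by $L_2$ and realizes the isomorphism $W_{\mathrm{triv}}\oplus W_{\mathrm{std}} \cong W_{0,0}$ explicitly, producing a genuine $(\zeta,\xi)$-equivariant triple with $\zeta(0)=\xi(0)=0$. The paper's three elements do span the correct $3$-dimensional space $L_1\oplus L_2$ (so the resulting family is still a basis of $\H_{6m+2}\oplus\VQ\H_{6m}$); what is missing there is precisely the strict equivariance, which your construction restores.
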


\begin{proof}
  A $\B$-equivariant basis is given by
    \[\{\VQ u_{i,0}^\supind{6m} + u_{i,0}^\supind{6m+2} \mid 1 \leq i \leq 3\} \; \cup \; \{\VQ u_{i,j}^\supind{6m} \mid i,j \geq 1\} \; \cup \; \{u_{i,j}^\supind{6m+2} \mid i,j \geq 1\}\]
  where $\{u_{i,j}^\supind{6m}\}$ and $\{u_{i,j}^\supind{6m+2}\}$ are respectively the spanning sets of $\H_{6m}$ and $\H_{6m+2}$ constructed in Theorem~\ref{thm:HarmonicBasisHigher} and the maps $\zeta$ and $\xi$  are defined accordingly.
\end{proof}

\begin{thm} \label{thm:SliceBasisHigher}
  Let $d \geq 1$ and let $k := \left\lfloor\frac{(d+1)(2d+1)}{3}\right\rfloor-1$. Then the vector space $\Lambda_{2d}$ contains $3k$ linearly independent elements $w_{i,j} \in \Lambda_{2d}$ for $1 \leq i \leq 3, 0 \leq j < k$ forming a $\B$-equivariant set with respect to some maps $\zeta, \xi \colon \{0,\ldots, k-1\} \to \{0,1\}$. \\
  If $d$ is not divisible by~3, then these $w_{i,j}$ form a basis of $\Lambda_{2d}$. If $d \equiv 0 \pmod 3$, the set $\{w_{i,j}\}$ can be extended to a basis of $\H_{2d}$ by adding an element $w_\infty \in \Lambda_{2d}$ which is fixed under the action of $\B$.
\end{thm}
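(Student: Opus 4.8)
The plan is to build the elements $w_{i,j}$ blockwise out of the $\B$-equivariant (spanning) sets of the harmonic pieces furnished by Theorem~\ref{thm:HarmonicBasisHigher}, gluing together the "deficient" pieces by means of Lemma~\ref{lem:combineTwoHarmonicSpaces}. Recall the Harmonic Decomposition (Theorem~\ref{thm:HarmonicDecomp})
\[\Lambda_{2d} = \H_{2d} \oplus \VQ\H_{2d-2} \oplus \dots \oplus \VQ^{d-2}\H_4 \oplus \VQ^{d-1}\Lambda_2.\]
Since $\VQ$ is fixed by $\B$, multiplication by $\VQ^{d-e}$ is a $\B$-equivariant isomorphism $\H_{2e} \xrightarrow{\cong} \VQ^{d-e}\H_{2e}$, so a $(\zeta,\xi)$-equivariant set of $\H_{2e}$ is carried to one of $\VQ^{d-e}\H_{2e}$. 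First I would record two preliminaries: the subspace $\VQ^{d-1}\Lambda_2$ carries the one-block $\B$-equivariant basis $\{\VQ^{d-1}x^2,\, \VQ^{d-1}y^2,\, \VQ^{d-1}z^2\}$ (permutation matrices permute these three elements, sign-change matrices fix each); and, using $\dim\H_{2e} = 4e+1$ together with $\dim\Lambda_2 = 3$, the identity $\dim\Lambda_{2d} = (d+1)(2d+1)-3$, whence $3k = \dim\Lambda_{2d}$ when $3\nmid d$ and $3k = \dim\Lambda_{2d}-1$ when $3\mid d$. This fixes in advance the cardinality to be attained.

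Next I would sort the harmonic summands $\H_{2e}$, $2 \le e \le d$, by the residue of $e$ modulo $3$. If $e \equiv 2\pmod 3$, Theorem~\ref{thm:HarmonicBasisHigher} already provides a genuine $\B$-equivariant basis of $\H_{2e}$. If $e \equiv 1\pmod 3$ — then necessarily $e \ge 4$, say $e = 3m+1$ with $m \ge 1$ — I pair $\H_{2e}$ with the summand $\VQ\H_{2(e-1)}$ immediately below it; note $e-1 \equiv 0\pmod 3$ and $2 \le e-1 \le d$, so this summand really occurs in $\Lambda_{2d}$. Applying Lemma~\ref{lem:combineTwoHarmonicSpaces} (with $6m+2 = 2e$) and then multiplying by $\VQ^{d-e}$ gives a $\B$-equivariant basis of $\VQ^{d-e}\H_{2e} \oplus \VQ^{d-e+1}\H_{2(e-1)}$. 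A short bookkeeping check shows that this pairing consumes \emph{every} summand with $e \equiv 1$ and \emph{every} summand with $e \equiv 0$, with the single exception of $\H_{2d}$ itself in the case $d \equiv 0\pmod 3$ (its would-be partner $\H_{2d+2}$ is not a summand of $\Lambda_{2d}$); when $d \equiv 1$ or $2 \pmod 3$, nothing is left over.

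It then remains to treat the leftover summand $\H_{2d}$ when $d \equiv 0\pmod 3$. Here Theorem~\ref{thm:HarmonicBasisHigher} gives a $\B$-equivariant spanning set $\{u_{i,j}^\supind{2d}\}$ in which $u_{1,0}^\supind{2d} = u_{2,0}^\supind{2d} = u_{3,0}^\supind{2d}$, the $u_{i,j}^\supind{2d}$ with $j \ge 1$ are linearly independent, and these together with $u_{1,0}^\supind{2d}$ span $\H_{2d}$; moreover $\zeta(0) = \xi(0) = 0$ in this case, so by the defining equivariance the common element $u_{1,0}^\supind{2d}$ is fixed by $\B$. I would set $w_\infty := u_{1,0}^\supind{2d}$, keep the $u_{i,j}^\supind{2d}$ with $j \ge 1$ among the $w_{i,j}$, and collect all the blocks produced so far. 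The union of all these blocks is a $\B$-equivariant set (in the sense of Definition~\ref{def:B3Equivariance}) with respect to the maps $\zeta,\xi$ obtained by concatenating the maps of the individual blocks, and it is linearly independent because the Harmonic Decomposition is a direct sum and within each summand the corresponding block is linearly independent; this yields the $3k$ elements $w_{i,j}$. When $3 \nmid d$, every block is a basis of its summand, so the $w_{i,j}$ form a basis of $\Lambda_{2d}$; when $3 \mid d$, adjoining $w_\infty$ completes the $\H_{2d}$-block to a basis of that summand, hence completes the $w_{i,j}$ to a basis of $\Lambda_{2d}$. (For $d = 1$ there are no harmonic summands, and the claim is the obvious block on $\Lambda_2$.)

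The construction is essentially an assembly argument, so I do not expect a genuine obstacle; the two points that need care are the combinatorial claim that the "pair an $e\equiv 1$ summand with the $e\equiv 0$ summand just below it" scheme exhausts all deficient summands except exactly $\H_{2d}$ when $3\mid d$, and the verification that the surviving vector $u_{1,0}^\supind{2d}$ is genuinely $\B$-fixed and linearly independent of the other blocks — both of which are delivered by Theorem~\ref{thm:HarmonicBasisHigher} and its proof. The case $2d=4$ recovers the basis of $\Lambda_4$ underlying~\eqref{coordsPoly}, with $\H_4$ handled by the case $e = 2 \equiv 2 \pmod 3$, i.e.\ by Proposition~\ref{prop:HarmonicBasisQuartic}.
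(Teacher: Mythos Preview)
Your proposal is correct and follows essentially the same approach as the paper: handle $\VQ^{d-1}\Lambda_2$ with the obvious triple, use Theorem~\ref{thm:HarmonicBasisHigher} directly on the summands with $e\equiv 2\pmod 3$, pair each $e\equiv 1$ summand with the $e-1\equiv 0$ summand just below via Lemma~\ref{lem:combineTwoHarmonicSpaces}, and for $d\equiv 0\pmod 3$ treat the leftover $\H_{2d}$ by keeping the $j\ge 1$ elements and setting $w_\infty := u_{1,0}^\supind{2d}$. Your bookkeeping and dimension count are correct, and your verification that $w_\infty$ is $\B$-fixed matches the paper's.
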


\begin{proof}
  We recall that $\Lambda_{2d} = \H_{2d} \oplus \VQ \H_{2d-2} \oplus \dots \oplus \VQ^{d-2} \H_4 \oplus \VQ^{d-1}\Lambda_2$, where $\VQ^{d-1} \Lambda_2 \subset V_{2d}$ is a three-dimensional subspace with basis $w_{1,0} := \VQ^{d-1} x^2, w_{2,0} := \VQ^{d-1} y^2, w_{3,0} := \VQ^{d-1} z^2$. For $i \in \{1,2,3\}$ we observe that $g w_{i,0} = w_{\sigma(i),0}$ if $g \in \B$ is a $3 \times 3$ permutation matrix with corresponding permutation $\sigma$ and $g w_{i,0} = w_{i,0}$ if $g \in \B$ is a $3 \times 3$ sign-change matrix.
  
  From Theorem~\ref{thm:HarmonicBasisHigher} and Lemma~\ref{lem:combineTwoHarmonicSpaces} we obtain bases $\{u_{i,j}^\supind{6m-2}\} \subset \H_{6m-2}$ and $\{\tilde u_{i,j}^\supind{6m+2}\} \subset \H_{6m+2} \oplus \VQ \H_{6m}$ with the desired $\B$-equivariance property. By multiplying with appropriate powers of $\VQ$, we obtain the desired elements $w_{i,j} \in \Lambda_{2d}$. Explicitly,
    \[\{w_{i,j}\} = \left\{\VQ^{d-3m-1} \tilde u_{i,j}^\supind{6m+2} \mid 1 \leq m \leq \frac{d-2}{6}\right\} \cup \left\{\VQ^{d-3m+1} u_{i,j}^\supind{6m-2} \mid 1 \leq m \leq \frac{d+1}{3}\right\} \cup \left\{u_{i,j}^\supind{6m} \mid m = \frac{d}{3}, j \geq 1\right\},\]
  additionally to the elements $w_{i,0}$ from above. Note that the last set in this union is empty if $d \not\equiv 0 \pmod 3$, and otherwise consists of those elements $u_{i,j}^\supind{2d} \in \H_{2d}$ that are linearly independent.
  
  From Theorem~\ref{thm:HarmonicBasisHigher} and Lemma~\ref{lem:combineTwoHarmonicSpaces} it is clear that this set $\{w_{i,j}\}$ is $\B$-equivariant (adequately inheriting the definition of the maps $\zeta, \xi$ from Theorem~\ref{thm:HarmonicBasisHigher} and Lemma~\ref{lem:combineTwoHarmonicSpaces}). In the case $d \equiv 0 \pmod 3$, we do not yet have a basis of $\Lambda_{2d}$, because we left out the element
    \[w_\infty := u_{1,0}^\supind{2d} = u_{2,0}^\supind{2d} = u_{3,0}^\supind{2d}\]
  from the linear spanning set $u_{i,j}^\supind{2d}$ of the subspace $\H_{2d} \subset \Lambda_{2d}$ described in Theorem~\ref{thm:HarmonicBasisHigher}. Note that the relation $u_{1,0}^\supind{2d} = u_{2,0}^\supind{2d} = u_{3,0}^\supind{2d}$ and the description of the $\B$-action on $u_{1,0}^\supind{2d}, u_{2,0}^\supind{2d}, u_{3,0}^\supind{2d}$ in Theorem~\ref{thm:HarmonicBasisHigher} imply that $g w_\infty = w_\infty$ for all $g \in \B$. This concludes the proof.
\end{proof}

Again it is important for applications to not consider Theorem~\ref{thm:SliceBasisHigher} a pure existence result, but to observe how to immediately obtain the basis 
$\{w_{i,j}\}$ (resp.\ $\{w_{i,j}, w_\infty\}$) of $\Lambda_{2d}$ from harmonic polynomials $u_{i,j}^\supind{2k} \in \H_{2k}$ as in Theorem~\ref{thm:HarmonicBasisHigher}. In particular, one can easily write out such a basis. Slightly deviating from Definition~\ref{def:B3Equivariance}, we will call the constructed basis \emph{$\B$-equivariant}, even in the case $d \equiv 0 \pmod 3$, when the basis includes an element $w_\infty$ fixed under $\B$.

\begin{remark} \label{rem:ZetaXiZero}
  We observe that for $d \geq 2$ there exists an index $j_0 \in \{0, \ldots, k-1\}$ such that $\zeta(j_0) = 0$ and $\xi(j_0) = 1$ (and may assume after permuting the basis elements that this holds for $j=0$). Indeed, the basis contains -- up to multiplication with a power of $\VQ$ -- the elements $u_{i,j}^\supind{4}$ forming a basis of $\H_4$.
  This basis of $\H_4$, given in Proposition~\ref{prop:HarmonicBasisQuartic}, satisfies the desired property, as remarked after Theorem~\ref{thm:HarmonicBasisHigher}.
\end{remark}

  \subsection{Rational invariants for ternary forms of arbitrary even degree} \label{ssec:HigherInvariants}

With the construction of the $\B$-equivariant basis for the slice $\Lambda_{2d} \subset \VV_{2d}$ established for arbitrary $d$, we can now turn to the construction of a generating set of rational invariants for the action of $\B$ on $\Lambda_{2d}$, generalizing the case $2d=4$ described in Section~\ref{sec:Quartics}.

Let $d \geq 2$ and consider a basis $\{w_{i,j} \mid 1 \leq i \leq 3, 0 \leq j < k\} \; [\cup \; \{w_\infty\}]$ of $\Lambda_{2d}$ and corresponding maps $\zeta, \xi \colon \{0, \ldots, k-1\} \to \{0,1\}$ as described in Theorem~\ref{thm:SliceBasisHigher}. Here, and from now on, we  distinguish between the case $d \not\equiv 0 \pmod 3$ and the case  $d \equiv 0 \pmod 3$, where there is an additional basis element $w_\infty \in \Lambda_{2d}$, with square brackets, whenever we are confident that no confusion arises from this.

As observed in Remark~\ref{rem:ZetaXiZero}, we can assume $\zeta(0) = 0, \xi(0)=1$. With this chosen basis, an element $v \in \Lambda_{2d}$ can be uniquely expressed as
\begin{equation} \label{eq:coordInSlice}
  v = \sum_{i=1}^3 \Ct[i] w_{i,0} + \sum_{i=1}^3 \sum_{j=1}^{k-1} \alpha_{i,j} w_{i,j} \; [+ \; \alpha_\infty w_\infty]
\end{equation}
for $\Ct[i], \alpha_{i,j} \; [, \alpha_\infty] \in \R$. In this way, we identify $\R(\Lambda_{2d})$ with the field of rational expressions in variables $\Ct[i], \alpha_{i,j} \:[, \alpha_\infty]$, i.e.\ 
  \[\R(\Lambda_{2d}) = \R(\Ct[i], \alpha_{i,j} \; [, \alpha_\infty] \; \mid 1 \leq i \leq 3, 0 \leq j < k).\]

\begin{thm} \label{thm:InvariantsHigherDeg}
  With the notations as above, a minimal generating set of rational invariants 
  for $\R(\Lambda_{2d})^{\B}$ is given by the $2d^2+3d-2 = \dim \Lambda_{2d}$ 
  polynomial functions $\setP = \{p_{i,j}\} \; [\cup \{p_\infty\}]$, whose 
  values at $v$ as in \eqref{eq:coordInSlice} are given as follows:
  \begin{itemize}
    \item Three invariants are given by
          \begin{align*}
            p_{1,0} &:= \Ct[1]^2+\Ct[2]^2+\Ct[3]^2, \\
            p_{2,0} &:= \Ct[1] \Ct[2]\Ct[3], \\
            p_{3,0} &:= \Ct[1]^4+\Ct[2]^4+\Ct[3]^4.
          \end{align*}
    \item In the case $d \equiv 0 \pmod 3$, one invariant is given by
            \[p_\infty := \alpha_\infty.\]
    \item The remaining invariants in the generating set are given as the entries of the $3\times (k-1)$-matrix 
            \[\begin{pmatrix}
            p_{1,1} & \dots & p_{1,k-1} \\
            p_{2,1} & \dots & p_{2,k-1} \\
            p_{3,1} & \dots & p_{3,k-1}
            \end{pmatrix}
            := \begin{pmatrix}
            1 & 1 & 1 \\
            \Ct[1]^2 & \Ct[2]^2 & \Ct[3]^2 \\
            \Ct[1]^4 & \Ct[2]^4 & \Ct[3]^4
            \end{pmatrix}  M(\alpha),\]
            where $k=\lfloor \frac{2d^2+3d-5}{3}\rfloor $ 
            and $M(\alpha)$ is the $3\times (k-1)$-matrix whose $(i,j)$-th entry is
            \[M(\alpha)_{i,j} = \Ct[i]^{\xi(j)} \left((\Ct[1]^2-\Ct[2]^2)(\Ct[2]^2-\Ct[3]^2)(\Ct[3]^2-\Ct[1]^2)\right)^{\zeta(j)}\; \alpha_{i,j}.\]
  \end{itemize}
\end{thm}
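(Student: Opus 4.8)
The plan is to follow the proof of Theorem~\ref{thm:TernaryQuarticInvariants}, of which this statement is the direct generalization, replacing the basis of Proposition~\ref{prop:HarmonicBasisQuartic} by the $\B$-equivariant basis $\{w_{i,j}\}\,[\cup\{w_\infty\}]$ of Theorem~\ref{thm:SliceBasisHigher}, reindexed (as permitted by Remark~\ref{rem:ZetaXiZero}) so that $\zeta(0)=0$, $\xi(0)=1$; the coordinates $\Ct[i]$ on the block $j=0$ then play the role that $\Ct[i]$ plays in the quartic case. Writing $v$ as in \eqref{eq:coordInSlice} and abbreviating $[\Ct^2]:=(\Ct[1]^2-\Ct[2]^2)(\Ct[2]^2-\Ct[3]^2)(\Ct[3]^2-\Ct[1]^2)$, the $\B$-equivariance of the basis translates into transformation rules for the coordinates $\Ct[i],\alpha_{i,j},\alpha_\infty$ exactly as in Section~\ref{ssec:HarmonicBasisQuartics}: $g_\sigma$ permutes the $\Ct[i]$ and turns the $\alpha$-coordinates into $\sgn(\sigma)^{\zeta(j)}\alpha_{\sigma^{-1}(i),j}$, while $g_\tau$ scales $\Ct[i]$ by $\tau_1\tau_2\tau_3/\tau_i$ and $\alpha_{i,j}$ by $(\tau_1\tau_2\tau_3/\tau_i)^{\xi(j)}$, and $\alpha_\infty$ is fixed by all of $\B$. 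First I would record the consequences: $p_\infty=\alpha_\infty$ is invariant; $\Ct[1]^2,\Ct[2]^2,\Ct[3]^2$ and $\Ct[1]\Ct[2]\Ct[3]$ are fixed by every sign-change matrix, so $p_{1,0},p_{2,0},p_{3,0}$ are $\B$-invariant; and the Vandermonde matrix $\Eqm[2](v)$ in $\Ct[1]^2,\Ct[2]^2,\Ct[3]^2$ satisfies $\Eqm[2](g_\sigma v)=g_\sigma\Eqm[2](v)$ and $\Eqm[2](g_\tau v)=\Eqm[2](v)$, while the exponents $\xi(j)$ on $\Ct[i]$ and $\zeta(j)$ on $[\Ct^2]$ built into $M(\alpha)(v)$ are exactly what cancels the twist of $\alpha_{i,j}$, so that each column of $M(\alpha)(v)$ transforms by $g_\sigma$ through the permutation action and is fixed by $g_\tau$; hence the entries of $\Eqm[2](v)^{T}M(\alpha)(v)$, which are the $p_{i,j}$ with $j\geq1$, are $\B$-invariant. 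These verifications are direct computations from the transformation rules.

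For the generating property I would proceed in three steps as in the quartic case. Since $\B$ is finite, $\R(\Lambda_{2d})^{\B}$ is the fraction field of $\R[\Lambda_{2d}]^{\B}$ by \cite[Theorem~3.3]{PV94}, so it suffices to write each polynomial invariant as a rational expression in $\setP$. Next, as in Step~2 of the quartic proof: if a polynomial invariant involves only $\Ct[1],\Ct[2],\Ct[3]$, then sign-change invariance forces each monomial $\Ct[1]^a\Ct[2]^b\Ct[3]^c$ to have $a\equiv b\equiv c\pmod{2}$, hence it lies in $\R[\Ct[1]^2,\Ct[2]^2,\Ct[3]^2,\Ct[1]\Ct[2]\Ct[3]]$, and permutation invariance then makes it a polynomial in $p_{1,0},p_{2,0},p_{3,0}$ (the elementary symmetric polynomials of $\Ct[1]^2,\Ct[2]^2,\Ct[3]^2$ being $p_{1,0}$, $\frac{1}{2}(p_{1,0}^2-p_{3,0})$ and $p_{2,0}^2$).

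The third and final step treats an arbitrary polynomial invariant $p$. On the dense open locus where $[\Ct^2]\neq0$ and $\Ct[1]\Ct[2]\Ct[3]\neq0$ the Vandermonde $\Eqm[2](v)$ is invertible, so from $(p_{1,j},p_{2,j},p_{3,j})^{T}=\Eqm[2](v)^{T}M(\alpha)(v)_{\bullet,j}$ and $M(\alpha)(v)_{i,j}=\Ct[i]^{\xi(j)}[\Ct^2]^{\zeta(j)}\alpha_{i,j}$ I can solve each $\alpha_{i,j}$ ($j\geq1$) as a linear combination of $p_{1,j},p_{2,j},p_{3,j}$ with coefficients rational in $\Ct[1],\Ct[2],\Ct[3]$, and $\alpha_\infty=p_\infty$. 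Substituting, $p$ becomes a polynomial in the invariants $\{p_{i,j}\}_{j\geq1}$ and $p_\infty$ with coefficients in $\R(\Ct[1],\Ct[2],\Ct[3])$. The point I expect to need the most care is the uniqueness of this representation: the reconstruction just described shows that the $\dim\Lambda_{2d}$ functions $\Ct[1],\Ct[2],\Ct[3],\{p_{i,j}\}_{j\geq1},p_\infty$ generate $\R(\Lambda_{2d})$, hence are algebraically independent, so $\{p_{i,j}\}_{j\geq1}$ and $p_\infty$ are algebraically independent over $\R(\Ct[1],\Ct[2],\Ct[3])$ and the expansion of $p$ in them is unique. Applying any $g\in\B$ fixes $p$ and all monomials in the $p_{i,j},p_\infty$, hence fixes each coefficient; since each coefficient depends only on the $\Ct[i]$, on which $\B$ acts in the usual way, it is a $\B$-invariant rational function of $\Ct[1],\Ct[2],\Ct[3]$, hence by the first two steps a rational expression in $p_{1,0},p_{2,0},p_{3,0}$. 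Thus $p$ is a rational expression in $\setP$, so $\setP$ generates $\R(\Lambda_{2d})^{\B}$.

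Finally, since $\B$ is finite, $\R(\Lambda_{2d})^{\B}$ has transcendence degree $\dim\Lambda_{2d}=2d^2+3d-2$ over $\R$; as $\setP$ has exactly this cardinality, it is a minimal generating set, and combining this with Corollary~\ref{cor:SliceLemmaGen} and Proposition~\ref{prop:MainSlice} produces the corresponding minimal generating set of $\R(\VV_{2d})^{\G}$.
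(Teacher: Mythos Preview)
Your proof is correct and follows essentially the same approach as the paper's own proof, which explicitly states that the argument is analogous to that of Theorem~\ref{thm:TernaryQuarticInvariants} and sketches the same three steps (check invariance via equivariant maps, substitute the $\alpha_{i,j}$ using the inverse Vandermonde, then reduce the remaining $\B$-invariant expression in $\Ct[1],\Ct[2],\Ct[3]$ to $p_{1,0},p_{2,0},p_{3,0}$). You are in fact more explicit than the paper at the one delicate point---justifying that the coefficients in $\R(\Ct[1],\Ct[2],\Ct[3])$ are themselves $\B$-invariant---by spelling out the algebraic-independence argument, which the paper leaves implicit.
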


\begin{proof}
  With Theorem~\ref{thm:SliceBasisHigher} established, the proof is analogous to the one for Theorem~\ref{thm:TernaryQuarticInvariants}: One checks that $\setP$ consists of $\B$-invariants (for example, by the use of equivariant maps as before, or directly from the definitions). In order to express a given rational invariant $\hat p \in \R(\Lambda_{2d})^{\B}$ as a rational combination in terms of $\setP$, we may first replace each occurrence of the variables $\alpha_{i,j}$ in $\hat p$ by a rational expression in terms of invariants in $\setP$ and $\Ct[1], \Ct[2], \Ct[3]$ by using that $M(\alpha)_{i,j}$ is the $(i,j)$-th entry of the matrix product
    \[\begin{pmatrix}
    1 & 1 & 1 \\
    \Ct[1]^2 & \Ct[2]^2 & \Ct[3]^2 \\
    \Ct[1]^4 & \Ct[2]^4 & \Ct[3]^4
    \end{pmatrix}^{-1} 
    \begin{pmatrix}
    p_{1,1} & \dots & p_{1,k-1} \\
    p_{2,1} & \dots & p_{2,k-1} \\
    p_{3,1} & \dots & p_{3,k-1}
    \end{pmatrix}.\]
  If $d \equiv 0 \pmod 3$, we also replace each occurrence of $\alpha_\infty$ by $p_\infty$. As in the proof of Theorem~\ref{thm:TernaryQuarticInvariants}, we express the remaining $\B$-invariant rational expression in $\Ct[1], \Ct[2], \Ct[3]$ in terms of $p_{1,0}, p_{2,0}, p_{3,0}$ -- note that $\Ct[1], \Ct[2], \Ct[3]$ play the exact same role here as in Theorem~\ref{thm:TernaryQuarticInvariants}.
\end{proof}

Explicitly, as in Section~\ref{ssec:QuarticInvariants}, we again have a routine for expressing rational $\B$-invariants on $\Lambda_{2d}$ in terms of $\setP$, given by the following rewrite rules:
\[\begin{pmatrix} \alpha_{1,j} \\ \alpha_{2,j} \\ \alpha_{3,j} \end{pmatrix}
      \longrightarrow 
           \frac{\delta^{1-\zeta(j)}}{\It[0] p_{2,0}^{\xi(j)}}
           \begin{pmatrix}
            \Ct[2]\Ct[3] & 0 & 0 \\ 0& \Ct[3]\Ct[1] & 0  \\  0 & 0 &\Ct[1]\Ct[2]  
           \end{pmatrix}^{\xi(j)}
           \begin{pmatrix} 
						(\Ct[3]^2-\Ct[2]^2)\Ct[2]^2\Ct[3]^2 & \Ct[2]^4-\Ct[3]^4 & \Ct[3]^2-\Ct[2]^2\\
						(\Ct[1]^2-\Ct[3]^2)\Ct[3]^2\Ct[1]^2 & \Ct[3]^4-\Ct[1]^4 & \Ct[1]^2-\Ct[3]^2\\
						(\Ct[2]^2-\Ct[1]^2)\Ct[1]^2\Ct[2]^2 & \Ct[1]^4-\Ct[2]^4 & \Ct[2]^2-\Ct[1]^2\\
					 \end{pmatrix}
					 \begin{pmatrix} p_{1,j} \\ p_{2,j} \\ p_{3,j}\end{pmatrix} \qquad \forall j \geq 1,
\]
  \[\alpha_\infty \longrightarrow p_\infty, \qquad \Ct[1] \longrightarrow  \frac{1}{p_{3,0}}(p_{1,0}\Ct[2]\Ct[3] -\Ct[2]^3\Ct[3]-\Ct[2]\Ct[3]^3),\]
  \[\Ct[2]^4 \longrightarrow p_{1,0}\Ct[2]^2-\Ct[2]^2\Ct[3]^2+p_{1,0}\Ct[3]^2 -\Ct[3]^4+\frac{1}{2}(p_{2,0}-p_{1,0}^2), \qquad \Ct[3]^6 \longrightarrow  \frac{1}{2}(p_{2,0}-p_{1,0}^2)\Ct[3]^2+p_{1,0}\Ct[3]^4 +p_{3,0}^2,\] 
where
  \[\delta := (\Ct[1]^2-\Ct[2]^2)(\Ct[2]^2-\Ct[3]^2)(\Ct[3]^2-\Ct[1]^2) \qquad \text{and}\]
  \[\It[0] := \delta^2 = \frac{1}{2}p_{2,0}^3-\frac{1}{4}p_{1,0}^6-27p_{3,0}^4+p_{1,0}^4p_{2,0} - \frac{5}{4}p_{1,0}^2p_{2,0}^2 - 9p_{3,0}^2p_{1,0}p_{2,0} + 5p_{3,0}^2p_{1,0}^3.\]

As in Section~\ref{ssec:QuarticInvariants}, we notice that the generating set 
$\setP \subset \R(\Lambda_{2d})^{\B}$ is a transcendence basis for 
$\R(\Lambda_{2d})^{\B}$ as a field extension over $\R$, since $|\setP| = \dim 
\Lambda_{2d}$. In particular, this generating set is of minimal cardinality. By 
Corollary~\ref{cor:SliceLemmaGen}, there uniquely exists a corresponding 
generating set $\setPT$ of rational $\G$-invariants on $\VV_{2d}$ with 
$|\setPT| = |\setP| = \dim \Lambda_{2d}$, attaining the lower bound from 
Theorem~\ref{thm:NumberInvariants} because of $\dim \VV_{2d} = \Lambda_{2d} + 
\dim \G$.

\begin{cor} \label{cor:HigherFullInvariants}
  There exists a generating set of $\dim \VV_{2d} - \dim \G = 2d^2+3d-2$ rational invariants $\setPT \subset \R(\VV_{2d})^{\G}$ uniquely determined by their restrictions to  $\Lambda_{2d} \subset \VV_{2d}$ that are given by the invariants $\setP \subset \R[\Lambda_{2d}]^{\B}$ specified in Theorem~\ref{thm:InvariantsHigherDeg}.
\end{cor}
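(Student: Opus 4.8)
The plan is to obtain this corollary as a direct consequence of the Slice Lemma machinery combined with the explicit generating set already constructed. The first step is to invoke Proposition~\ref{prop:MainSlice}, which tells us that $\Lambda_{2d} \subset \VV_{2d}$ is a slice for the action of $\G$ on $\VV_{2d}$ with stabilizer $\B$, so that restriction of rational functions furnishes a field isomorphism $\varrho \colon \R(\VV_{2d})^{\G} \xrightarrow{\cong} \R(\Lambda_{2d})^{\B}$. The second step is to recall from Theorem~\ref{thm:InvariantsHigherDeg} that $\setP \subset \R[\Lambda_{2d}]^{\B}$ is a (minimal) generating set of rational $\B$-invariants on $\Lambda_{2d}$, of cardinality $|\setP| = \dim \Lambda_{2d} = 2d^2+3d-2$.

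Next I would apply Corollary~\ref{cor:SliceLemmaGen} with $\Ggen = \G$, $\VV = \VV_{2d}$, $\Bgen = \B$, $\Lambda = \Lambda_{2d}$: since $\setP$ generates $\R(\Lambda_{2d})^{\B}$, the set $\setPT := \{\varrho^{-1}(p) \mid p \in \setP\}$ generates $\R(\VV_{2d})^{\G}$. Each element $\varrho^{-1}(p)$ is, by definition of $\varrho$, the \emph{unique} rational $\G$-invariant on $\VV_{2d}$ whose restriction to $\Lambda_{2d}$ equals $p$ — uniqueness follows simply from the injectivity of the field isomorphism $\varrho$. This establishes the existence-and-uniqueness assertion of the corollary with $|\setPT| = |\setP|$.

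Finally, for the cardinality count I would verify the arithmetic: $\dim \VV_{2d} = \binom{2d+2}{2} = (d+1)(2d+1) = 2d^2+3d+1$ and $\dim \G = 3$, so $\dim \VV_{2d} - \dim \G = 2d^2+3d-2 = \dim \Lambda_{2d} = |\setP| = |\setPT|$. Hence $\setPT$ attains the lower bound of Theorem~\ref{thm:NumberInvariants}, so it is in particular a generating set of minimal cardinality (and its elements are algebraically independent, being in number equal to the transcendence degree of $\R(\VV_{2d})^{\G}$ over $\R$).

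There is no real obstacle: everything has been prepared in the cited results, and the only thing to check beyond quoting them is the elementary binomial computation showing that $\dim \Lambda_{2d}$ matches $\dim \VV_{2d} - \dim \G$. If anything deserves a sentence of care, it is spelling out that the \emph{uniqueness} clause is exactly the statement that $\varrho$ is a bijection, so that $\varrho^{-1}$ is well-defined on each $p \in \setP$.
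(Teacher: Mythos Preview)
Your proposal is correct and follows essentially the same route as the paper: the paper's justification (given in the paragraph immediately preceding the corollary) likewise invokes Corollary~\ref{cor:SliceLemmaGen} together with Theorem~\ref{thm:InvariantsHigherDeg} and Proposition~\ref{prop:MainSlice}, notes that $|\setP| = \dim \Lambda_{2d}$, and observes that $\dim \VV_{2d} = \dim \Lambda_{2d} + \dim \G$ so that the lower bound of Theorem~\ref{thm:NumberInvariants} is attained. Your explicit verification of the binomial arithmetic and your remark that uniqueness is exactly the injectivity of $\varrho$ are welcome clarifications, but add nothing new to the argument.
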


\section{Solving the main algorithmic challenges} \label{sec:AlgorithmicSolutions}

In Sections~\ref{sec:Quartics} and~\ref{sec:B3Harmonics}, in Corollaries~\ref{cor:QuarticFullInvariants} and \ref{cor:HigherFullInvariants},
we identified, for any $d \geq 2$, a finite set of rational invariants $\setPT \subset \R(\VV_{2d})^{\G}$ generating $\R(\VV_{2d})^{\G}$. 
We denote the number of these rational invariants as $m := |\setPT|$. 
Recall that $m$ is minimal, that is 
  \[m = \dim \VV_{2d} - \dim \G = \binom{2d+2}{2} - 3.\]

Instead of being given by a closed form formula,  each invariant $\Vpt \in 
\setPT$ is uniquely determined, in virtue of Theorem~\ref{thm:SliceLemma}, by 
the restriction $\Vp = \restr{\Vpt}{\Lambda_{2d}} \in \R(\Lambda_{2d})^{\B}$ of 
the rational map $\Vpt \colon \VV_{2d} \dashrightarrow \R$ to the subspace 
$\Lambda_{2d} \subset \VV_{2d}$. 
We shall in this section examine the practical implications of Theorem~\ref{thm:SliceLemma} in addressing  the algorithmic challenges  formulated in Section~\ref{ssec:AlgoFormulations}.
For these, we provide algorithms that rely only on the explicit knowledge of 
the restricted invariants $\setP = \{\restr{\Vpt}{\Lambda_{2d}} \mid \Vpt \in 
\setPT\}$.

\subsection{The Evaluation Problem} \label{ssec:EvaluationProblem}
 
Having identified a finite generating set of rational invariants $\setPT \subset \R(\VV_{2d})^{\G}$, the most basic algorithmic question is: \emph{How can we evaluate each of the generating invariants $\Vpt \in \setPT$ at a given point $v \in \VV_{2d}$?}

Let $\Vpt \in \setPT$ and consider the restriction $\Vp := \restr{\Vpt}{\Lambda_{2d}} \in \setP$ of $\Vpt \colon \VV_{2d} \dashrightarrow \R$ to the subspace $\Lambda_{2d} \subset \VV_{2d}$. By construction of the invariants, we know explicit expressions for $\Vp \in \R(\Lambda_{2d})^{\B}$, which allow us to evaluate $\Vpt \in \setPT$ at any point of the slice $u \in \Lambda_{2d}$ by computing $\Vpt(u) = \Vp(u) \in \R$.

When we want to evaluate $\Vpt(v)$ for arbitrary $v \in \VV_{2d}$, we observe that $\Vpt(gv) = \Vpt(v)$ for all $g \in \G$, since $\Vpt$ is an invariant for the action of $\G$. By Proposition~\ref{prop:MainSlice}, we know that for general $v \in \VV_{2d}$ there exists an orthogonal transformation $g \in \G$ such that $gv \in \Lambda_{2d}$. For such $g \in \G$, we may then compute
  \[\Vpt(v) = \Vpt(gv) = \Vp(gv).\]

Recalling the definition of the slice $\Lambda_{2d}$, this idea leads to 
Algorithm~\ref{algo:Evaluation}. In the following, we comment on its validity 
and its computational realization. 

\begin{algorithm}[h]
	  \DontPrintSemicolon
		\SetKwInOut{Input}{Input}\SetKwInOut{Output}{Output}
		\Input{$v \in \VV_{2d} = \R[\xyz]_{2d}$, $d \geq 2$}
		\Output{$\Vpt(v) \in \R$ for each $\Vpt \in \setPT$} %
		\BlankLine
		Compute $v' \in \VV_2$ in the Harmonic Decomposition
		  \[v = h_{2d} + \VQ h_{2d-2} + \VQ^2 h_{2d-4} + \ldots + \VQ^{d-2} h_4 + \VQ^{d-1} v', \; \hbox{ where } h_{2k} \in \H_{2k}.\]
		 \label{algoStep:HarmDecomp}\;
		Let $A \in \R^{3 \times 3}$ be the symmetric matrix corresponding to the quadratic form $v' \in \VV_2$ as in \eqref{eq:GramianMatrix}. \label{algoStep:Gramian}\;
		Determine $g \in \G$ such that the matrix product $g A g^T$ is diagonal:
		  \[g A g^T = \diag(\lambda_1, \lambda_2, \lambda_3) \quad \text{ for some } \lambda_1, \lambda_2, \lambda_3 \in \R.\] \label{algoStep:Diag} \vspace{-1em}\;
		If $\lambda_i = \lambda_j$ for some $i \neq j$, output \emph{``undefined at $v$''} and stop. \label{algoStep:Undef}\;
		Compute $u := gv \in \Lambda_{2d}$. \label{algoStep:ApplyG}\;
		Compute and output the values $\Vp(u)$ for each $\Vp \in \setP$. \label{algoStep:EvRestr}\;
		\caption{Evaluation Algorithm} \label{algo:Evaluation}
\end{algorithm}

\subsubsection*{Validity of Algorithm~\ref{algo:Evaluation}}

First, we describe how the formulation in Algorithm~\ref{algo:Evaluation} corresponds to the idea described above of evaluating $\Vpt(v)$ as $\Vp(gv)$ where $g\in \G$ is some orthogonal transformation such that $gv \in \Lambda_{2d}$. We recall that if
  \[v = h_{2d} + \VQ h_{2d-2} + \VQ^2 h_{2d-4} + \ldots + \VQ^{d-2} h_4 + \VQ^{d-1} v'\]
is the Harmonic Decomposition of $v$ computed in Step~\ref{algoStep:HarmDecomp}, then the Harmonic Decomposition of $gv$ is given as
  \[gv = gh_{2d} + \VQ(gh_{2d-2}) + \VQ^2(gh_{2d-4}) + \ldots + \VQ^{d-2} (gh_4) + \VQ^{d-1}(gv')\]
by Proposition~\ref{prop:InvariantSubspaces}. Then the definition of $\Lambda_{2d}$ gives: $gv$ is contained in $\Lambda_{2d}$ if and only if $gv' \in \VV_2$ lies in $\Lambda_2$, i.e.\ the symmetric matrix associated to the quadratic form $gv'$ is diagonal. This matrix is $gAg^T$ (where $A$ is the matrix of the quadratic form $v' \in \VV_2$).

To validate Step~\ref{algoStep:Undef} in Algorithm~\ref{algo:Evaluation}, we recall that even when the restricted invariant $\Vp \in \R(\Lambda_{2d})^{\B}$ is known to be a polynomial invariant, the corresponding invariant $\Vpt \in \R(\VV_{2d})^{\G}$ is typically only a rational function, defined at a general point only. In fact, going back to the proof of Proposition~\ref{prop:MainSlice}, we see that the rational function $\Vpt \colon \VV_{2d} \dashrightarrow \R$ which we obtain from $\Vp \in \R(\Lambda_{2d})^{\B}$ is only defined at those points $v \in \VV_{2d}$ whose quadratic part $v' \in \VV_2$ (in the Harmonic Decomposition) does not have repeated eigenvalues in the matrix representation \eqref{eq:GramianMatrix}. This precisely corresponds to Step~\ref{algoStep:Undef}. We observe that leaving out Step~\ref{algoStep:Undef}, Algorithm~\ref{algo:Evaluation} would still output a (meaningless) value for the non-defined cases, but that value would depend on the choice of $g$ in Step~\ref{algoStep:Diag}.

\subsubsection*{Computational realization of Algorithm~\ref{algo:Evaluation}} First, we discuss the implementation of Step~\ref{algoStep:HarmDecomp}:
In order to compute $v' \in \VV_2$ in the Harmonic Decomposition of $v \in \VV_{2d}$, 
we can use the explicit projection operators on $\H_{2d},\ldots,\H_{4}$ given in \cite[Theorem 5.21]{ABR01}.
Alternatively we express the given element $v = \sum_{ijk} a_{ijk} \xx^i \yy^j \zz^k$ in terms of a basis of $\VV_{2d}$ that reflects the Harmonic Decomposition
  \[\VV_{2d} = \H_{2d} \oplus \ldots \oplus \VQ^{d-2}\H_{4} \oplus \VQ^{d-1}\VV_2.\]
  
Say, $\hBasis{2k} \subset \H_{2k}$ is a basis of the space $\H_{2k}$ of harmonic polynomials of degree $2k$ (for $2 \leq k \leq d$). Then
  \[\vBasis{2d} := \bigcup_{k=2}^d \VQ^{d-k}\hBasis{2k}\; \cup \; \{\VQ^{d-1}x^2, \VQ^{d-1}y^2, \VQ^{d-1}z^2, \VQ^{d-1}xy, \VQ^{d-1}yz, \VQ^{d-1}zx\}\]
is a basis of $\VV_{2d}$ and we can uniquely write 
  \begingroup \small
  \begin{equation} \label{eq:exprInBasis}
    v = \sum_{k=2}^d \sum_{\hbEl \in \hBasis{2k}} \alpha_{\hbEl} \VQ^{d-k}\hbEl \; + \; \beta_{200} \VQ^{d-1}x^2 + \beta_{020} \VQ^{d-1}y^2 + \beta_{002} \VQ^{d-1}z^2 + \beta_{110} \VQ^{d-1}xy + \beta_{011} \VQ^{d-1}yz + \beta_{101} \VQ^{d-1}zx
  \end{equation}
  \endgroup
with $\alpha_{\hbEl}, \beta_{ijk} \in \R$. Then the Harmonic Decomposition of $v$ as in Step~\ref{algoStep:HarmDecomp} is given by
  \[h_{2k} = \sum_{\hbEl \in \hBasis{2k}} \alpha_{\hbEl} \hbEl \in \H_{2k}, \qquad v' = \beta_{200} x^2 + \beta_{020} y^2 + \beta_{002} z^2 + \beta_{110} xy + \beta_{011} yz + \beta_{101} zx \in \VV_2.\]

For $\hBasis{2k} \subset \H_{2k}$, we might as well use the elements $u_{i,j}^{(2k)}$ explicitly constructed in Theorem~\ref{thm:HarmonicBasisHigher} (by selecting a linearly independent subset of these in the cases $k \not\equiv 2 \pmod 3$), though it is not essential at this stage.
For chosen bases $\hBasis{2k} \subset \H_{2k}$, computing the expression \eqref{eq:exprInBasis} for a given $v = \sum_{ijk} a_{ijk} \xx^i \yy^j \zz^k$ corresponds to converting from the monomial basis $\{x^i y^j z^k \mid i+j+k = 2d\} \subset \VV_{2d}$ to the basis $\vBasis{2d} \subset \VV_{2d}$. The corresponding base-change matrix is the inverse of the matrix whose entries are given by the coefficients of the expressions $\bEl \in \vBasis{2d}$ in the variables $\xyz$. This base-change matrix may be precomputed for fixed $d$. If we use bases $\hBasis{2k} \subset \H_{2k}$ built from elements $u_{i,j}^{(2k)} \in \H_{2k}$ as constructed in Theorem~\ref{thm:HarmonicBasisHigher}, the base change transformation could in fact be described explicitly by a careful analysis of the proofs in Section~\ref{ssec:B3HarmonicsConstruction}.

Step~\ref{algoStep:Diag} is the problem of computing the \emph{eigendecomposition} of the symmetric matrix $A$ and is equivalent to finding the eigenvalues and eigenvectors of $A$. \emph{Exact (symbolic)} algorithmic solutions to this problem would involve introducing (nested) square roots and cubic roots, but such expressions are typically very undesirable from a practical standpoint. However, there are well-established numerical methods for computing the eigendecomposition of a symmetric matrix -- with the additional benefit that the numerical stability of these methods is well-studied \cite{Kre05}, \cite[Chapter 8]{GVL13}. 

Realizing Step~\ref{algoStep:ApplyG} is straightforward by the definition of the action of $g = (g_{ij}) \in \G$ on $\VV_{2d}$: Applying the substitutions
  \[
    \xx \mapsto g_{11} \xx + g_{21} \yy + g_{31} \zz, \qquad
    \yy \mapsto g_{12} \xx + g_{22} \yy + g_{32} \zz, \qquad 
    \zz \mapsto g_{13} \xx + g_{23} \yy + g_{33} \zz,
  \]
to $v \in \VV_{2d} = \R[\xyz]_{2d}$ and expanding the resulting expression gives $u = gv \in \Lambda_{2d}$. This expansion may also be precomputed symbolically for fixed $d$ such that it is only necessary to evaluate with the entries $g_{ij}$.

Finally, for Step~\ref{algoStep:EvRestr}, we want to evaluate at $u \in \Lambda_{2d}$ the expressions for the rational invariants $\setP \subset \Lambda_{2d}$ described in Theorem~\ref{thm:InvariantsHigherDeg}. For that, we express $u = \sum_{i=1}^3 \Ct[i] w_{i,0} + \sum_{i=1}^3 \sum_{j=1}^k \alpha_{i,j} w_{i,j} \; [+ \; \alpha_\infty w_\infty]$ , where $\{w_{i,j} \mid 1 \leq i \leq 3, 0 \leq j \leq k\} \; [\cup \; \{w_\infty\}]$ is the basis of $\Lambda_{2d}$ from Theorem~\ref{thm:SliceBasisHigher}. This can again be realized by an appropriate linear base change transformation. We then evaluate $\Vp(u)$ for each $\Vp \in \setP$ with the formulas for the invariants given in Theorem~\ref{thm:InvariantsHigherDeg}.

  \subsection{The Rewriting Problem}

In this section, we discuss how to address the Rewriting Problem specified in Section~\ref{ssec:AlgoFormulations}. 

For notational simplification, we now assume that the rational invariants in 
$\setPT \in \R(\VV_{2d})^{\G}$ are indexed as $\setPT = \{\Vpt[1], \ldots, 
\Vpt[m]\}$ and their restrictions to $\Lambda_{2d}$ are $\setP = \{\Vp[1], 
\ldots, \Vp[m]\}$. Since $\Vpt[1], \ldots, \Vpt[m] \in \R(\VV_{2d})^{\G}$ form 
a generating set of rational invariants, it is possible to express any other 
rational invariant $\Vpt[0] \in \R(\VV_{2d})^{\G}$ as a rational combination of 
$\Vpt[1], \ldots, \Vpt[m]$, i.e.\ there exists a rational expression in $m$ 
variables 
  $r(T_1, \ldots, T_m) \in \R(T_1, \ldots, T_m)$ such that
  \begin{equation} \label{eq:rewriteExpr}
    \Vpt[0] = r(\Vpt[1], \ldots, \Vpt[m]).
  \end{equation}

Determining such $r(T_1, \ldots, T_m) \in \R(T_1, \ldots, T_m)$ for any given $\Vpt[0] \in \R(\VV_{2d})^{\G}$
is a  problem that can be reduced to the corresponding problem for $\B$-invariants on the subspace $\Lambda_{2d} \subset \VV_{2d}$.

Note that restricting the equality \eqref{eq:rewriteExpr} to the subspace $\Lambda_{2d} \subset V_{2d}$ gives:
  \[\restr{\Vpt[0]}{\Lambda_{2d}} = r(\Vp[1], \ldots, \Vp[m]).\]
Hence, to determine the rational expression $r$, it is sufficient to rewrite $\Vp[0] := \restr{\Vpt[0]}{\Lambda_{2d}} \in \R(\Lambda_{2d})^{\B}$ in terms of the restricted generating rational invariants $\Vp[1], \ldots, \Vp[m] \in \R(\Lambda_{2d})^{\B}$. This leads to Algorithm~\ref{algo:Rewrite}.

\begin{algorithm}[h]
	  \DontPrintSemicolon
		\SetKwInOut{Input}{Input}\SetKwInOut{Output}{Output}
		\Input{$\Vpt[0] \in \R(\VV_{2d})^{\G}$ for some $d \geq 2$} 
		\Output{A rational expression $r \in \R(T_1, \ldots, T_m)$ such that $\Vpt[0] = r(\Vpt[1], \ldots, \Vpt[m])$}
		\BlankLine
    Let $\Vp[0] := \restr{\Vpt[0]}{\Lambda_{2d}} \in \R(\Lambda_{2d})^{\B}$.\;
    Determine $r \in \R(T_1, \ldots, T_m)$ such that $\Vp[0] = r(\Vp[1], \ldots, \Vp[m])$. \label{algoStep:rewriteOnSlice}\;
		Output $r$.\;
		\caption{Rewriting Algorithm} \label{algo:Rewrite}
\end{algorithm}

Step~\ref{algoStep:rewriteOnSlice} is the problem of rewriting a rational $\B$-invariant on $\Lambda_{2d}$ in terms of the generating rational invariants $\Vp[1], \ldots, \Vp[m] \in \R(\Lambda_{2d})^{\B}$. The realization of this is addressed in the proofs of Theorems~\ref{thm:TernaryQuarticInvariants} and \ref{thm:InvariantsHigherDeg} and made explicit by the “rewrite rules” given in Sections~\ref{ssec:QuarticInvariants} and \ref{ssec:HigherInvariants}.

  \subsection{The Reconstruction Problem}

By construction,
  \[\setPT = \{\Vpt[i,j] \mid 1 \leq i \leq 3, 0 \leq j < k\} \ [\cup \{\Vpt[\infty]\}]\]
is the set of rational $\G$-invariants on $\VV_{2d}$ such that 
  \[\restr{\Vpt[i,j]}{\Lambda_{2d}} = \Vp[i,j] \quad \forall i,j \qquad [\text{and } \qquad \restr{\Vpt[\infty]}{\Lambda_{2d}} = \Vp[\infty]]\]
are the $\B$-invariants on $\Lambda_{2d}$ from Theorem~\ref{thm:InvariantsHigherDeg}.

In Section~\ref{ssec:EvaluationProblem}, we saw how to numerically evaluate $\Vpt[i,j](v) \in \R$ [and $\Vpt[\infty](v) \in \R$] at a general point $v \in \VV_{2d}$. Now we consider the inverse algorithmic problem: Given real values $\mu_{i,j} \in \R$ for $1 \leq i \leq 3, 0 \leq j < k$ [and $\mu_\infty \in \R$], we want to compute $v \in \VV_{2d}$ such that $\Vpt[i,j](v) = \mu_{i,j}$ for all $i,j$ [and $\Vpt[\infty] = \mu_{\infty}$]. This may not be possible for all $m$-tuples $\mu = (\mu_{i,j} \; [, \mu_\infty] \; \mid 1 \leq i \leq 3, 0 \leq j < k) \in \R^m$, so we are also interested in the following question:

\emph{For which $m$-tuples $\mu = (\mu_{i,j} \; [, \mu_\infty] \; \mid 1 \leq i \leq 3, 0 \leq j < k) \in \R^m$ does there exist  $v \in \VV_{2d}$ such that $\Vpt[i,j](v) = \mu_{i,j}$ for all $i,j$ [and $\Vpt[\infty]=\mu_\infty$]?}

Note that the reconstructed $v \in \VV_{2d}$ is not uniquely determined, since for any orthogonally equivalent $w \in \VV_{2d}$ (i.e.\ $w = gv$ for some $g \in \G$), the invariants $\Vpt \in \setPT$ take the same values for $v$ and $w$. Theorem~\ref{thm:rationalOrbitSeparation} implies that generically, the reconstructed $v$ is unique up to orthogonal transformations, i.e.\ any different reconstructed $w \in \VV_{2d}$ is orthogonally equivalent to $v$. 

By Proposition~\ref{prop:MainSlice}, for a general point $v \in \VV_{2d}$ there 
exists 
$g \in \G$ such that $u := gv \in \Lambda_{2d}$. Then 
  \[\Vpt(v) = \Vpt(gv) = \Vp(u) \qquad \forall \Vpt \in \setPT, \ \Vp := \restr{\Vpt}{\Lambda_{2d}}.\]
In particular, we can always choose to reconstruct an element $v$ that lies in the subspace $\Lambda_{2d}$.

With respect to the basis $\{w_{i,j} \mid 1 \leq i \leq 3, 0 \leq j < k\} \ [\cup \{w_\infty\}]$ of $\Lambda_{2d}$ from Theorem~\ref{thm:SliceBasisHigher}, we therefore want to determine $\alpha_{i,j}, \Ct[i], \ [\alpha_\infty] \in \R$ such that
  \[v = \sum_{i=1}^3 \Ct[i] w_{i,0} + \sum_{i=1}^3 \sum_{j=1}^{k-1} \alpha_{i,j} w_{i,j} \; [+ \; \alpha_\infty w_\infty] \ \in \Lambda_{2d}\]
satisfies $\Vp[i,j](v) = \mu_{i,j} \ \forall i,j$ [and $\Vp[\infty](v) = \mu_{\infty}$]. With the explicit formulas for the invariants given in Theorem~\ref{thm:InvariantsHigherDeg}, this leads to the problem of solving the following system of polynomial equations in unknowns $\alpha_{i,j}, \Ct[i] \; [, \alpha_\infty] \in \R$:

\begin{equation} \label{eq:polSys}
   \begin{gathered}
            \Ct[1]^2+\Ct[2]^2+\Ct[3]^2 = \mu_{1,0}, \qquad
            \Ct[1] \Ct[2]\Ct[3] = \mu_{2,0}, \qquad
            \Ct[1]^4+\Ct[2]^4+\Ct[3]^4 = \mu_{3,0}, \qquad
            [\alpha_\infty = \mu_\infty], \\[0.5em]
            \begin{pmatrix}
            1 & 1 & 1 \\
            \Ct[1]^2 & \Ct[2]^2 & \Ct[3]^2 \\
            \Ct[1]^4 & \Ct[2]^4 & \Ct[3]^4
            \end{pmatrix} \cdot 
					  M(\alpha) = 
						\begin{pmatrix}
						\mu_{1,1} & \dots & \mu_{1,k-1} \\
            \mu_{2,1} & \dots & \mu_{2,k-1} \\
            \mu_{3,1} & \dots & \mu_{3,k-1}
            \end{pmatrix},
  \end{gathered}
\end{equation}
where the $(i,j)$-th entry of the matrix $M(\alpha)$ is $M(\alpha)_{i,j} = \alpha_{i,j} \Ct[i]^{\xi(j)} \left((\Ct[1]^2-\Ct[2]^2)(\Ct[2]^2-\Ct[3]^2)(\Ct[3]^2-\Ct[1]^2)\right)^{\zeta(j)}$.

The crucial part for the resolution of the polynomial system \eqref{eq:polSys} lies in solving the first three equation for $\Ct[1], \Ct[2], \Ct[3]$. Once values for $\Ct[1], \Ct[2], \Ct[3] \in \R$ are known, we are left with a system of \emph{linear} equations in the remaining variables. Therefore, the following observations are essential:

\begin{lemma} \label{lem:CubicSystem}
  Let $\mu_{1,0}, \mu_{2,0}, \mu_{3,0} \in \C$. If $\Ct[1], \Ct[2], \Ct[3] \in \C$ form a solution of the system
    \[\Ct[1]^2+\Ct[2]^2+\Ct[3]^2 = \mu_{1,0}, \qquad \Ct[1] \Ct[2]\Ct[3] = \mu_{2,0}, \qquad \Ct[1]^4+\Ct[2]^4+\Ct[3]^4 = \mu_{3,0},\]
 then the squares $\Ct[1]^2, \Ct[2]^2, \Ct[3]^2 \in \C$ are the zeroes (with multiplicities) of the cubic polynomial
    \[T^3 - \mu_{1,0}T^2 + \frac{\mu_{1,0}^2 - \mu_{3,0}}{2}T - \mu_{2,0}^2 \in \C[T].\]
\end{lemma}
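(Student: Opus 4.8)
The plan is to recognize $\Ct[1]^2,\Ct[2]^2,\Ct[3]^2$ as the roots of the monic cubic whose coefficients are (up to sign) the elementary symmetric polynomials $e_1,e_2,e_3$ in the three quantities $\Ct[1]^2,\Ct[2]^2,\Ct[3]^2$, and then to express those elementary symmetric polynomials in terms of $\mu_{1,0},\mu_{2,0},\mu_{3,0}$. By the standard fact that $T^3-e_1T^2+e_2T-e_3$ has precisely $\Ct[1]^2,\Ct[2]^2,\Ct[3]^2$ as its zeroes counted with multiplicity, it suffices to show
\[e_1=\mu_{1,0},\qquad e_2=\frac{\mu_{1,0}^2-\mu_{3,0}}{2},\qquad e_3=\mu_{2,0}^2.\]

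First I would handle $e_1$: by definition $e_1=\Ct[1]^2+\Ct[2]^2+\Ct[3]^2$, which is exactly the first given equation, so $e_1=\mu_{1,0}$ immediately. Next, for $e_3=\Ct[1]^2\Ct[2]^2\Ct[3]^2=(\Ct[1]\Ct[2]\Ct[3])^2$, the second equation gives $\Ct[1]\Ct[2]\Ct[3]=\mu_{2,0}$, hence $e_3=\mu_{2,0}^2$. The only computation with any content is $e_2$: using the Newton-type identity
\[(\Ct[1]^2+\Ct[2]^2+\Ct[3]^2)^2 = (\Ct[1]^4+\Ct[2]^4+\Ct[3]^4) + 2(\Ct[1]^2\Ct[2]^2+\Ct[2]^2\Ct[3]^2+\Ct[3]^2\Ct[1]^2),\]
i.e.\ $\mu_{1,0}^2 = \mu_{3,0} + 2e_2$, which rearranges to $e_2=\frac{\mu_{1,0}^2-\mu_{3,0}}{2}$ after substituting the first and third given equations. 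Substituting these three expressions for $e_1,e_2,e_3$ into $T^3-e_1T^2+e_2T-e_3$ yields the claimed polynomial.

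There is essentially no obstacle here — the statement is a direct consequence of Vieta's formulas together with one elementary symmetric-function identity; the only mild point to note is that multiplicities are respected because Vieta's formulas for a monic polynomial built from prescribed roots automatically account for repeated roots, and the argument nowhere requires the $\Ct[i]^2$ to be distinct. I would simply write out the three identities above and conclude.
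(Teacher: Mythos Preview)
Your proof is correct and follows essentially the same approach as the paper: expand $(T-\Ct[1]^2)(T-\Ct[2]^2)(T-\Ct[3]^2)$ via Vieta's formulas and identify the elementary symmetric polynomials in $\Ct[1]^2,\Ct[2]^2,\Ct[3]^2$ with $\mu_{1,0}$, $\frac{\mu_{1,0}^2-\mu_{3,0}}{2}$, and $\mu_{2,0}^2$ respectively. The paper's proof is slightly more compressed (it writes the expansion and substitution in one line without naming the Newton-type identity), but the content is identical.
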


\begin{proof}
  Let $\Ct[1], \Ct[2], \Ct[3] \in \C$ be solution of the system and let $f \in \C[T]$ be the polynomial whose zeroes (with multiplicities) are $\Ct[1]^2, \Ct[2]^2, \Ct[3]^2$. Then
	\begin{align*}
	  f(T) &= (T-\Ct[1]^2)(T-\Ct[2]^2)(T-\Ct[3]^2) = T^3 - (\Ct[1]^2+\Ct[2]^2+\Ct[3]^2) T^2 + (\Ct[1]^2 \Ct[2]^2+\Ct[2]^2\Ct[3]^2+\Ct[3]^2 \Ct[1]^2) T - \Ct[1]^2 \Ct[2]^2 \Ct[3]^2 \\
				 &= T^3 - \mu_{1,0}T^2 + \frac{\mu_{1,0}^2 - \mu_{3,0}}{2}T - \mu_{2,0}^2.
  \end{align*}
\end{proof}

The following classical fact then characterizes when the solution in Lemma~\ref{lem:CubicSystem} has real solutions:
\begin{lemma}
  A cubic polynomial $f(T) = T^3-aT^2+bT-c \in \R[T]$ has three distinct positive real solutions if and only if
	\[a,b,c > 0 \quad \text{ and } \quad a^2 b^2 - 4 b^3 - 4a^3 c - 27 c^2 + 18abc > 0.\]
\end{lemma}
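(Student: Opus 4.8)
The plan is to reduce the statement to the standard discriminant criterion for a cubic to have three distinct real roots, combined with a sign analysis that pins those roots down to the positive reals. First I would recall that the real cubic $f(T) = T^3 - aT^2 + bT - c$ has three distinct real roots if and only if its discriminant is strictly positive; expanding the discriminant of $f$ in terms of the coefficients $a,b,c$ gives exactly $\Delta(f) = a^2b^2 - 4b^3 - 4a^3c - 27c^2 + 18abc$, so the condition $\Delta(f) > 0$ is equivalent to $f$ having three distinct real roots $r_1, r_2, r_3$. This is the classical fact I would invoke (or derive quickly from $\Delta(f) = \prod_{i<j}(r_i - r_j)^2$ together with the relation between the discriminant and the resultant of $f$ and $f'$).

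The substantive part is then to show that, \emph{assuming} $f$ already has three distinct real roots, those roots are all positive if and only if $a, b, c > 0$. For the forward direction, if $r_1, r_2, r_3 > 0$ then by Vieta's formulas $a = r_1 + r_2 + r_3 > 0$, $b = r_1r_2 + r_2r_3 + r_3r_1 > 0$ and $c = r_1 r_2 r_3 > 0$, which is immediate. For the converse I would argue that if $a,b,c > 0$ then no root can be negative or zero: evaluating $f$ at any $T \le 0$ gives $f(T) = T^3 - aT^2 + bT - c$, where each of the four terms is $\le 0$ (using $T^3 \le 0$, $-aT^2 \le 0$, $bT \le 0$, $-c < 0$), so $f(T) < 0$ for all $T \le 0$; hence every real root is strictly positive. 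Combining this with the discriminant condition from the first paragraph yields both directions of the stated equivalence.

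The one point requiring a little care — and the step I expect to be the main (minor) obstacle — is making sure the discriminant condition and the sign conditions are used in the right logical order: $a,b,c > 0$ alone does not force three real roots (it forces only that any real root is positive), and $\Delta(f) > 0$ alone does not force positivity; it is the conjunction that is equivalent to ``three distinct positive real roots.'' So the clean way to present it is: (i) $\Delta(f) > 0 \Leftrightarrow$ three distinct real roots; (ii) given three distinct real roots, all positive $\Leftrightarrow a,b,c > 0$ (forward by Vieta, backward by the sign-of-$f$-on-$(-\infty,0]$ argument); then (iii) conjoin. I would also double-check the discriminant expansion against a known normalization (e.g.\ by specializing to $f(T) = (T-1)(T-2)(T-3)$, where $a=6, b=11, c=6$ and the claimed expression should come out positive) to be certain no sign or coefficient is off. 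No deeper ideas are needed; the whole argument is elementary once the discriminant formula is in hand.
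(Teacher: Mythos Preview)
Your proof is correct and follows essentially the same approach as the paper: the discriminant condition for three distinct real roots is identical, and for positivity the paper invokes Descartes' rule of signs while you give the direct sign argument (evaluating $f$ on $(-\infty,0]$) together with Vieta, which is precisely the content of Descartes' rule in this zero-sign-change case.
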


\begin{proof}
  The expression $a^2 b^2 - 4 b^3 - 4a^3 c - 27 c^2 + 18abc$ is the discriminant of the cubic polynomial $f$, which is positive if and only if $f$ has three distinct real solutions. Descartes' rule of signs implies that $f$ has no negative solutions if and only if the signs of the coefficients of $f$ alternate, i.e.\ $a,b,c > 0$.
\end{proof}

Combining these two results, we obtain Algorithm~\ref{algo:ReconstructTernary}.
\begin{algorithm}[h]
	  \DontPrintSemicolon
		\SetKwInOut{Input}{Input}\SetKwInOut{Output}{Output}
		\Input{$\mu_{i,j} \in \R$ for $1 \leq i \leq 3, 0 \leq j < k$ [plus $\mu_\infty \in \R$]}
		\Output{A ternary form $v \in \VV_{2d}$ such that $\Vpt[i,j](v)=\mu_{i,j}$ for all $i,j$ [and $\Vpt[\infty](v) = \mu_\infty$], \emph{if possible}}
		\BlankLine
		\If{$d$ is a multiple of 3}
		{$\alpha_\infty := \mu_\infty$\;}
		$a := \mu_{1,0}$, $b := \frac{\mu_{1,0}^2 - \mu_{3,0}}{2}$, $c := \mu_{2,0}^2$\;
		\uIf{$a,b,c > 0$ and $a^2 b^2 - 4 b^3 - 4a^3 c - 27 c^2 + 18abc > 0$}
		{
		  Compute the (distinct) roots $r_1, r_2, r_3 > 0$ of the polynomial $T^3-aT^2+bT-c \in \R[T]$. \label{algoStep:findRoots}\;
			$\Ct[1] := \pm \sqrt{r_1}$, $\Ct[2] := \pm \sqrt{r_2}$, $\Ct[3] := \pm \sqrt{r_3}$, where we choose signs such that $\Ct[1] \Ct[2] \Ct[3] = \mu_{2,0}$.\;
			Calculate the matrix product
			  $M := \begin{pmatrix}
	         1 & 1 & 1 \\
					 r_1 & r_2 & r_3 \\
					 r_1^2 & r_2^2 & r_3^2
	       \end{pmatrix}^{-1}
				\begin{pmatrix}
	    \mu_{1,1} & \dots & \mu_{1,k-1}\\
			\mu_{2,1} & \dots & \mu_{2,k-1} \\
			\mu_{3,1} & \dots & \mu_{3,k-1}
	      \end{pmatrix}.$\;
		 \For{$1 \leq i \leq 3, 1 \leq j < k$}
		   {$\alpha_{i,j} := \frac{M_{i,j}}{\Ct[i]^{\xi(j)} \left((r_1-r_2)(r_2-r_3)(r_3-r_1)\right)^{\zeta(j)}}$.\;}
			Output the ternary form $v = \sum_{i=1}^3 \Ct[i] w_{i,0} + \sum_{i=1}^3 \sum_{j=1}^k \alpha_{i,j} w_{i,j} \; [+ \; \alpha_\infty w_\infty] \ \in \Lambda_{2d}$.\;
		}
		\Else{Output \emph{``The values $\mu_{i,j}$ allow no unambiguous reconstruction in $\VV_{2d}$''}. \label{algoStep:UndefTernary}\;}
		\caption{Reconstruction Algorithm.} \label{algo:ReconstructTernary}
\end{algorithm}

For Step~\ref{algoStep:UndefTernary} in 
Algorithm~\ref{algo:ReconstructTernary}, it should be observed that an 
unambiguous reconstruction requires $\Ct[1]^2, \Ct[2]^2, \Ct[3]^2$ to be 
distinct and non-zero. The validity of the algorithm follows from the 
discussion above. 
 
\bibliographystyle{alpha}

\newcommand{\etalchar}[1]{$^{#1}$}

\end{document}